\pgfplotsset{compat=1.11}
\theoremstyle{plain}
\newtheorem{theorem}{Theorem}[section]
\newtheorem{corollary}[theorem]{Corollary}
\newtheorem{proposition}[theorem]{Proposition}
\newtheorem{lemma}[theorem]{Lemma}
\theoremstyle{definition}
\newtheorem{definition}[theorem]{Definition}
\newtheorem{remark}[theorem]{Remark}
\newtheorem{example}[theorem]{Example}
\newtheorem*{theorem*}{Theorem}
\newtheorem*{definition*}{Definition}
\newtheorem*{corollary*}{Corollary}
\newtheorem*{remark*}{Remark}
\newtheorem*{thm*}{Theorem}
\newtheorem*{conjecture*}{Conjecture}
\newcommand{\definedas}{\mathrel{\raise.095ex\hbox{\rm :}\mkern-5.2mu=}}
\newcommand{\asdefined}{\mathrel{=\mkern-5.2mu\raise.095ex\hbox{\rm :}}}
\newcommand{\R}{\mathbb{R}}
\newcommand{\diver}{\operatorname{div}}
\newcommand{\tr}{\operatorname{tr}}
\newcounter{flabelcounter}
\title[]{Rigidity and Positivity of Hawking Quasi-Local Energy on Area-Constrained Critical Surfaces}
\author[Pe\~nuela Diaz]{Alejandro Pe\~nuela Diaz}
\address{ University of Potsdam, 14476 Potsdam, Germany}
\email{alejandro.penuela.diaz@uni-potsdam.de}
\begin{document}
\begin{abstract}
 A key test for any quasi-local energy in general relativity is that it be nonnegative and satisfy a rigidity property; if it vanishes, the region enclosed is flat.  We show that the Hawking energy, also known as the Hawking mass, satisfies these properties under the dominant energy condition when evaluated on its natural area-constrained critical surfaces within a spacelike hypersurface (initial data set).  In the time-symmetric case, these critical surfaces coincide with area-constrained Willmore surfaces, and we obtain positivity and rigidity theorems for the Hawking energy on such surfaces, including charged and   cosmological constant (hyperbolic and spherical) variants as well as higher-dimensional analogues. In the fully dynamical (non-time-symmetric) case, we establish the first nonnegativity and rigidity theorems for the Hawking energy in this general setting. These results confirm the Hawking energy's consistency with basic physical principles and address several longstanding ambiguities and criticisms.
\end{abstract}
\maketitle

\section{Introduction and Results}
One of the longstanding challenges in classical general relativity is the search for a robust quasi-local energy definition. That is to assign to each finite region of spacetime a physically meaningful notion of energy or mass. While the ADM and Bondi masses capture total energy for isolated systems at spatial or null infinity, there is no unique ``quasi-local’’ analogue measuring the energy contained inside an arbitrary closed 2-surface. Over the decades, many candidates have been proposed, each with its own advantages and limitations (see \cite{Living} for a comprehensive review).  To be considered viable, these definitions must satisfy certain physical conditions. In this paper, we will focus on the following two fundamental conditions under the dominant energy condition:

\begin{enumerate}
    \item  Positivity (i.e. nonnegativity): The energy measure must always be nonnegative.

     \item Rigidity: The energy measure should vanish if and only if the enclosed region is flat. This ensures that quasi-local energy distinguishes between flat and curved spacetimes. 
\end{enumerate}

Among the quasi-local energy candidates, one of the most well-known is the Hawking energy. Introduced by Hawking in 1968 in his pioneering work \cite{Hawma}, this quasi-local energy arises from his study of gravitational radiation as perturbations in an expanding FLRW spacetime. He proposed a quasi-local quantity to measure the total mass  enclosed by a given closed spacelike 2-surface $\Sigma$ and is designed to decrease monotonically as gravitational radiation is emitted.  This quantity, now commonly referred to as the \emph{Hawking energy} or \emph{Hawking mass}, provides a measure of the gravitational energy enclosed by $\Sigma$ in terms of the focusing properties of light rays passing through $\Sigma$, as quantified by the null expansions. The \emph{Hawking energy} \( \mathcal{E}(\Sigma) \) of a closed spacelike 2-surface \( \Sigma \) is given by
\begin{equation}\label{Hawkingene}
     \mathcal{E}(\Sigma) = \sqrt{\frac{|\Sigma|}{16 \pi}} \left( 1+ \frac{1}{16 \pi} \int_\Sigma \theta^+ \theta^- d\mu \right),
\end{equation} 
where $|\Sigma|$ is the area of $\Sigma$, and $\theta^+$ and $\theta^-$ are the null expansions with respect to future-directed null normals $\ell^+$ and $\ell^-$ normalized by $\langle \ell^+,\ell^-\rangle=-2$.

This definition highlights that the Hawking energy measures energy in terms of the bending of light rays on \( \Sigma \), as expressed through the null expansions \( \theta^\pm \). If \( \Sigma \) is the outer boundary of a spacelike hypersurface \( \Omega \), then \(  \mathcal{E}(\Sigma) \) can be interpreted as the total energy enclosed within \( \Sigma \) on \( \Omega \).

The Hawking mass is arguably the simplest proposal for measuring energy in a bounded region, and it satisfies many of the desirable properties (e.g.\ the ADM limit, the small-sphere limit, and monotonicity under inverse-mean-curvature flow). However, it is not positive in general: in Euclidean space every non-round sphere has strictly negative mass, and the only 2-sphere with nonnegative mass is the round sphere (of zero mass). This highlights the need to identify special surfaces on which positivity can be regained.

Christodoulou and Yau were the first to single out the importance of evaluating the Hawking energy in appropriate surfaces and in \cite{Chriyau} they showed that under the dominant energy condition, the Hawking energy is nonnegative on stable constant mean curvature (CMC) spheres in the time-symmetric case.  Shi, Wang,  and Wu  \cite{shi2009behavior} and later Miao, Wang, and Xie \cite{Miao} showed that the Hawking energy converges to the  ADM energy at infinity when evaluated on CMC spheres. More recently, Sun \cite{sun2017rigidity} established that the Hawking energy satisfies rigidity properties on stable CMC spheres. To date, all these results are confined to the time-symmetric case; establishing analogous properties in the fully dynamical setting has proved more elusive.

To overcome this restriction, one may instead seek local maximizers of the Hawking energy, leading naturally to the study of its area-constrained critical surfaces.

We will work in the initial data set setting, this means that we consider a smooth $3$-dimensional Riemannian manifold $(M,g)$, which will be equipped with a symmetric $2$-tensor $k$, and we denote this manifold as a triple $(M,g,k)$.   In this setting, the Hawking energy can be written for a surface $\Sigma \subset M $ as
\begin{equation}\label{hawkingmass2}
    \mathcal{E}(\Sigma) = \sqrt{\frac{|\Sigma|}{16 \pi}} \left( 1- \frac{1}{16 \pi} \int_\Sigma H^2 - P^2 d\mu \right),
\end{equation}
where $H$ is the mean curvature of the surface $\Sigma$ and   $P=\tr_{g_{\Sigma}} k$ is the trace of the tensor $k$ with respect to the metric induced in $\Sigma$, that is  $P= \tr_\Sigma k= \tr k -k(\nu,\nu)$, where $\nu$ is the outward  normal to $\Sigma$ in $M$. 

From a variational point of view,  studying (\ref{hawkingmass2}) for a fixed area is equivalent to studying the Hawking functional 
\begin{equation}\label{hawfun}
    \mathcal{H}(\Sigma)= \frac{1}{4} \int_\Sigma H^2 - P^2 d\mu.
\end{equation}
We are going to consider area-constrained critical surfaces of this functional. In case $k=0$, in a totally geodesic hypersurface, the Hawking functional reduces to the Willmore functional 
\begin{equation}\label{willfunc}
    \mathcal{W}(\Sigma)= \frac{1}{4} \int_\Sigma H^2  d\mu 
\end{equation}
and the critical surfaces of this functional subject to the constraint that $|\Sigma| $ be fixed are the area-constrained Willmore surfaces which we call here for simplicity just \emph{Willmore surfaces}. These surfaces are characterized by the following Euler Lagrange equation with the Lagrange parameter $\lambda$.
\begin{equation}\label{Willeq}
    0= \lambda H  +\Delta^\Sigma H + H|\mathring{B}|^2+ H \mathrm{Ric}^M(\nu, \nu), 
\end{equation}
where $\mathring{B}$   is the traceless part of the second
fundamental form $B$ of $\Sigma$ in $M$, that is $ \mathring{B} = B- \frac{1}{2} H g_\Sigma$ with norm  $|\mathring{B}|^2 = \mathring{B}_{ij}\, g_\Sigma^{ip}\, g_\Sigma^{jq}\, \mathring{B}_{pq}$,  $\mathrm{Ric}^M$ is the Ricci curvature of $M$, $\nu$ is the outward normal to $\Sigma$  and $\Delta^\Sigma $ is the Laplace-Beltrami operator on $\Sigma$.

Willmore surfaces, which have been the focus of extensive mathematical study, were first introduced in the context of general relativity by Lamm, Metzger, and Schulze in \cite{willflat}. They proved the existence of a unique foliation by Willmore spheres in asymptotically flat manifolds. This foliation, known as a foliation at infinity, covers the entire manifold except for a compact region. In their analysis, they proposed that Willmore surfaces are the optimal choice for evaluating the Hawking energy, particularly in manifolds with nonnegative scalar curvature. This claim is substantiated by two fundamental results.

$\bullet$ The Hawking energy is nonnegative on these surfaces.

$\bullet$ The Hawking energy is monotonically nondecreasing along the foliation. 

It was also shown in \cite{Thomas} by Koerber that the leaves of the foliation are strict local area preserving maximizers of the Hawking energy.

There are several results regarding the nonnegativity and monotonicity of the Hawking energy on Willmore and CMC surfaces. However, fewer studies address the rigidity of the Hawking energy, specifically the conditions under which vanishing Hawking energy implies that the domain enclosed by the surface is flat. This property is crucial for the physical viability of any quasi-local energy since it shows that it distinguishes between flat and curved spaces.

In the dynamical setting ($k\neq 0$), a natural class of test‐surfaces are the area-constrained critical surfaces of the Hawking functional (\ref{hawfun}).   
\begin{definition*}
 We call \emph{Hawking surfaces} the area-constrained critical surfaces of the Hawking functional $ \int_\Sigma H^2 -P^2 d\mu$. These surfaces are characterized by the equation 
 \begin{equation*}
\begin{split}
   0=& \lambda H  +\Delta^\Sigma H + H|\mathring{B}|^2+ H \mathrm{Ric}^M(\nu, \nu)+P( \nabla_\nu \tr k - \nabla_\nu k(\nu,\nu )) - 2P \diver_\Sigma (k(\cdot, \nu))\\ & +\frac{1}{2}H P^2 - 2k (\nabla^\Sigma P, \nu )
    \end{split}
\end{equation*}
for some real parameter $\lambda$.
\end{definition*}
These surfaces were already studied in \cite{Alex, diaz2023local}, where the small sphere limit of the Hawking energy for such surfaces was studied. In this paper, we will see that these surfaces are  particularly well adapted for the Hawking energy, in particular  we present the first rigidity results for the Hawking energy in the dynamical setting, alongside results on positivity.

\subsection{Organization of the paper.}
In Section \ref{sectiontime} we focus on the time-symmetric setting. We prove nonnegativity and rigidity results for the Hawking energy on area-constrained Willmore surfaces in several settings: the standard case, the charged case, the positive and negative cosmological constant cases, corresponding respectively to the spherical and hyperbolic reference geometries, and higher-dimensional analogues.

In Section \ref{sectiondyn} we turn to the fully dynamical regime ($k\neq 0$), introducing area-constrained critical surfaces of the Hawking functional and proving the first nonnegativity and rigidity results for the Hawking energy in this general setting. We also discuss the technical sign assumptions appearing in the dynamical results and provide examples illustrating their limitations.

\subsection{Main results time-symmetric setting (\texorpdfstring{$k=0$}))}

Our first result is the rigidity on  Willmore surfaces.

\noindent\textbf{Theorem \ref{rigiditywillmore}.} \emph{Let $(M,g)$ be a $3$-dimensional Riemannian manifold with nonnegative scalar curvature,  and let  $\Omega \subset M$ be a relatively compact domain whose smooth boundary  $\partial \Omega $  has finitely many components, each with positive mean curvature. Suppose that one of the components $\Sigma$, is an area-constrained Willmore surface  and nonnegative Lagrange parameter, that is, it  satisfies
    \begin{equation*}
        0= \lambda H  +\Delta^{\Sigma} H + H|\mathring{B}|^2+ H \mathrm{Ric}^M(\nu, \nu)
    \end{equation*}
   for $\lambda \geq 0$, and the rest of components have positive scalar curvature.  If  $\int_{\Sigma } H^2 d\mu =16 \pi   $ (its Hawking energy is zero)  then   $ \partial \Omega$ is connected and  isometric to a round sphere, and $\Omega$ is isometric to a Euclidean ball in $\mathbb{R}^3$.}

  In particular, with this, we can deduce a positive mass theorem for the Hawking energy.

  \noindent\textbf{Corollary \ref{positivemasshaw}.} \emph{Let $(M,g)$ be a $3$-dimensional Riemannian  manifold with nonnegative scalar curvature. Suppose $\Omega$ is a relatively compact domain with smooth connected boundary  $\Sigma=  \partial \Omega $.  Let $\Sigma $ be an  area-constrained Willmore surface with positive mean curvature and nonnegative Lagrange parameter, then the Hawking energy satisfies
  $$  \mathcal{E}(\Sigma) = \sqrt{\frac{|\Sigma|}{16 \pi}} \left( 1- \frac{1}{16 \pi} \int_\Sigma H^2  d\mu \right) \geq 0 $$
  with equality if and only if $\Omega$ is isometric to a Euclidean ball in $\mathbb{R}^3$ and $\Sigma$ is isometric to a round sphere. }

 We extend these rigidity and nonnegativity properties to encompass several major physical and geometric generalizations:
\begin{itemize}
    \item \textbf{Electrically Charged Manifolds (Corollary \ref{coroelec}):} We recover positivity and rigidity under the Einstein-Maxwell dominant energy condition.
     \item \textbf{Cosmological backgrounds (Theorems \ref{rigiditywillhyper} and \ref{positivecosmo}):} We establish nonnegativity and rigidity in the hyperbolic reference geometry, corresponding to negative cosmological constant $\Lambda<0$, under the scalar curvature bound $\mathrm{Sc}^M\ge 2\Lambda$. In the spherical reference geometry, corresponding to positive cosmological constant $\Lambda>0$, we prove nonnegativity under the same scalar curvature bound and obtain rigidity under the stronger ambient Ricci lower bound $\mathrm{Ric}^M\ge \frac{2}{3}\Lambda g$.

    \item \textbf{Higher Dimensions (Section \ref{higherdimsec}):} We introduce higher-dimensional Hawking-type energies and prove corresponding positivity and rigidity theorems in the Euclidean and spherical reference settings for area-constrained Willmore hypersurfaces.
\end{itemize}
 We also obtain a strong rigidity result for the foliation of Willmore surfaces. 

\noindent\textbf{Theorem \ref{rigidityfoli}.} \emph{Let $(M, g)$ be a complete $3$-dimensional asymptotically flat  Riemannian manifold with nonnegative scalar curvature. Then the Hawking and  Brown-York energy of all the Willmore surfaces of the canonical Willmore foliation are positive unless $(M,g)$ is isometric to Euclidean space.}

\subsection{Main results dynamical setting (\texorpdfstring{$k\neq 0$}))}

In this setting, we will focus on the Hawking surfaces introduced before. Note that these surfaces are defined within a given spacelike hypersurface. Consequently, defining them independently of a specific hypersurface would require selecting a preferred spacelike normal direction for variation, introducing an inherent gauge dependence into the definition.

We obtain the following result, which shows the positivity and rigidity of the Hawking energy on such surfaces under a technical assumption.

\noindent\textbf{Theorem \ref{positivity0}.} \emph{Let $(M,g,k)$ be a $3$-dimensional initial data set satisfying the dominant energy condition.}
    
    \emph{$i)$ Let $\Sigma$ be a Hawking surface with positive mean curvature, and such that for} 
    \begin{equation*}
        f:= \left( \frac{P}{H}\right)^2|k|^2+ \frac{1}{2 }(\tr k)^2    - \frac{3}{4} P^2- \frac{P}{H}( \nabla_\nu \tr k - \nabla_\nu k(\nu,\nu )) -  \frac{1}{2} |k|^2  -\frac{1}{2} |\mathring{B}|^2 -|J|
    \end{equation*}
    \emph{the surface satisfies $\int_\Sigma f -\lambda d\mu \leq 0$. Then $\int_\Sigma H^2 -P^2 d\mu \leq 16\pi $, and  if $\int_\Sigma f -\lambda d\mu < 0$ then  $\int_\Sigma H^2 -P^2 d\mu < 16\pi $. In particular, the Hawking energy is nonnegative.}

 \emph{$ii)$ Let  $\Omega \subset M$ be a relatively compact domain whose smooth boundary  $\partial \Omega $  has finitely many components. Suppose that one of the boundary components $\Sigma$ is a Hawking surface with positive mean curvature, and the other components have positive scalar curvature and spacelike mean curvature vector ($H^2-P^2>0$). If   there exists a constant $0\leq\beta <\frac{1}{2}$ such that $\int_\Sigma f_\beta -\lambda \, d\mu \leq 0$  for} 
 \begin{equation*}
        f_\beta:= \left( \frac{P}{H}\right)^2|k|^2+ \frac{1}{2 }(\tr k)^2   - \frac{3}{4} P^2- \frac{P}{H}( \nabla_\nu \tr k - \nabla_\nu k(\nu,\nu ))  -  \beta( |k|^2 + |\mathring{B}|^2 +2|J|),
    \end{equation*}
\emph{and  $\int_\Sigma H^2 -P^2 d\mu = 16\pi $. Then   $\Omega$ is isometric to a spacelike hypersurface in Minkowski spacetime with second fundamental form  $k$,   $\partial \Omega$ is connected ($ \partial \Omega=\Sigma $), umbilic, isometric to a round sphere and $k=0$ on $\Sigma$.}

 Note that the condition \( \int_\Sigma f_\beta -\lambda \, d\mu \leq 0 \) is a strengthening of \( \int_\Sigma f -\lambda \, d\mu \leq 0 \). Neither of these conditions is optimal nor physically motivated. In particular, the function $f_\beta$ was introduced for a purely technical reason. In Remark \ref{remarf} we will see that one can also define an alternative $f$ given by  $$\Tilde{f}:= \frac{2P}{H}  k(\nabla^\Sigma \log H, \nu)+ \frac{1}{2 }(\tr k)^2    - \frac{3}{4} P^2- \frac{P}{H}( \nabla_\nu \tr k - \nabla_\nu k(\nu,\nu )) -  \frac{1}{2} |k|^2  -\frac{1}{2} |\mathring{B}|^2 -|J|,$$ and the same result would hold.  We will also see that the condition  on \( f_\beta  \) might be artificially enforcing the Hawking energy to be too positive, as it is seen in the following result.

\noindent\textbf{Corollary \ref{positivemasshawdyn}.} \emph{Let $(M,g,k)$ be a $3$-dimensional compact hypersurface in Minkowski spacetime. Assume that the boundary of $M $, $\partial M =\Sigma $   is a Hawking surface of positive mean curvature and that there exists a constant $\beta <\frac{1}{2}$ such that $\int_\Sigma f_\beta -\lambda \, d\mu \leq 0$. Then the Hawking energy on $\Sigma$ is strictly positive unless $\Sigma$ is also  contained in a hyperplane of Minkowski spacetime.}

The excess positivity of the Hawking energy could stem from one of two factors. The first possibility is that the technical condition \( \int_\Sigma f_\beta -\lambda \, d\mu \leq 0 \) imposes an overly restrictive constraint, biasing the selection of Hawking surfaces toward those with higher energy. The second possibility is that the Hawking energy on Hawking surfaces is inherently 'too positive,' meaning that these surfaces introduce an excess contribution to the energy measurement.  We are more inclined for the too restrictive technical condition, as it is illustrated in Examples \ref{example1} and \ref{example2}.

Finally, we show the nonnegativity of the Hawking energy in higher dimensions when evaluated on Hawking surfaces in Theorem \ref{posidynhigh}.

We have shown that, on the appropriate area-constrained surfaces, the Hawking mass is nonnegative and rigid. However, in the fully dynamical case ($k\neq0$), our rigidity hypothesis appears stronger than necessary, and it may be biasing the Hawking mass toward excessive positivity.

Some of the results proved here, together with further applications of Hawking surfaces, first appeared in the author’s PhD thesis \cite{PenuelaThesis}. The construction of foliations at infinity, monotonicity and large-sphere limits of the Hawking energy along these foliations, and verification of the hypotheses on $f$ for a wide class of initial data are developed in the companion paper \cite{penuelafol}.

With these results, we show that the Hawking mass, when evaluated on spacelike Hawking surfaces, satisfies the key physical and mathematical criteria expected of a quasi-local energy measure. In particular, we aim to elevate the Hawking energy’s status as a viable and useful quasi-local energy measure under realistic conditions.

\section{Time-symmetric setting (\texorpdfstring{$k=0$}))}\label{sectiontime}
In this section, we will mostly work on domains in a manifold, i.e., on connected open sets.

The nonnegativity of the Hawking energy evaluated on Willmore surfaces was proved by Lamm, Metzger, and Schulze:
\begin{theorem}[{\cite[Theorem 4]{willflat}}]\label{theo4willfla}
    If \((M, g)\) satisfies \( \mathrm{Sc}^M \geq 0 \) and if \( \Sigma \) is a compact spherical area-constrained Willmore surface  with \( H > 0 \), then \( \mathcal{E} (\Sigma) \geq 0 \) if \( \lambda \geq 0 \).
\end{theorem}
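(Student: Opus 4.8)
The plan is to convert the claimed inequality $\mathcal{E}(\Sigma)\geq 0$, equivalently $\int_\Sigma H^2\,d\mu\leq 16\pi$, into a sign condition on ambient-curvature integrals over $\Sigma$, and then to read off that sign from the area-constrained Willmore equation together with the hypotheses $\mathrm{Sc}^M\geq 0$ and $\lambda\geq 0$. First I would write the Gauss equation for $\Sigma\subset M$: denoting by $K^M$ the ambient sectional curvature of the tangent plane $T_p\Sigma$, one has $K^\Sigma=K^M+\det B$, and combining this with the two-dimensional identities $\det B=\tfrac14 H^2-\tfrac12|\mathring{B}|^2$ and $K^M=\tfrac12\,\mathrm{Sc}^M-\mathrm{Ric}^M(\nu,\nu)$ gives
\begin{equation*}
K^\Sigma=\tfrac12\,\mathrm{Sc}^M-\mathrm{Ric}^M(\nu,\nu)+\tfrac14 H^2-\tfrac12|\mathring{B}|^2 .
\end{equation*}
Integrating over $\Sigma$ and applying Gauss--Bonnet — this is where the assumption that $\Sigma$ is a topological sphere enters, giving $\int_\Sigma K^\Sigma\,d\mu=4\pi$ — one rearranges to
\begin{equation*}
\tfrac14\int_\Sigma H^2\,d\mu=4\pi+\int_\Sigma\Big(\mathrm{Ric}^M(\nu,\nu)+\tfrac12|\mathring{B}|^2-\tfrac12\,\mathrm{Sc}^M\Big)\,d\mu .
\end{equation*}
So it suffices to show $\int_\Sigma\big(\mathrm{Ric}^M(\nu,\nu)+\tfrac12|\mathring{B}|^2\big)\,d\mu\leq\tfrac12\int_\Sigma\mathrm{Sc}^M\,d\mu$.

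The second step would extract exactly this from the Euler--Lagrange equation $0=\lambda H+\Delta^\Sigma H+H|\mathring{B}|^2+H\,\mathrm{Ric}^M(\nu,\nu)$. Since $H>0$ on the compact surface $\Sigma$, one may divide by $H$ and integrate, and the Laplacian term contributes a favorable sign after integration by parts:
\begin{equation*}
\int_\Sigma\frac{\Delta^\Sigma H}{H}\,d\mu=\int_\Sigma\frac{|\nabla^\Sigma H|^2}{H^2}\,d\mu\geq 0 .
\end{equation*}
Hence
\begin{equation*}
\int_\Sigma\big(\mathrm{Ric}^M(\nu,\nu)+|\mathring{B}|^2\big)\,d\mu=-\lambda\,|\Sigma|-\int_\Sigma\frac{|\nabla^\Sigma H|^2}{H^2}\,d\mu\leq-\lambda\,|\Sigma|\leq 0 ,
\end{equation*}
using $\lambda\geq 0$. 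Because $|\mathring{B}|^2\geq 0$, discarding half of it only decreases the left-hand side, so $\int_\Sigma\big(\mathrm{Ric}^M(\nu,\nu)+\tfrac12|\mathring{B}|^2\big)\,d\mu\leq 0\leq\tfrac12\int_\Sigma\mathrm{Sc}^M\,d\mu$ by $\mathrm{Sc}^M\geq 0$; substituting back into the Gauss--Bonnet identity yields $\int_\Sigma H^2\,d\mu\leq 16\pi$, i.e. $\mathcal{E}(\Sigma)\geq 0$.

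I expect no serious obstacle: $\Sigma$ is smooth and compact, so no regularity issues arise, and the argument is just the Gauss equation, Gauss--Bonnet, a test of the Euler--Lagrange equation against $1/H$, and two uses of the hypotheses. The only points needing care are the curvature-convention bookkeeping in the Gauss equation — in particular securing the coefficient $\tfrac12$ in front of $|\mathring{B}|^2$, which is precisely what makes it harmless to drop the extra $\tfrac12|\mathring{B}|^2$ — and the observation that strict positivity $H>0$ is exactly what licenses dividing the Euler--Lagrange equation by $H$. I would also record, as a precursor to the rigidity statements proved later, that equality forces $\lambda=0$, $\mathring{B}\equiv 0$, $H$ constant, and $\mathrm{Sc}^M=\mathrm{Ric}^M(\nu,\nu)=0$ along $\Sigma$, the last of these following by feeding the other three back into the Euler--Lagrange equation.
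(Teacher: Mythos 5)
Your proposal is correct and follows essentially the same route as the paper: test the Euler--Lagrange equation against $H^{-1}$, integrate by parts to get the nonnegative term $\int_\Sigma|\nabla^\Sigma\log H|^2\,d\mu$, and combine with the Gauss equation and Gauss--Bonnet to conclude $\tfrac14\int_\Sigma H^2\,d\mu\le 4\pi$ (this is exactly the computation carried out in the proof of Theorem \ref{rigiditywillmore} and summarized in the remark following the statement, where the sharp condition $\int_\Sigma\lambda+|\nabla^\Sigma\log H|^2+\tfrac12|\mathring{B}|^2\,d\mu\ge 0$ is recorded). Your equality analysis also matches the paper's rigidity argument.
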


\begin{remark}
   A closer look at the proof shows that the exact condition needed for nonnegative Hawking energy is $ \int_{\Sigma}  \lambda+|\nabla^{\Sigma}  \log H|^2  +\frac{1}{2} |\mathring{B}|^2  \, d\mu \geq 0 $ and in particular this is automatic whenever $\lambda \geq 0$.
\end{remark}
As mentioned above, rigidity results for the Hawking energy are still rather scarce.  A common approach is to look for {\em unconstrained} local maximizers  of the Hawking energy, as in   \cite{baltazar2023local,barros2017hawking, lee2025modified, maximo2012hawking, sousa2023charged}.  In each of these works one assumes a lower bound on the scalar curvature of the ambient manifold and shows that, if there exists a minimal surface \(\Sigma\) which locally maximizes the Hawking energy (adjusted to include a cosmological constant term), then a neighbourhood of \(\Sigma\) must be isometric to one of the standard black-hole models with cosmological constant (Schwarzschild–de Sitter, Reissner–Nordström–de Sitter or anti–de Sitter).  Note, however, that these \(\Sigma\) are {\em not} area‐constrained critical surfaces, and the resulting rigidity statements concern local Schwarzschild–(A)dS geometry rather than flatness of the enclosed region.  Also, recent work \cite{leandro2025sharp} establishes that, in electrostatic manifolds, attaining a sharp lower bound for the charged Hawking on a minimal surface energy forces the surface to coincide with the Reissner–Nordström–de Sitter horizon.

A more pointwise rigidity theorem is due to Mondino and Templeton‐Browne \cite{mondino2022some}: they show that if an open set \(\Omega\subset M\) has the property that, at every point \(p\in\Omega\), there is a neighbourhood \(U\subset\Omega\) in which the supremum of the Hawking energy of all surfaces contained in \(U\) is nonpositive, then \(\Omega\) is locally isometric to Euclidean \( \R^3\) (resp.\ to hyperbolic space \( \mathbb{H}^3\)).  While this result is closer in spirit to our flat‐interior rigidity, its hypothesis is very strong, requiring a uniform energy bound in every sufficiently small ball.

The most successful flatness rigidity results so far have been obtained for stable constant mean curvature surfaces in the time-symmetric setting, with the main result being a combination of the main results in \cite{shi2019uniqueness, sun2017rigidity}, which we state as:
\begin{theorem}[{\cite[Theorem 2, Theorem 1]{shi2019uniqueness, sun2017rigidity}}]
  Let \((M, g)\) be a $3$-dimensional Riemannian manifold with nonnegative scalar curvature, and let \( \Omega \subset M \) be a relatively  compact domain with smooth boundary \( \Sigma = \partial \Omega \). If \( \Sigma \) is a stable constant mean curvature  sphere with vanishing Hawking energy ( $\int_{\Sigma } H^2 d\mu =16 \pi   $),  either $\Sigma$ has  even symmetry, or its Gauss curvature  \( K_{\Sigma} \) is \( \mathcal{C}^0 \)-close to \( \frac{4\pi}{|\Sigma|} \), i.e. either  there exist an isometry $\rho :\Sigma \to \Sigma $ with $\rho^2 =id$ and $\rho(x)\neq x$ for $x \in \Sigma$ or \(
|K_{\Sigma} -\frac{4\pi}{|\Sigma|}|_{\mathcal{C}^0} < \delta_0
\) for some \( \delta_0 \ll 1 \).
 Then \( \Omega \) is isometric to a Euclidean ball in \( \mathbb{R}^3 \). In particular, \( \Sigma \) is isometric to the round sphere in \( \mathbb{R}^3 \).
\end{theorem}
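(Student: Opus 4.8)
The natural approach is to run the Christodoulou--Yau stability argument (\cite{Chriyau}) all the way into its equality case. Since $\Sigma$ is a topological $2$-sphere, uniformization together with Hersch's lemma (conformal balancing) furnishes a conformal map $\Phi=(\phi_1,\phi_2,\phi_3)\colon\Sigma\to\IS^2\subset\R^3$ with $\int_\Sigma\phi_i\,d\mu=0$ for each $i$ and $\sum_i\phi_i^2\equiv1$; conformal invariance of the Dirichlet energy in two dimensions gives $\sum_i\int_\Sigma|\nabla^\Sigma\phi_i|^2\,d\mu=\sum_i\int_{\IS^2}|\nabla^{\IS^2}y_i|^2\,d\mu_{\IS^2}=8\pi$, where the $y_i$ are the Euclidean coordinate functions. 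Each $\phi_i$ has zero mean, so it is admissible in the volume-preserving second variation of area, and stability of the CMC sphere $\Sigma$ gives $\int_\Sigma|\nabla^\Sigma\phi_i|^2\,d\mu\ge\int_\Sigma(|B|^2+\mathrm{Ric}^M(\nu,\nu))\phi_i^2\,d\mu$; summing over $i$ and using $\sum_i\phi_i^2\equiv1$ produces the key inequality $8\pi\ge\int_\Sigma(|B|^2+\mathrm{Ric}^M(\nu,\nu))\,d\mu$.

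I would then substitute the traced Gauss equation $2K_\Sigma=\mathrm{Sc}^M-2\mathrm{Ric}^M(\nu,\nu)+H^2-|B|^2$ together with $|B|^2=|\mathring{B}|^2+\tfrac12H^2$, which rewrites the integrand as $|B|^2+\mathrm{Ric}^M(\nu,\nu)=\tfrac12|\mathring{B}|^2+\tfrac34H^2+\tfrac12\mathrm{Sc}^M-K_\Sigma$. Integrating, and using Gauss--Bonnet $\int_\Sigma K_\Sigma\,d\mu=4\pi$ and the vanishing of the Hawking energy $\int_\Sigma H^2\,d\mu=16\pi$, the key inequality collapses to $0\ge\tfrac12\int_\Sigma|\mathring{B}|^2\,d\mu+\tfrac12\int_\Sigma\mathrm{Sc}^M\,d\mu$. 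As $\mathrm{Sc}^M\ge0$, both integrals vanish: $\Sigma$ is totally umbilic and $\mathrm{Sc}^M\equiv0$ along $\Sigma$ (this already reproves the Christodoulou--Yau nonnegativity). Moreover every inequality above is an equality; in particular $\int_\Sigma(|B|^2+\mathrm{Ric}^M(\nu,\nu))\,d\mu=8\pi$, and since $\int_\Sigma|B|^2\,d\mu=\tfrac12\int_\Sigma H^2\,d\mu=8\pi$ by umbilicity, $\int_\Sigma\mathrm{Ric}^M(\nu,\nu)\,d\mu=0$.

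The delicate part is the equality analysis, where one of the two alternative hypotheses becomes essential; here I would follow Sun \cite{sun2017rigidity} and \cite{shi2019uniqueness}. Equality in the stability inequality makes each $\phi_i$ a first eigenfunction, with eigenvalue zero, of the area-constrained Jacobi problem; following those works the balancing then forces the Lagrange multipliers to vanish, so that $\Phi$ solves $\Delta^\Sigma\Phi=-q\Phi$ with $q:=|B|^2+\mathrm{Ric}^M(\nu,\nu)=\tfrac12H^2+\mathrm{Ric}^M(\nu,\nu)$ (using umbilicity) and $|\nabla^\Sigma\Phi|^2=q$, i.e.\ $\Phi$ is a harmonic map into $\IS^2$. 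Since holomorphic quadratic differentials on $\IS^2$ vanish, $\Phi$ is weakly conformal, and the degree identity $\tfrac12\int_\Sigma q\,d\mu=4\pi\deg\Phi$ combined with $\int_\Sigma q\,d\mu=\tfrac12\int_\Sigma H^2\,d\mu+\int_\Sigma\mathrm{Ric}^M(\nu,\nu)\,d\mu=8\pi$ forces $\deg\Phi=1$, so $\Phi$ is a conformal diffeomorphism and $g_\Sigma$ is conformal to the round metric. Promoting this to genuine roundness is exactly what the extra hypothesis buys: if $\Sigma$ carries a free involutive isometry $\rho$, one arranges $\Phi$ to intertwine $\rho$ with the antipodal map of $\IS^2$ and invokes Sun's rigidity to conclude that $q$ is constant; if instead $K_\Sigma$ is $C^0$-close to $\tfrac{4\pi}{|\Sigma|}$, one runs the continuity argument of \cite{shi2019uniqueness} to force $K_\Sigma\equiv\tfrac{4\pi}{|\Sigma|}$. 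In either case $\Sigma$ is isometric to the round sphere of radius $r$ with $r^{-2}=\tfrac{4\pi}{|\Sigma|}$, whence $\mathrm{Ric}^M(\nu,\nu)=\tfrac14H^2-r^{-2}$ is constant; having zero average, it vanishes, so $H=2/r$.

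It remains to establish the interior rigidity, for which I would invoke the Shi--Tam positive mass theorem for the Brown--York mass. The boundary $(\Sigma,g_\Sigma)$ is the round sphere of radius $r$, whose unique isometric embedding into $\R^3$ has mean curvature $H_0=2/r=H$, so the Brown--York mass $\tfrac1{8\pi}\int_\Sigma(H_0-H)\,d\mu$ vanishes; since $\mathrm{Sc}^M\ge0$ on $\Omega$, $K_\Sigma>0$, and $H>0$, the rigidity case of Shi--Tam forces $\Omega$ to be isometric to a flat domain in $\R^3$, which, having a round spherical boundary, is the Euclidean ball of radius $r$, and $\Sigma$ is a round sphere in $\R^3$. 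I expect the main obstacle to be the third paragraph: umbilicity and the vanishing of $\mathrm{Sc}^M$ on $\Sigma$ come essentially for free from stability and Gauss--Bonnet, and the interior rigidity is a black-box application of Shi--Tam, but controlling the balancing map $\Phi$ tightly enough to conclude that $q$ is constant genuinely requires the extra symmetry or $C^0$-proximity assumption --- without one of them, whether a stable CMC sphere with vanishing Hawking mass must be round appears to be open.
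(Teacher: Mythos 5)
This statement is not proved in the paper: it is imported verbatim as a combination of \cite[Theorem~1]{sun2017rigidity} and \cite[Theorem~2]{shi2019uniqueness}, with the paper only remarking that such results ``rely on their core on the rigidity of the Brown--York energy'' (Theorem~\ref{rigiditybrown}). Measured against that, your reconstruction is the right one and is essentially correct: the Christodoulou--Yau step (Hersch balancing, volume-preserving stability, conformal invariance of the Dirichlet energy, the Gauss equation and Gauss--Bonnet combined with $\int_\Sigma H^2\,d\mu=16\pi$) does yield $\int_\Sigma|\mathring B|^2\,d\mu=\int_\Sigma \mathrm{Sc}^M\,d\mu=0$ exactly as you compute, and the closing application of Shi--Tam once $\Sigma$ is known to be round with $H=H_0=2/r$ is precisely the mechanism the paper points to (and reuses in its own Theorem~\ref{rigiditywillmore}).

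The one place where your write-up is a sketch rather than a proof is the step you yourself flag: passing from ``$g_\Sigma$ is conformal to the round metric and each $\phi_i$ saturates the stability inequality'' to ``$K_\Sigma\equiv 4\pi/|\Sigma|$''. Your assertion that the balancing ``forces the Lagrange multipliers to vanish'' is not automatic --- equality only gives $\Delta^\Sigma\phi_i+q\phi_i=c_i$ with $\int_\Sigma q\,\phi_i\,d\mu=c_i|\Sigma|$, and killing the $c_i$ (and then upgrading conformal roundness to isometric roundness) is exactly where Sun's even-symmetry argument and Shi's $C^0$-closeness continuity argument do real work; you correctly defer to those papers, so this is an honest black box rather than a gap, but it is the entire content of the alternative hypotheses in the statement. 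Since the paper offers no proof to compare against, the only fair verdict is that your route coincides with the one the cited sources actually take, and the parts you carry out in detail are computed correctly.
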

This result and the following results on  Willmore surfaces rely on their core  on the rigidity of the Brown-York energy, which is a consequence of the following result of Shi and Tam.
\begin{theorem}[{\cite[Theorem 1]{shi2002positive}}]\label{rigiditybrown}
    Let \((\Omega, g)\) be a compact manifold of dimension three with a smooth boundary and with nonnegative scalar curvature. Suppose \(\partial \Omega\) has finitely many components \(\Sigma_i\) such that each component has positive Gaussian curvature and positive mean curvature \(H^i \) with respect to the unit outward normal. Then for each boundary component \(\Sigma_i\),
\begin{equation}\label{inequbrown}
    \int_{\Sigma_i} H^i \, d\mu \leq \int_{\Sigma_i} H_0^{i} \, d\mu
\end{equation}
where \(H_0^{i}\) is the mean curvature of \(\Sigma_i\) with respect to the outward normal when it is isometrically embedded in \(\mathbb{R}^3\), and \(d\mu \) is the volume form on \(\Sigma_i\) induced from \(g\). Moreover, if equality holds in (\ref{inequbrown}) for some \(\Sigma_i\), then \(\partial \Omega\) has only one component and \(\Omega\) is a domain in \(\mathbb{R}^3\).
\end{theorem}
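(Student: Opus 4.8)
The plan is to prove this via the quasi-spherical metric (parabolic) method, reducing the statement to the Riemannian positive mass theorem. First I would observe that, since each boundary component $\Sigma_i$ has positive Gaussian curvature, Gauss--Bonnet forces $\chi(\Sigma_i)=2$, so $\Sigma_i$ is a topological sphere, and Weyl's embedding theorem gives an isometric embedding of $(\Sigma_i,g|_{\Sigma_i})$ into $\mathbb{R}^3$ as a closed convex surface, unique up to rigid motion, whose outward mean curvature is exactly the function $H_0^i>0$ in the statement. The outer parallel surfaces $\Sigma_i^r$, $r\ge 0$, of this convex body foliate the exterior region $E_i\cong[0,\infty)\times\Sigma_i$, and in this foliation the Euclidean metric reads $dr^2+g_r^i$ with $g_0^i=g|_{\Sigma_i}$, each slice convex with Euclidean mean curvature $H_0(r)$.

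Next, on each $E_i$ I would build a scalar-flat ``quasi-spherical'' metric $\bar g_i=u_i^2\,dr^2+g_r^i$ by solving $\mathrm{Sc}^{\bar g_i}=0$, which along the foliation is a parabolic equation for $u_i$, with initial datum $u_i|_{r=0}=H_0^i/H^i$. Since the mean curvature of the slices in $\bar g_i$ equals $H_0(r)/u_i$, this initial choice is precisely what makes the mean curvature of $\{r=0\}$ inside $(E_i,\bar g_i)$ (with respect to the normal pointing into $E_i$) equal to $H^i$. One then has to show that this parabolic problem admits a solution for all $r\ge 0$, with $u_i>0$ throughout, $u_i\to 1$ and appropriate decay, so that $(E_i,\bar g_i)$ is asymptotically flat with a well-defined ADM mass $m_{\mathrm{ADM}}(E_i)$.

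Then I would glue $\widehat M:=\Omega\cup_{\Sigma_1}E_1\cup\cdots\cup_{\Sigma_N}E_N$. Along each $\Sigma_i$ the induced metrics agree and, by the choice of initial datum, the mean curvatures agree, so $\widehat M$ is asymptotically flat with $N$ ends, Lipschitz across the $\Sigma_i$, and (using $\mathrm{Sc}^{\Omega}\ge0$, scalar-flatness of each $E_i$, and the matching of mean curvatures) has nonnegative scalar curvature in the distributional sense of Miao; the positive mass theorem in this corner setting gives $m_{\mathrm{ADM}}(E_i)\ge 0$ for every $i$ — attaching $\Omega$ is essential here, since a scalar-flat asymptotically flat manifold \emph{with} boundary may have negative mass. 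Setting $m_i(r):=\frac{1}{8\pi}\int_{\Sigma_i^r}H_0(r)\bigl(1-u_i^{-1}\bigr)\,d\mu_r$, one proves, using the parabolic equation for $u_i$, the evolution of $g_r^i$ and $H_0(r)$, convexity of the slices, and $\int_{\Sigma_i^r}K_r\,d\mu_r=4\pi$, that $m_i$ is monotonically non-increasing, with $m_i(0)=\frac{1}{8\pi}\int_{\Sigma_i}(H_0^i-H^i)\,d\mu$ and $\lim_{r\to\infty}m_i(r)=m_{\mathrm{ADM}}(E_i)$. Chaining these,
\[
\frac{1}{8\pi}\int_{\Sigma_i}\bigl(H_0^i-H^i\bigr)\,d\mu=m_i(0)\ \ge\ \lim_{r\to\infty}m_i(r)=m_{\mathrm{ADM}}(E_i)\ \ge\ 0,
\]
which is \eqref{inequbrown}. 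For the rigidity clause, equality in \eqref{inequbrown} for some $\Sigma_i$ forces $m_i(0)=0$, hence $m_{\mathrm{ADM}}(E_i)=0$, and the equality case of the positive mass theorem (with corners) forces $\widehat M\cong\mathbb{R}^3$; since $\mathbb{R}^3$ has a single end, there can be only one boundary component, and $\Omega$ is a domain in $\widehat M\cong\mathbb{R}^3$.

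The hard part will be the analytic core: global-in-$r$ existence, positivity, and precise asymptotics of $u_i$ for the parabolic scalar-flatness equation — handled via barriers built from the convexity and Euclidean asymptotics of the slices — together with the monotonicity computation for $m_i(r)$. Once those are in place, the gluing argument and the appeal to the positive mass theorem, including its corner version and rigidity case, are essentially formal.
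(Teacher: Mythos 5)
Your outline faithfully reproduces the quasi-spherical metric argument of Shi and Tam, which is the source the paper cites for this statement without reproving it: Weyl embedding of each convex boundary component, the Bartnik-type parabolic construction of a scalar-flat exterior collar with initial datum $u_i|_{r=0}=H_0^i/H^i$, monotonicity of the quasi-local mass $m_i(r)$ down to the ADM mass of the glued asymptotically flat manifold, and the positive mass theorem (with the Lipschitz corner handled as in Shi--Tam/Miao) for both the inequality and the rigidity clause. This is essentially the same approach as the original proof, with the genuinely hard analytic steps (global existence and asymptotics of $u_i$, the monotonicity computation, and smoothness across the corner in the equality case) correctly identified.
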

  Note that the isometric embedding onto Euclidean space is unique because of the positive Gauss curvature of the surfaces. This is thanks to the Weyl-Nirenberg-Pogorelov Theorem  \cite{nirenberg1953weyl, pogorelov1972extrinsic}. 
\begin{theorem}[Weyl-Nirenberg-Pogorelov]\label{WeyNi}
Let $(S^2,g)$ be a $C^{k,\alpha}$ ($k\ge3$, $\alpha\in(0,1)$) Riemannian 2‐sphere with Gaussian curvature $K_g>0$.  Then there exists a strictly convex embedding
\[
X\colon (S^2,g)\;\hookrightarrow\;(\mathbb R^3,g_{\mathrm{Eucl}})
\]
which is an isometry onto its image, and any two such embeddings differ by an orientation‐preserving rigid motion of~$\mathbb R^3$.
\end{theorem}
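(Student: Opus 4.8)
The plan is to establish the two halves of the statement by classical methods: existence of the embedding by Weyl's continuity method, and uniqueness up to rigid motion by the rigidity theory of closed convex surfaces (Cohn--Vossen--Herglotz).

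\emph{Existence.} I would join the given metric $g$ to the round metric $g_0$ of the same area by a path $g_t$, $t\in[0,1]$, of $C^{k,\alpha}$ metrics on $S^2$ with $K_{g_t}>0$ throughout; after a conformal rescaling such a path can be produced by a convex-combination argument (or by a short Ricci-flow deformation). Let $I\subset[0,1]$ be the set of $t$ for which $(S^2,g_t)$ admits a strictly convex isometric embedding into $\mathbb{R}^3$, so that $0\in I$. For openness of $I$, I would write the isometric embedding equation as a Monge--Amp\`ere-type quasilinear system in H\"older spaces and apply the implicit function theorem; the decisive input is that the linearization at a strictly convex embedding is an isomorphism, which is exactly the infinitesimal rigidity of closed convex surfaces. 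For closedness of $I$, I would prove uniform \emph{a priori} bounds on the embeddings $X_t$: the $C^0$ bound follows since a convex surface isometric to $(S^2,g_t)$ has extrinsic diameter controlled by the intrinsic diameter; the $C^1$ bound is automatic from convexity; the essential $C^2$ bound — a two-sided estimate on the principal curvatures in terms of $g_t$ and its curvature and covariant derivatives — is the Weyl--Lewy--Heinz--Pogorelov estimate, obtained on a closed surface via a maximum-principle argument for the largest principal curvature (or the mean curvature) using the Gauss equation $\kappa_1\kappa_2=K_{g_t}$ together with the Codazzi equations; with the equation then uniformly elliptic and $C^{2,\alpha}$-bounded, Schauder estimates and the Evans--Krylov theorem upgrade this to uniform $C^{k,\alpha}$ bounds, hence to compactness. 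Thus $I$ is open and closed, so $I=[0,1]$ and $g=g_1$ is realized by a strictly convex embedding.

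\emph{Uniqueness.} Suppose $X_0,X_1\colon(S^2,g)\hookrightarrow\mathbb{R}^3$ are strictly convex isometric embeddings with second fundamental forms $\mathrm{II}_0,\mathrm{II}_1$; both satisfy the same Gauss equation $\det\mathrm{II}_i=K_g\det g$ and the same Codazzi equations for the Levi--Civita connection of $g$. I would then run the Cohn--Vossen--Herglotz argument: via the Herglotz integral identity — equivalently, Hilbert's lemma on the index of the line field defined by the difference of the two shape operators combined with the Poincar\'e--Hopf theorem on $S^2$ — one deduces $\mathrm{II}_0=\mathrm{II}_1$. By the fundamental theorem of surface theory the two embeddings then differ by a Euclidean motion of $\mathbb{R}^3$, and the matching orientations of the two convex images force this motion to be orientation-preserving.

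\emph{Main obstacle.} The crux is the $C^2$ a priori estimate used for closedness: bounding the extrinsic curvature of the embedding purely in terms of the intrinsic metric, for a fully nonlinear equation posed globally on a closed surface. This is precisely the analytic heart of Weyl's problem, completed by Lewy, Heinz, Nirenberg and Pogorelov, and I would invoke it as a black box rather than reprove it. A secondary technical point is constructing a curvature-positive path $g_t$ with the required regularity and controlling the linearized operator uniformly enough to make the openness step quantitative.
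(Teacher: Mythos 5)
The paper offers no proof of this statement --- it is quoted as a classical result with references to Nirenberg and Pogorelov --- and your sketch is exactly the argument contained in those references: existence by the continuity method with infinitesimal rigidity for openness and the Weyl--Lewy--Heinz--Nirenberg--Pogorelov $C^2$ estimate for closedness, and uniqueness by the Cohn--Vossen--Herglotz rigidity of closed convex surfaces, so the outline is correct and matches the intended source. The one step that does not hold up is the final clause: the classical rigidity theorem yields congruence by a possibly orientation-\emph{reversing} motion (a chiral convex surface and its mirror image are intrinsically isometric embeddings of the same $(S^2,g)$ but are not related by a proper motion), so ``matching orientations of the two convex images'' cannot force the motion to be orientation-preserving as you assert; this is really an imprecision in the theorem as stated in the paper, and becomes correct only after one fixes an orientation of $S^2$ and restricts to embeddings compatible with the outward normal.
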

In \cite{shi2002positive} Shi and Tam also proved a higher dimensional version of Theorem \ref{rigiditybrown}, which we state as follows.
\begin{theorem}[{\cite[Theorem 4.1]{shi2002positive}}]\label{highershitam}
    For \(n \geq 3\), suppose \((M^n, g)\) is a compact manifold with boundary \(\Sigma := \bigcup_{i=1}^m \Sigma_i\), where each \((\Sigma_i, g_{|\Sigma_i})\) is a connected component that can be isometrically embedded in \(\mathbb{R}^n\) as a convex hypersurface. Assume \(3 \leq n \leq 7\) or \(M\) is spin. Moreover, its scalar curvature $
\mathrm{Sc}^M \geq 0 $
and the mean curvature of \(\Sigma_i\) with respect to \(g\) satisfies $
H^i > 0$  on  $\Sigma_i$, then the Brown-York energy
\[
\mathcal{E}_{\text{BY}}(\Sigma_i, g) := \frac{1}{8 \pi }\int_{\Sigma_i} \left(H^i_0 - H^i \right) d\mu \geq 0, \quad i = 1, \dots, m,
\]
where \(H^i_0\) is the mean curvature of \(\Sigma_i\) with respect to the Euclidean metric. Moreover, if one of the energies vanishes, then the boundary has only one component and \((M, g)\) is isometric to a bounded domain in \(\mathbb{R}^n\).
\end{theorem}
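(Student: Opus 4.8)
Since the final statement is the higher-dimensional Shi--Tam positive mass theorem for the Brown--York energy, I would reconstruct the argument of \cite{shi2002positive}, extending the classical $n=3$ proof. Because each $(\Sigma_i,g|_{\Sigma_i})$ is, by hypothesis, isometrically embedded in $\R^n$ as a convex hypersurface bounding a convex body $\hat\Omega_i$, the inequality $\mathcal{E}_{\text{BY}}(\Sigma_i,g)\ge 0$ is equivalent to $\int_{\Sigma_i}H^i\,d\mu\le\int_{\Sigma_i}H_0^i\,d\mu$. The plan has four steps: (1) for each boundary component $\Sigma_i$, fill the Euclidean exterior $\R^n\setminus\hat\Omega_i$ with an asymptotically flat, \emph{scalar-flat} metric whose induced boundary metric is $g|_{\Sigma_i}$ and whose boundary mean curvature equals the given $H^i>0$; (2) glue all these exteriors to $(M,g)$ along the $\Sigma_i$ to obtain a complete asymptotically flat manifold with $m$ ends and (distributionally) nonnegative scalar curvature; (3) apply the Riemannian positive mass theorem (valid since $3\le n\le 7$ or $M$ is spin) to get $m_{ADM}^{(i)}\ge 0$ on each end; (4) identify each $m_{ADM}^{(i)}$, up to a positive dimensional constant, with the limit at infinity of a Brown--York-type quantity that is monotone along the exterior foliation, so that its value on $\Sigma_i$ dominates a positive multiple of $m_{ADM}^{(i)}\ge 0$.

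For step (1), I would foliate $\R^n\setminus\hat\Omega_i$ by the outward normal flow of $\Sigma_i$ (equivalently, by level sets of the distance to $\hat\Omega_i$), with leaves $\Sigma_r$ of induced metric $\sigma_r$ and Euclidean mean curvature $H_0(r)>0$, positivity being exactly convexity. On the product one seeks a metric $\bar g=u^2\,dr^2+\sigma_r$ with $\mathrm{Sc}^{\bar g}=0$; this reduces to a parabolic equation for $u$ of the schematic form $H_0\,\partial_r u=u^2\Delta_{\sigma_r}u+(\text{lower order in }u)$, with initial datum chosen so that the $\bar g$-mean curvature of $\Sigma_i$ equals $H^i$ (here $H^i>0$ is needed to start the flow). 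One must prove global existence with $u>0$, asymptotic flatness of $\bar g$ with finite ADM mass, and the mass identity $m_{ADM}(\bar g)=c_n\lim_{r\to\infty}\int_{\Sigma_r}(H_0-H_u)\,d\mu_r$ for a positive constant $c_n$, where $H_u$ is the $\bar g$-mean curvature of $\Sigma_r$.

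For steps (2)--(4): matching the boundary mean curvatures makes the glued metric Lipschitz across each $\Sigma_i$ with no negative singular contribution to the scalar curvature, so a smoothing/approximation argument (in the spirit of Miao's work, or Shi--Tam's direct treatment) allows the positive mass theorem to apply and yields $m_{ADM}^{(i)}\ge 0$, with equality forcing global flatness. The monotonicity of $Q_i(r):=\int_{\Sigma_r}(H_0-H_u)\,d\mu_r$ --- obtained from a maximum-principle/Bochner computation exploiting $\mathrm{Sc}^{\bar g}=0$ and $H_0>0$ to show $Q_i'(r)\le 0$ --- then gives $\int_{\Sigma_i}(H_0^i-H^i)\,d\mu=Q_i(0)\ge\lim_{r\to\infty}Q_i(r)=c_n^{-1}m_{ADM}^{(i)}\ge 0$, i.e.\ $\mathcal{E}_{\text{BY}}(\Sigma_i,g)\ge 0$. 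For rigidity, $\mathcal{E}_{\text{BY}}(\Sigma_i,g)=0$ forces $Q_i\equiv\mathrm{const}$ and $m_{ADM}^{(i)}=0$, so the rigidity case of the positive mass theorem makes the glued manifold isometric to $\R^n$; since $\R^n$ has a single end, there can be no other boundary components, $\partial\Omega$ is connected, and $(M,g)$ embeds isometrically as the bounded domain $\hat\Omega_i\subset\R^n$.

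The main obstacle is step (1): the global existence, precise asymptotic expansion, and ADM mass formula for the Shi--Tam exterior solution $u$, together with the monotonicity of $Q_i(r)$, which form the technical core of \cite{shi2002positive}. A secondary obstacle is justifying the positive mass theorem across the Lipschitz gluing hypersurface, which is precisely where the hypothesis $3\le n\le 7$ or ``$M$ spin'' enters, inherited from the underlying positive mass theorem. Convexity of the embedded $\Sigma_i$ is used essentially twice: it gives $H_0>0$ on every leaf (so the parabolic equation is non-degenerate and the flow can be initialized) and it provides the smooth product structure of the exterior foliation.
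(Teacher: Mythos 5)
This statement is quoted in the paper as a cited result (Theorem 4.1 of Shi--Tam), and no proof is given there; your outline is a faithful reconstruction of the original Shi--Tam argument (quasi-spherical scalar-flat exterior extension, mean-curvature-matched Lipschitz gluing, monotonicity of $\int_{\Sigma_r}(H_0-H_u)\,d\mu_r$ toward the ADM mass, and the positive mass theorem with its rigidity case). The approach is essentially the same as the source, and you correctly identify where convexity, $H^i>0$, and the dimension/spin hypothesis enter.
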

\begin{remark}\label{ramarkhierdim} 
Note that the proof of these last two results relies on the positive mass theorem. As claimed by Lohkamp \cite{lohkamp2016higher1, lohkamp2016higher2} and Schoen-Yau \cite{schoen2017positive} independently, the positive mass theorem for dimensions \(n \geq 8\)  is still valid. 
Since the assumptions on dimensions and spin structures in the theorem only serve to ensure the ADM mass’s positivity, they can be omitted. Thus, whenever Theorem \ref{highershitam} is applied, we refer to this improved version. 
\end{remark}
We will now demonstrate that Willmore surfaces are particularly well-suited for establishing rigidity results for the Hawking energy. 
\begin{theorem}\label{rigiditywillmore}
    Let $(M,g)$ be a $3$-dimensional Riemannian manifold with nonnegative scalar curvature,  and let  $\Omega \subset M$ be a relatively compact domain whose smooth boundary  $\partial \Omega $  has finitely many components, each with positive mean curvature. Suppose that one of the components $\Sigma$, is an area-constrained Willmore surface  and nonnegative Lagrange parameter, that is, it  satisfies
    \begin{equation}\label{willmoreequa}
        0= \lambda H  +\Delta^{\Sigma} H + H|\mathring{B}|^2+ H \mathrm{Ric}^M(\nu, \nu)
    \end{equation}
   for $\lambda \geq 0$, and the rest of components have positive scalar curvature.  If  $\int_{\Sigma } H^2 d\mu =16 \pi   $ (its Hawking energy is zero)  then   $ \partial \Omega$ is connected and  isometric to a round sphere, and $\Omega$ is isometric to a Euclidean ball in $\mathbb{R}^3$.   
\end{theorem}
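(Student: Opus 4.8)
The plan is to reduce the statement to the rigidity case of the Shi--Tam theorem (Theorem \ref{rigiditybrown}) by showing that the hypotheses force the Willmore surface $\Sigma$ to have Gauss curvature $K_\Sigma = \frac{4\pi}{|\Sigma|} > 0$ (so it is a round sphere), and that its mean curvature, after isometric embedding into $\mathbb{R}^3$, equals $H_0$. The starting point is the Willmore equation \eqref{willmoreequa}. I would multiply it by $1/H$ (recall $H>0$) and integrate over $\Sigma$; using $\Delta^\Sigma H / H = \diver_\Sigma(\nabla^\Sigma H / H) + |\nabla^\Sigma \log H|^2$ and the divergence theorem, this yields the key integral identity
\begin{equation*}
  0 = \int_\Sigma \lambda + |\nabla^\Sigma \log H|^2 + |\mathring{B}|^2 + \mathrm{Ric}^M(\nu,\nu) \, d\mu.
\end{equation*}
Next I would invoke the Gauss equation for $\Sigma \subset M$: $2K_\Sigma = \mathrm{Sc}^M - 2\mathrm{Ric}^M(\nu,\nu) + H^2 - |B|^2 = \mathrm{Sc}^M - 2\mathrm{Ric}^M(\nu,\nu) + \frac{1}{2}H^2 - |\mathring{B}|^2$, so that $\mathrm{Ric}^M(\nu,\nu) = \frac{1}{2}\mathrm{Sc}^M - K_\Sigma + \frac{1}{4}H^2 - \frac{1}{2}|\mathring{B}|^2$. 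Substituting and using Gauss--Bonnet $\int_\Sigma K_\Sigma \, d\mu = 4\pi$ (here I need $\Sigma \cong S^2$, which is part of the hypothesis "spherical"/follows from the setup) gives
\begin{equation*}
  0 = \int_\Sigma \lambda + |\nabla^\Sigma \log H|^2 + \tfrac{1}{2}|\mathring{B}|^2 + \tfrac{1}{2}\mathrm{Sc}^M \, d\mu + \tfrac{1}{4}\int_\Sigma H^2 \, d\mu - 4\pi.
\end{equation*}

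Now the vanishing-Hawking-energy hypothesis $\int_\Sigma H^2\,d\mu = 16\pi$ makes $\frac{1}{4}\int_\Sigma H^2\,d\mu - 4\pi = 0$, so the remaining integral must vanish. Since $\lambda \geq 0$, $\mathrm{Sc}^M \geq 0$, and $|\nabla^\Sigma \log H|^2, |\mathring{B}|^2 \geq 0$, every term is forced to be zero pointwise on $\Sigma$: thus $\mathring{B} \equiv 0$ (so $\Sigma$ is totally umbilic in $M$), $H$ is constant, $\lambda = 0$, and $\mathrm{Sc}^M \equiv 0$ on $\Sigma$. With $H$ constant and $\int_\Sigma H^2 = 16\pi$ we get $H^2 |\Sigma| = 16\pi$. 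Plugging $\mathring B = 0$, $\mathrm{Sc}^M = 0$ back into the Gauss equation gives $K_\Sigma = \frac{1}{4}H^2 = \frac{4\pi}{|\Sigma|}$, a positive constant, so by the uniformization / Gauss--Bonnet argument $\Sigma$ is isometric to a round sphere of that curvature. In particular $\Sigma$ has positive Gauss curvature, and by Theorem \ref{WeyNi} it embeds uniquely and isometrically in $\mathbb{R}^3$ as a round sphere of radius $r = \sqrt{|\Sigma|/4\pi}$, whose Euclidean mean curvature is $H_0 = 2/r = \sqrt{16\pi/|\Sigma|} = H$. Hence $\int_\Sigma H\,d\mu = \int_\Sigma H_0\,d\mu$, i.e. equality holds in the Shi--Tam inequality \eqref{inequbrown} for the component $\Sigma$.

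Finally I would apply Theorem \ref{rigiditybrown} to $(\Omega, g)$: its boundary components all have positive mean curvature by hypothesis; $\Sigma$ has positive Gauss curvature as just shown; and the other components have positive scalar curvature, hence by the Gauss equation again positive Gauss curvature (here one uses that the scalar curvature of a surface is twice its Gauss curvature, together with $\mathrm{Sc}^M \geq 0$ — actually I should double-check this implication, since positive \emph{intrinsic} scalar curvature of $\Sigma_i$ is literally $2K_{\Sigma_i}>0$, so "positive scalar curvature" of the components should be read as exactly this). Equality in \eqref{inequbrown} for $\Sigma$ then forces $\partial\Omega$ to be connected (so $\partial\Omega = \Sigma$) and $\Omega$ to be isometric to a domain in $\mathbb{R}^3$; since $\Sigma$ is a round sphere bounding $\Omega$, that domain is a Euclidean ball. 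The main obstacle I anticipate is purely bookkeeping: making sure the hypothesis on the "other" components is exactly what Theorem \ref{rigiditybrown} needs (positive Gaussian curvature and positive mean curvature), and confirming the sign conventions in the Willmore equation and Gauss equation so that the multiply-by-$1/H$-and-integrate step produces a sum of manifestly nonnegative terms — any stray sign would break the rigidity conclusion.
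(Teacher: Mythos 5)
Your strategy is exactly the paper's: multiply \eqref{willmoreequa} by $H^{-1}$, integrate by parts, trade $\mathrm{Ric}^M(\nu,\nu)$ for $K_\Sigma$ via the Gauss equation, use Gauss--Bonnet and $\int_\Sigma H^2\,d\mu=16\pi$ to force every nonnegative term to vanish, and then close with the equality case of Theorem \ref{rigiditybrown}. Two points in your write-up need repair, though. First, you invoke Gauss--Bonnet with $\int_\Sigma K_\Sigma\,d\mu=4\pi$ and justify it by saying sphericity ``is part of the hypothesis''. It is not: the theorem makes no topological assumption on $\Sigma$, and the paper explicitly advertises that sphericity is a \emph{conclusion}. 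The fix is standard and is what the paper does: for any closed orientable surface $\int_\Sigma K_\Sigma\,d\mu=2\pi\chi(\Sigma)\le 4\pi$, so your identity becomes the inequality $0\le\int_\Sigma\lambda+|\nabla^\Sigma\log H|^2+\tfrac12|\mathring{B}|^2+\tfrac12\mathrm{Sc}^M\,d\mu\le 4\pi-\tfrac14\int_\Sigma H^2\,d\mu=0$; all terms still vanish and in addition $\chi(\Sigma)=2$, giving sphericity for free.

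Second, and more substantively, the step ``plugging $\mathring{B}=0$, $\mathrm{Sc}^M=0$ back into the Gauss equation gives $K_\Sigma=\tfrac14H^2$'' silently drops the term $-\mathrm{Ric}^M(\nu,\nu)$: the Gauss equation under those hypotheses gives $K_\Sigma=\tfrac14H^2-\mathrm{Ric}^M(\nu,\nu)$, and the integrated identity only tells you $\int_\Sigma\mathrm{Ric}^M(\nu,\nu)\,d\mu=0$, not pointwise vanishing. So as written you have not shown $K_\Sigma$ is constant, which is what you need both to identify $\Sigma$ as a round sphere and to conclude $H_0=H$ for the Shi--Tam equality case. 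The missing ingredient (used in the paper) is to return to the \emph{pointwise} Willmore equation: with $\lambda=0$, $\mathring{B}=0$ and $H$ a positive constant, \eqref{willmoreequa} reduces to $0=H\,\mathrm{Ric}^M(\nu,\nu)$, hence $\mathrm{Ric}^M(\nu,\nu)\equiv0$ on $\Sigma$, and only then does the Gauss equation yield $K_\Sigma=\tfrac14H^2=4\pi/|\Sigma|$. With these two patches your argument coincides with the paper's proof; the final application of Theorem \ref{rigiditybrown}, including reading ``positive scalar curvature'' of the other components as positive Gaussian curvature, is handled correctly.
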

\begin{proof}
 We start by multiplying equation (\ref{willmoreequa}) by \( H^{-1} \) and integrating the first term by parts. This yields:
\[
\lambda |\Sigma | + \int_{\Sigma }  |\nabla^{\Sigma}  \log H|^2 + |\mathring{B}|^2 +  \, \mathrm{Ric}^M(\nu, \nu)  \, d\mu = 0. 
\]
Now by the  Gauss equation 
$$ \mathrm{Sc}^{\Sigma} = \mathrm{Sc}^{M} - 2\mathrm{Ric}^M(\nu, \nu) + \frac{1}{2}H^2 - |\mathring{B}|^2 $$
and the Gauss-Bonnet formula we get
\[
\lambda |\Sigma | + \int_{\Sigma} |\nabla^{\Sigma}  \log H|^2 + \frac{1}{4} H^2 + \frac{1}{2} |\mathring{B}|^2  \, d\mu \leq 4\pi - \int_{\Sigma } \frac{1}{2} \, \mathrm{Sc}^M \, d\mu
\]   
and $ \int_{\Sigma } \frac{H^2}{4} d\mu =4 \pi   $.  This implies $\lambda=|\mathring{B}|= \mathrm{Sc}^M_{| \Sigma} = 0 $, that $H$ is constant and that $\Sigma$ is a sphere since $\int_{\Sigma } \frac{1}{2}\mathrm{Sc}^{\Sigma} \, d\mu = 4 \pi$. Then by (\ref{willmoreequa}) we also have $ \mathrm{Ric}^M(\nu, \nu)=0$ along $\Sigma $.  Now again by the Gauss equation, it is direct to see that $ \mathrm{Sc}^{\Sigma} = \frac{2}{r^2}$ where $r$ is the area radius of $\Sigma$. Now with this, we can apply the rigidity result of Theorem \ref{rigiditybrown}. First note that since the Gauss curvature of $\Sigma$ is  $ \frac{1}{r^2}$, the isometric  embedding  of $\Sigma$ into $\mathbb{R}^3$ is a round sphere, therefore $H_0= \frac{2}{r}$,  by Theorem \ref{rigiditybrown} and its rigidity we have our result. 
\end{proof}
\begin{remark}
 $ i)$ Note that by considering Willmore surfaces we do not need the surface to be a priori a topological sphere or almost round.

 $ii)$ The condition $\lambda \geq 0 $ can be improved to the condition that $ \int_{\Sigma}  \lambda+\alpha|\nabla^{\Sigma}  \log H|^2  + \beta |\mathring{B}|^2  \, d\mu \geq 0 $ holds    for any constants $0<\alpha<1 $, $0<\beta<\frac{1}{2}$. 
\end{remark}

\begin{remark}
In general, the possibility of the Hawking energy being negative is often regarded as a drawback. However, in the study of spaces with zero-area singularities (see. for instance, \cite{bray2011positive, ajm/1383923956}), this feature becomes advantageous. These singularities are associated with spacetimes of negative mass, and the negativity of the Hawking energy provides a useful tool for analyzing them. 
In this context, it is also important to carefully select the surfaces on which the Hawking energy is evaluated. One might expect that, in a manifold with nonpositive scalar curvature, evaluating the Hawking energy on Willmore surfaces with a nonpositive Lagrange parameter would yield a nonpositive value. However, a quick computation reveals that this is not necessarily the case.

\end{remark}

In Euclidean space, spherical Willmore surfaces with $\lambda \geq 0$ are round spheres, which, in particular, have zero Hawking energy. Consequently, the previous result directly yields a positive mass theorem for the Hawking energy on Willmore surfaces.
\begin{corollary}\label{positivemasshaw}
  Let $(M,g)$ be a $3$-dimensional Riemannian  manifold with nonnegative scalar curvature. Suppose $\Omega$ is a relatively compact domain with smooth connected boundary  $\Sigma=  \partial \Omega $.  Let $\Sigma $ be an  area-constrained Willmore surface with positive mean curvature and nonnegative Lagrange parameter, then the Hawking energy satisfies
  $$  \mathcal{E}(\Sigma) = \sqrt{\frac{|\Sigma|}{16 \pi}} \left( 1- \frac{1}{16 \pi} \int_\Sigma H^2  d\mu \right) \geq 0 $$
  with equality if and only if $\Omega$ is isometric to a Euclidean ball in $\mathbb{R}^3$ and $\Sigma$ is isometric to a round sphere.   
\end{corollary}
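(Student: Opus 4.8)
The plan is to deduce the corollary directly from the nonnegativity result Theorem~\ref{theo4willfla} (more precisely, the slightly sharper condition recorded in the remark following it) and from the rigidity result Theorem~\ref{rigiditywillmore}; the only work is to check that the hypotheses transfer and to handle the converse implication. For nonnegativity I would reproduce the short computation behind Theorem~\ref{theo4willfla}, which has the advantage of not requiring an a priori topological assumption on $\Sigma$: multiplying the area-constrained Willmore equation (\ref{willmoreequa}) by $H^{-1}$, integrating the Laplacian term by parts, substituting the Gauss equation for $\mathrm{Ric}^M(\nu,\nu)$, and using Gauss--Bonnet, $\int_\Sigma \tfrac12\mathrm{Sc}^\Sigma\,d\mu = 2\pi\chi(\Sigma)\leq 4\pi$, one obtains
\[
\int_\Sigma \tfrac14 H^2\,d\mu \;\leq\; 4\pi-\lambda|\Sigma| - \int_\Sigma\Big(|\nabla^\Sigma\log H|^2+\tfrac12|\mathring{B}|^2+\tfrac12\mathrm{Sc}^M\Big)\,d\mu \;\leq\; 4\pi ,
\]
where the last inequality uses $\lambda\geq 0$ and $\mathrm{Sc}^M\geq 0$. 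Hence $\int_\Sigma H^2\,d\mu\leq 16\pi$, which is exactly the assertion $\mathcal{E}(\Sigma)\geq 0$ (since $\sqrt{|\Sigma|/16\pi}>0$).

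For the equality case, suppose $\mathcal{E}(\Sigma)=0$, i.e. $\int_\Sigma H^2\,d\mu=16\pi$. I would then apply Theorem~\ref{rigiditywillmore} to this $\Omega$: the boundary $\partial\Omega=\Sigma$ is connected, so the auxiliary hypothesis there on ``the rest of the components'' having positive scalar curvature is vacuous; moreover $\Sigma$ has positive mean curvature, is area-constrained Willmore with $\lambda\geq 0$, and $\mathrm{Sc}^M\geq 0$. The theorem then gives that $\Omega$ is isometric to a Euclidean ball in $\mathbb{R}^3$ and $\Sigma$ is isometric to a round sphere. Conversely, if $\Omega$ is isometric to a Euclidean ball of radius $r$ with $\Sigma=\partial\Omega$ the round bounding sphere, then $H\equiv 2/r$, $\mathring{B}\equiv 0$, $\Delta^\Sigma H=0$ and $\mathrm{Ric}^M(\nu,\nu)=0$, so (\ref{willmoreequa}) holds with $\lambda=0\geq 0$ (confirming $\Sigma$ satisfies the hypotheses), and $\tfrac{1}{16\pi}\int_\Sigma H^2\,d\mu=\tfrac{1}{16\pi}\cdot\tfrac{4}{r^2}\cdot 4\pi r^2=1$, whence $\mathcal{E}(\Sigma)=0$. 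This closes the equivalence.

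I do not anticipate any real obstacle: the statement is essentially a repackaging of the two preceding theorems. The only points calling for a little care are (a) making sure the nonnegativity step genuinely works without presupposing $\Sigma\cong S^2$, which the Gauss--Bonnet bound $\int_\Sigma\tfrac12\mathrm{Sc}^\Sigma\,d\mu\leq 4\pi$ for an arbitrary closed surface settles, and (b) observing that when $\partial\Omega$ is connected the extra scalar-curvature condition on the remaining boundary components in Theorem~\ref{rigiditywillmore} is automatically met, so its rigidity conclusion applies verbatim.
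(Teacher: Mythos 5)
Your proposal is correct and follows essentially the same route as the paper: nonnegativity comes from the standard computation behind Theorem~\ref{theo4willfla} (multiply the Willmore equation by $H^{-1}$, integrate by parts, use the Gauss equation and Gauss--Bonnet), and the equality case is read off from Theorem~\ref{rigiditywillmore}, whose extra hypothesis on other boundary components is indeed vacuous when $\partial\Omega$ is connected. Your explicit verification of the converse (a round sphere bounding a Euclidean ball has $\int_\Sigma H^2\,d\mu=16\pi$) matches the paper's remark that spherical Willmore surfaces with $\lambda\geq 0$ in Euclidean space are round spheres of zero Hawking energy.
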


In the case of an asymptotically flat manifold, we can get stronger results. In this setting, Sun proved the following result for isoperimetric surfaces, that is, surfaces that enclose a given volume with the minimum possible surface area.

\begin{theorem}[{\cite[Theorem 3]{sun2017rigidity}}]\label{afbrownyork}
    Let \((M, g)\) be a complete asymptotically flat three-manifold with scalar curvature \(  \mathrm{Sc}^M \geq 0\). If there exists an isoperimetric surface with vanishing  Hawking energy and  Gauss curvature  \( \mathcal{C}^0 \)-close to \( \frac{4\pi}{|\Sigma|} \), i.e. \(
|K_{\Sigma} -\frac{4\pi}{|\Sigma|}|_{\mathcal{C}^0} < \delta_0
\) for some \( \delta_0 \ll 1 \). Then \((M, g)\) is isometric to \((\mathbb{R}^3, \delta)\), where \(\delta\) denotes the Euclidean metric on \(\mathbb{R}^3\).
\end{theorem}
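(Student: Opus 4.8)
The plan is to first localize the rigidity to the region bounded by $\Sigma$ and then to globalize it using the asymptotically flat structure. Since an isoperimetric surface is a volume-preserving area minimizer, $\Sigma$ is automatically a \emph{stable constant mean curvature sphere}; moreover, the hypothesis $|K_\Sigma - \tfrac{4\pi}{|\Sigma|}|_{\mathcal C^0} < \delta_0$ forces $K_\Sigma > 0$, so by Gauss--Bonnet $\Sigma$ is a topological sphere of positive Gauss curvature and we are squarely in the regime covered by the interior rigidity theorem for stable CMC spheres of \cite{shi2019uniqueness, sun2017rigidity}. The first step is therefore to let $\Omega$ be the relatively compact domain bounded by $\Sigma$ and to apply that theorem directly: the vanishing of the Hawking energy, $\int_\Sigma H^2\,d\mu = 16\pi$, together with stability, $\mathrm{Sc}^M \ge 0$ and the $\mathcal C^0$-closeness of $K_\Sigma$, yields that $\Omega$ is isometric to a Euclidean ball $B_r \subset \mathbb R^3$ and that $\Sigma$ is a round sphere of radius $r$ with $H \equiv \tfrac{2}{r}$.

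At this point the interior is understood and everything reduces to showing that the exterior is flat as well; the cleanest route is to prove that the ADM mass of $(M,g)$ vanishes and then to invoke the rigidity case of the positive mass theorem, which gives $(M,g)\cong(\mathbb R^3,\delta)$. One inequality is free: writing $E = M\setminus\overline\Omega$ for the asymptotically flat exterior, $E$ carries $\mathrm{Sc}^M \ge 0$ and has inner boundary the round sphere $\Sigma$ whose mean curvature $H=\tfrac2r$ coincides with its Euclidean value $H_0$, so the Brown--York mass $\tfrac1{8\pi}\int_\Sigma(H_0-H)\,d\mu$ vanishes; Shi--Tam monotonicity (the exterior companion of Theorem \ref{rigiditybrown}) then gives $m_{\mathrm{ADM}}(M)=m_{\mathrm{ADM}}(E)\ge 0$, which is also exactly what the positive mass theorem guarantees. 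Thus the entire difficulty is concentrated in producing the \emph{reverse} bound $m_{\mathrm{ADM}}\le 0$.

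This reverse bound is where I expect the main obstacle to lie, and it is exactly the place where the global minimizing nature of $\Sigma$ — as opposed to merely its being area-constrained critical — must be used. The idea is that a positive mass would make the asymptotic region strictly more efficient isoperimetrically than Euclidean space, so that a round region of volume $V_0=\operatorname{Vol}(\Omega)$ pushed far into the end could be enclosed with \emph{less} area than $|\Sigma|$, contradicting that $\Sigma$ minimizes area at fixed volume $V_0$. To make this precise I would work with Huisken's isoperimetric mass and the monotonicity and rigidity of the isoperimetric deficit along the nested isoperimetric foliation: since $\Omega$ is exactly a Euclidean ball the deficit vanishes at $V_0$, its limit as $V\to\infty$ equals $m_{\mathrm{ADM}}$, and I would aim to show that a vanishing deficit at a finite volume together with $\mathrm{Sc}^M \ge 0$ forbids the deficit from increasing, forcing $m_{\mathrm{ADM}}=0$. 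Controlling this comparison at the fixed finite scale $V_0$ — rather than only asymptotically — and extracting the strict-monotonicity-unless-flat statement is the delicate point; the subtlety is that all the quasi-local quantities naturally available (Hawking, Brown--York, isoperimetric deficit) furnish only the positive-mass-direction lower bound, so the argument must genuinely exploit minimization against competitor surfaces placed in the end.

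Finally, once $m_{\mathrm{ADM}}=0$ is in hand, the conclusion is immediate: a complete asymptotically flat $3$-manifold with $\mathrm{Sc}^M\ge 0$ and zero ADM mass is isometric to $(\mathbb R^3,\delta)$ by the rigidity case of the positive mass theorem, which is precisely the assertion of the theorem. Equivalently, one may phrase the last step through the equality case of Shi--Tam for the exterior $E$: with $m_{\mathrm{ADM}}(E)=0=\tfrac1{8\pi}\int_\Sigma(H_0-H)\,d\mu$ the rigidity there makes $E$ isometric to the exterior of $B_r$ in $\mathbb R^3$, which glues along $\Sigma$ to the flat ball $\Omega$ to recover all of $(\mathbb R^3,\delta)$.
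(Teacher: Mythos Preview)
Your first step is exactly right and matches the paper: the isoperimetric surface $\Sigma$ is a stable CMC sphere, the $\mathcal C^0$-closeness of $K_\Sigma$ puts you in the regime of the interior rigidity theorem, and so the enclosed domain $\Omega$ is isometric to a Euclidean ball $B_r$ with $\Sigma$ a round sphere of radius $r$.

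Where you diverge is in the globalization, and here you are making the problem much harder than it is. You try to reach the conclusion by showing $m_{\mathrm{ADM}}=0$, and you correctly flag the reverse inequality $m_{\mathrm{ADM}}\le 0$ as the delicate point you cannot close. But this detour is unnecessary: the paper (following Sun) finishes in one line by appealing to Shi's isoperimetric rigidity, stated here as Theorem~\ref{isoperig}. Once you know $\Omega$ is a Euclidean ball of radius $r$, you have $V_0=\tfrac{4}{3}\pi r^3$ and $|\Sigma|=4\pi r^2$, so
\[
I(V_0)=|\Sigma|=4\pi r^2=(36\pi)^{1/3}V_0^{2/3},
\]
i.e.\ the isoperimetric profile attains the Euclidean value at $V_0$. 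Theorem~\ref{isoperig} then gives $(M,g)\cong(\mathbb R^3,\delta)$ immediately. In effect, the ``vanishing deficit at a finite volume forces global flatness'' statement you say you would \emph{aim to prove} is precisely the content of Shi's theorem, already available as a black box; you are proposing to reprove it rather than cite it. Your Brown--York/Shi--Tam exterior argument and the Huisken isoperimetric-mass comparison are interesting but superfluous here, and as written they leave the proof incomplete.
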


The result relies on the following result of Shi.
\begin{theorem}[{\cite[Theorem 3]{shi2016isoperimetric}}]\label{isoperig}
    Suppose \((M, g)\) is a complete  asymptotically flat manifold with nonnegative scalar curvature. Then for any  \(V > 0\),
\begin{equation}
I(V) \leq (36\pi)^{\frac{1}{3}} V^{\frac{2}{3}}. 
\end{equation}
There exists a  \(V_0 > 0\) such that
\begin{equation}
I(V_0) = (36\pi)^{\frac{1}{3}} V_0^{\frac{2}{3}} 
\end{equation}
if and only if \((M, g)\) is isometric to \(\mathbb{R}^3\). Here
\[
I(v) = \inf \left\{ \mathcal{H}^2(\partial^* \Omega) : \Omega \subset M \text{ is a Borel set with finite perimeter, and } \mathcal{L}^3(\Omega) = v \right\},
\]
is the isoperimetric profile, where \(\mathcal{H}^2\) is the 2-dimensional Hausdorff measure of the reduced boundary \(\partial^* \Omega\), and \(\mathcal{L}^3(\Omega)\) is the Lebesgue measure of \(\Omega\) with respect to the metric \(g\).
\end{theorem}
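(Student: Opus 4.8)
The plan is to compare the isoperimetric profile of $M$ with the Euclidean one by means of an ordinary differential inequality — driven by the Christodoulou--Yau nonnegativity of the Hawking energy on CMC spheres — and then to read off the rigidity from the equality case of that comparison.

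\emph{The inequality.} I would first invoke the standard facts that in a complete asymptotically flat $3$-manifold with $\mathrm{Sc}^M\ge 0$ isoperimetric regions $\Omega_V$ exist for every $V>0$, with boundaries $\Sigma_V$ that are smooth, connected, stable constant mean curvature spheres (connectedness and genus zero coming from asymptotic flatness together with $\mathrm{Sc}^M\ge 0$), and that the isoperimetric profile $I$ is locally Lipschitz with $I(0)=0$ and $I'(V)=H(V)$, the (outward) mean curvature of $\Sigma_V$, wherever $I$ is differentiable. Since $\Sigma_V$ is a CMC sphere in a manifold of nonnegative scalar curvature, Christodoulou--Yau \cite{Chriyau} give $\mathcal E(\Sigma_V)\ge 0$, i.e.\ $H(V)^2 I(V)=\int_{\Sigma_V}H^2\,d\mu\le 16\pi$, so $H(V)\le 4\sqrt\pi\,I(V)^{-1/2}$. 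Hence, for a.e.\ $V$,
\[
\frac{d}{dV}\!\left(I(V)^{3/2}\right)=\tfrac32\,I(V)^{1/2}H(V)\le 6\sqrt\pi,
\]
and integrating from $V=0$ yields $I(V)^{3/2}\le 6\sqrt\pi\,V$, that is $I(V)\le(36\pi)^{1/3}V^{2/3}$.

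\emph{Rigidity — the local part.} Suppose $I(V_0)^{3/2}=6\sqrt\pi\,V_0$. Then equality must hold a.e.\ in the integration on $(0,V_0)$, forcing $H(V)^2 I(V)=16\pi$, i.e.\ $\mathcal E(\Sigma_V)=0$, for a.e.\ $V\in(0,V_0)$. For each such $V$ I would invoke the equality case of the Christodoulou--Yau estimate — which, using stability of $\Sigma_V$ together with the Gauss equation and Gauss--Bonnet, proceeds much as in the proof of Theorem \ref{rigiditywillmore} — to conclude that $\Sigma_V$ is totally umbilic ($\mathring B\equiv 0$), that $\mathrm{Sc}^M\equiv 0$ and $\mathrm{Ric}^M(\nu,\nu)\equiv 0$ along $\Sigma_V$, and that $\Sigma_V$ has constant Gauss curvature $4\pi/|\Sigma_V|$, hence is a round geodesic sphere. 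As $V$ ranges over a dense subset of $(0,V_0)$ these leaves sweep out $\Omega_{V_0}$, so $\mathrm{Sc}^M\equiv 0$ there and $\Omega_{V_0}$ is foliated by umbilic round spheres closing up smoothly at a point; integrating the ODE that $\mathrm{Sc}^M\equiv 0$ imposes on the radial warping function, with a regular centre, forces $\Omega_{V_0}$ to be isometric to a Euclidean ball $B_{r_0}$. (Equivalently: $\Sigma_{V_0}$ is isometric to $S^2(r_0)$ with Euclidean mean curvature $2/r_0=H(V_0)$, so its Brown--York energy vanishes and Theorem \ref{rigiditybrown} applies to $\Omega_{V_0}$.)

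\emph{Rigidity — the global part, and the main obstacle.} It remains to upgrade ``$\Omega_{V_0}$ is a flat ball'' to ``$(M,g)$ is isometric to $\R^3$'', and this is the step I expect to be hardest. The natural approach is to propagate the equality: for $V$ just above $V_0$ the isoperimetric region is a small perturbation of the stable CMC sphere $\Sigma_{V_0}=\partial B_{r_0}$, and if one can show $H(V_0^+)=2/r_0$ then $\mathcal E(\Sigma_V)\to 0$ and flatness of $\Omega_V$ persists; the set of $V$ with $\Omega_V$ a flat ball would then be open, closed and nonempty in $(0,\infty)$, giving $I\equiv(36\pi)^{1/3}V^{2/3}$ and an exhaustion of $M$ by flat balls. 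The difficulty is that $\partial B_{r_0}$ is stable but not strictly stable in a general ambient manifold — it carries the translational Jacobi fields of the flat ball — so the continuation must be handled carefully; and one cannot simply reduce to $m_{\mathrm{ADM}}(M)=0$ and quote the rigidity of the positive mass theorem, because equality at a single volume does not force the Huisken isoperimetric mass $m(V)=\tfrac{2}{|\Sigma_V|}\big(V-\tfrac1{6\sqrt\pi}|\Sigma_V|^{3/2}\big)$ to tend to $0$ as $V\to\infty$: it need not be monotone, already for the Schwarzschild metric. A correct treatment therefore has to use more than the profile inequality — in essence, the interaction between the isoperimetric minimality of $\Sigma_{V_0}$ and the nonnegativity of the mass of the exterior region $M\setminus\Omega_{V_0}$.
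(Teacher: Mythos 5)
This statement is not proved in the paper at all: it is Shi's theorem, quoted verbatim from \cite{shi2016isoperimetric} and used as a black box (in Proposition \ref{isobrown} and Theorem \ref{rigidityfoli}). So there is no in-paper proof to compare against, and your proposal has to be judged on its own; as it stands it has genuine gaps in both halves.

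For the inequality $I(V)\le(36\pi)^{1/3}V^{2/3}$, your ODE-comparison route presupposes that for every volume $V$ an isoperimetric region exists and that its boundary is a \emph{connected} stable CMC \emph{sphere}. None of this is free: existence of minimizers for all volumes in an asymptotically flat $3$-manifold with $\mathrm{Sc}^M\ge0$ is a deep theorem (Carlotto--Chodosh--Eichmair), whose proof in the literature actually \emph{uses} Shi's isoperimetric inequality, so assuming it here is close to circular; and connectedness and genus zero of isoperimetric boundaries are not known in this generality, while Christodoulou--Yau's bound $\int_\Sigma H^2\,d\mu\le16\pi$ requires spherical topology (their Hersch-type test functions) and, for a disconnected boundary, only controls each component separately, which is not enough to get $H^2I(V)\le16\pi$. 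The inequality in fact has an essentially one-line proof that needs none of this and not even the curvature hypothesis: coordinate balls of volume $V$ centered at points going to infinity in the asymptotically flat end have isoperimetric ratio converging to the Euclidean one, so the infimum defining $I(V)$ is at most $(36\pi)^{1/3}V^{2/3}$.

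For the rigidity, your ``local part'' again leans on the full ODE machinery (equality a.e.\ on $(0,V_0)$ forces you to have minimizers, connected spherical boundaries, and a sharp Christodoulou--Yau equality analysis at a.e.\ volume below $V_0$), and the ``global part'' --- upgrading flatness of $\Omega_{V_0}$ to $(M,g)\cong\mathbb{R}^3$ --- is, by your own account, not carried out. That last step is precisely where the content of the theorem lies (one must bring in the positivity/rigidity of the ADM or a quasi-local mass of the exterior region, not just the profile inequality), so the proposal does not constitute a proof of the rigidity statement. If you want to cite or reconstruct this result, the honest course is to follow Shi's argument in \cite{shi2016isoperimetric} rather than the sketch above.
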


It is direct to prove a similar result to Theorem \ref{afbrownyork} but with the Brown-York energy.

\begin{proposition}\label{isobrown}
    Let \((M, g)\) be a complete asymptotically flat  three-manifold with scalar curvature \(\mathrm{Sc}^M \geq 0\). If there exists an isoperimetric surface \(\Sigma\) with positive mean and Gauss curvatures, and vanishing Brown-York energy. Then \((M, g)\) is isometric to \((\mathbb{R}^3, \delta)\), where \(\delta\) denotes the Euclidean metric on \(\mathbb{R}^3\).
\end{proposition}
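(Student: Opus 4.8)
The plan is to combine the isoperimetric rigidity of Shi (Theorem \ref{isoperig}) with the rigidity of the Brown–York energy of Shi–Tam (Theorem \ref{rigiditybrown}), exactly paralleling Sun's argument for the Hawking energy in Theorem \ref{afbrownyork}. The starting observation is that vanishing Brown–York energy on $\Sigma$ means $\int_\Sigma H\,d\mu = \int_\Sigma H_0\,d\mu$, where $H_0$ is the mean curvature of the isometric embedding of $(\Sigma,g_\Sigma)$ into $\mathbb{R}^3$; this embedding exists and is unique by the Weyl–Nirenberg–Pogorelov Theorem (Theorem \ref{WeyNi}) since $K_\Sigma>0$. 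Thus $\Sigma$ already realizes the equality case of Shi–Tam's inequality.

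First I would invoke Theorem \ref{rigiditybrown} applied to the compact region $\Omega$ bounded by $\Sigma$ (the isoperimetric surface bounds a domain of finite perimeter; since $\Sigma$ is smooth with $H>0$ and $K_\Sigma>0$, $\Omega$ is a genuine compact manifold with smooth boundary and $\mathrm{Sc}\ge 0$): because the Brown–York energy vanishes, $\Omega$ is isometric to a domain in $\mathbb{R}^3$, and in particular $\Sigma$ bounds a Euclidean domain and $\Omega$ is flat. Now the key point is to leverage the isoperimetric property globally. Since $\Omega$ is a flat domain in $\mathbb{R}^3$ bounded by the convex surface $\Sigma$, and $\Sigma$ is an isoperimetric region in $M$ of volume $V := \mathcal{L}^3(\Omega)$, I would compare the area $|\Sigma|$ with the area that the \emph{Euclidean} ball of volume $V$ would have. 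Because $\Omega$ is isometric to a domain in $\mathbb{R}^3$ with the same volume $V$ and boundary area $|\Sigma|$, and the Euclidean isoperimetric inequality gives $|\Sigma| \ge (36\pi)^{1/3} V^{2/3}$ with equality only for the round ball, while the isoperimetric property of $\Sigma$ in $M$ together with Shi's inequality $I(V)\le (36\pi)^{1/3}V^{2/3}$ forces $|\Sigma| = I(V) \le (36\pi)^{1/3}V^{2/3}$. Combining, $|\Sigma| = (36\pi)^{1/3}V^{2/3} = I(V)$, so the isoperimetric profile of $M$ attains the Euclidean value at $V$; by the rigidity clause of Theorem \ref{isoperig}, $(M,g)$ is isometric to $(\mathbb{R}^3,\delta)$.

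The step I expect to be the main obstacle is justifying that the region $\Omega$ enclosed by the isoperimetric surface is regular enough to apply Theorem \ref{rigiditybrown} as stated — i.e., that $\Sigma$ is a smooth, embedded, connected, strictly convex sphere bounding a smooth compact manifold-with-boundary — and, relatedly, pinning down that $\mathcal{L}^3(\Omega)$ equals the Euclidean volume of the image domain under the Shi–Tam identification so that the two isoperimetric inequalities can be chained. Smoothness and the topological type of small or large isoperimetric surfaces in asymptotically flat manifolds are standard (regularity theory for isoperimetric boundaries in dimension $3$, plus the hypothesis $K_\Sigma>0$ already forces $\Sigma\cong S^2$), and the volume identification is automatic once $\Omega$ is shown to be genuinely isometric to its Euclidean image; nonetheless this is where the argument needs to be written carefully rather than quoted. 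Everything else is a direct concatenation of the three cited rigidity theorems.
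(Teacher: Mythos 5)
Your proof is correct and shares the overall skeleton of the paper's argument --- vanishing Brown--York energy realizes equality in Shi--Tam, whose rigidity clause (Theorem \ref{rigiditybrown}) makes $\Omega$ a flat Euclidean domain, and Shi's isoperimetric rigidity (Theorem \ref{isoperig}) then finishes --- but the bridge between the two rigidity theorems is genuinely different. The paper uses that an isoperimetric surface is CMC, so that once $\Omega$ sits in $\mathbb{R}^3$ Alexandrov's theorem identifies $\Sigma$ as a round sphere, giving $|\Sigma| = (36\pi)^{1/3}V^{2/3} = I(V)$ directly. You instead squeeze $I(V)$ between the Euclidean isoperimetric inequality applied to the flat image domain (which gives the lower bound $|\Sigma|\ge(36\pi)^{1/3}V^{2/3}$) and Shi's upper bound $I(V)\le(36\pi)^{1/3}V^{2/3}$, using $|\Sigma|=I(V)$. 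Your route avoids Alexandrov's theorem and the CMC property of isoperimetric surfaces entirely, and does not need the pointwise conclusions $H=H_0=\tfrac{2}{r}$ and $\mathrm{Ric}^M=0$ on $\partial\Omega$ that the paper extracts from Shi--Tam; it only needs that the isometry of $\Omega$ onto a Euclidean domain preserves volume and boundary area, which is automatic. The regularity and connectedness issues you flag (that the isoperimetric region is a smooth compact manifold-with-boundary whose boundary is exactly $\Sigma$) are equally present in, and equally glossed over by, the paper's proof, so they do not constitute a gap specific to your argument.
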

\begin{proof}
Our surface satisfies $\int_\Sigma H d\mu = \int_\Sigma H_0 d\mu$, where $H_0$ is the mean curvature of the surface when isometrically embedded in Euclidean space. Then by the rigidity of the Brown-York (Theorem \ref{rigiditybrown}) we obtain not only that the domain $\Omega$ enclosed by $\Sigma$ is a Euclidean one,  but also that $H_0= H=\frac{2}{r}$ (and $ \mathrm{Ric}^M=0$ on $\partial \Omega$). Then we have that $\Sigma$ is a constant mean curvature surface in Euclidean space, and therefore by Alexandrov theorem a round sphere. Then as it is also an isoperimetric surface, we have by Theorem \ref{isoperig} the result. 
\end{proof}

With this result and combining a result of Sun proving the positivity of the Hawking energy on the leaves of the canonical CMC foliation, we can obtain as a consequence
\begin{corollary}
     Let \((M, g)\) be a $3$-dimensional asymptotically flat  Riemannian manifold with nonnegative  scalar curvature. Then the Hawking  and  Brown-York energies of all the large enough constant mean curvature (CMC) surfaces of the canonical CMC foliation are positive unless $(M,g)$ is isometric to Euclidean space.
\end{corollary}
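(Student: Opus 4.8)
The plan is to reduce the statement, via the isoperimetric rigidity results recorded above, to the nonnegativity of the two energies on the leaves of the canonical CMC foliation. Write $\{\Sigma_\sigma\}_{\sigma\ge\sigma_0}$ for the canonical CMC foliation near infinity, each leaf $\Sigma_\sigma=\partial\Omega_\sigma$ being a stable CMC sphere bounding a relatively compact domain $\Omega_\sigma$. First I would collect the features of the large leaves that I need. By the Huisken--Yau construction and its refinements (Qing--Tian, Ye, Metzger), for $\sigma$ large each $\Sigma_\sigma$ has positive constant mean curvature and, after rescaling, converges to a round sphere; in particular its Gauss curvature is positive and $\mathcal C^0$-close to $4\pi/|\Sigma_\sigma|$. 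Moreover, by the Eichmair--Metzger characterisation of large isoperimetric regions, the domains $\Omega_\sigma$ are isoperimetric once $\sigma$ is large. (These facts may require annexing to the hypotheses the mild asymptotic decay under which they are proved; this is the single point where the statement as written might need a slight tightening.)

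Next I would establish nonnegativity. For the Brown--York energy: since $\Sigma_\sigma$ has positive Gauss and mean curvature and bounds the compact domain $\Omega_\sigma$ of nonnegative scalar curvature, Theorem \ref{rigiditybrown} yields $\int_{\Sigma_\sigma}H\,d\mu\le\int_{\Sigma_\sigma}H_0\,d\mu$, that is $\mathcal E_{\mathrm{BY}}(\Sigma_\sigma)\ge0$, for all large $\sigma$. For the Hawking energy I would invoke Sun's result that the Hawking energy is nonnegative on the leaves of the canonical CMC foliation (building on Christodoulou--Yau \cite{Chriyau}), cf.\ \cite{sun2017rigidity}, so $\mathcal E(\Sigma_\sigma)\ge0$ for all large $\sigma$.

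It then remains to rule out equality. Suppose $\mathcal E(\Sigma_\sigma)=0$ for some large $\sigma$. Then $\Sigma_\sigma$ is an isoperimetric surface with vanishing Hawking energy whose Gauss curvature is $\mathcal C^0$-close to $4\pi/|\Sigma_\sigma|$, so Theorem \ref{afbrownyork} forces $(M,g)$ to be isometric to $(\mathbb R^3,\delta)$. Similarly, suppose $\mathcal E_{\mathrm{BY}}(\Sigma_\sigma)=0$ for some large $\sigma$. Then $\Sigma_\sigma$ is an isoperimetric surface with positive mean and Gauss curvature and vanishing Brown--York energy, so Proposition \ref{isobrown}---whose proof passes through the rigidity clause of Theorem \ref{rigiditybrown}, Alexandrov's theorem, and Shi's isoperimetric rigidity (Theorem \ref{isoperig})---again gives $(M,g)\cong(\mathbb R^3,\delta)$. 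Combining the two cases with the nonnegativity above: if $(M,g)$ is not isometric to Euclidean space then $\mathcal E(\Sigma_\sigma)>0$ and $\mathcal E_{\mathrm{BY}}(\Sigma_\sigma)>0$ for every large enough leaf, which is precisely the claim.

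The main obstacle I anticipate is not analytic but organisational: pinning down the precise statements (and decay hypotheses) of the roundness and positivity estimates for the large leaves and of the Eichmair--Metzger isoperimetric identification, and checking that they are compatible with (or can be added to) the corollary's hypotheses. With those in hand, and with Sun's Hawking nonnegativity along the foliation taken as input, the argument is a short assembly of Theorems \ref{rigiditybrown}, \ref{afbrownyork}, and \ref{isoperig} together with Proposition \ref{isobrown}.
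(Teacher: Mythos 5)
Your proposal is correct and follows essentially the same route as the paper: nonnegativity from Shi--Tam (Theorem \ref{rigiditybrown}) and Sun's positivity along the CMC foliation, the Eichmair--Metzger identification of large CMC leaves with isoperimetric surfaces, the roundness/curvature estimates for large leaves, and then the rigidity clauses of Theorem \ref{afbrownyork} and Proposition \ref{isobrown} to exclude equality. The only cosmetic difference is that the paper cites Sun's Corollary~2 directly for the Hawking-energy half, whereas you reassemble that step from its ingredients.
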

\begin{proof}
    The result for the Hawking energy was proved in \cite[Corollary 2]{sun2017rigidity}. For the Brown-York energy, we first need to note that  in an asymptotically flat manifold, large isoperimetric surfaces are precisely the leaves of the canonical foliation of stable constant mean curvature surfaces \cite{eichmair2013large, Eich}. By the estimates derived in the construction of the CMC foliation \cite{Nerz2}, we have  for $\Sigma$ satisfying $H(\Sigma)= \frac{2}{r}$ in the foliation, it holds 
$$ |\mathring{B}|= O(r^{-\frac{3}{2}-\epsilon}), \, \,  |\mathrm{Ric}^M|= O(r^{-\frac{5}{2}-\epsilon}), \,\,|\mathrm{Sc}^M |=O(r^{-3-\epsilon}) \quad \text{on $\Sigma$} $$
Then by the Gauss equation $\mathrm{Sc}^{\Sigma} = \mathrm{Sc}^{M} - 2\mathrm{Ric}^M(\nu, \nu) + \frac{1}{2}H^2 - |\mathring{B}|^2 $, and the nonnegative scalar curvature of $M$, $\Sigma$ has positive Gauss curvature. Then by Proposition \ref{isobrown} we have the result. 
\end{proof}
Now we will consider the Willmore surfaces of the canonical Willmore foliation derived in \cite{eichko, willflat}.
\begin{theorem}\label{rigidityfoli}
    Let \((M, g)\) be a $3$-dimensional asymptotically flat  Riemannian manifold with nonnegative  scalar curvature. Then the Hawking  and  Brown-York energies of all the  Willmore surfaces of the canonical Willmore foliation  are positive unless $(M,g)$ is isometric to Euclidean space.
\end{theorem}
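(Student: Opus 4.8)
The plan is to reduce the claim to the rigidity statements for isoperimetric surfaces already established, by exploiting the Willmore foliation at infinity. First I would recall from \cite{willflat, eichko} that in a complete asymptotically flat $3$-manifold with nonnegative scalar curvature the canonical Willmore foliation exists and covers a neighbourhood of infinity; each leaf $\Sigma_r$ is a topological sphere with positive mean curvature (for $r$ large), satisfies the area-constrained Willmore equation \eqref{Willeq} with Lagrange parameter $\lambda=\lambda(r)$, and comes with asymptotic estimates analogous to those quoted for the CMC foliation, namely $|\mathring B|=O(r^{-3/2-\epsilon})$, $|\mathrm{Ric}^M|=O(r^{-5/2-\epsilon})$, $|\mathrm{Sc}^M|=O(r^{-3-\epsilon})$, and $\lambda(r)\to 0$ with $\lambda(r)\ge -C r^{-2-\epsilon}$ (in fact I expect $\lambda(r)\geq 0$ for $r$ large, or at least that the integrated condition $\int_{\Sigma_r}\lambda+|\nabla^{\Sigma_r}\log H|^2+\tfrac12|\mathring B|^2\,d\mu\geq 0$ holds). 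Feeding these estimates into the Gauss equation $\mathrm{Sc}^{\Sigma}=\mathrm{Sc}^M-2\mathrm{Ric}^M(\nu,\nu)+\tfrac12 H^2-|\mathring B|^2$ together with $\mathrm{Sc}^M\geq 0$ shows that each large leaf has strictly positive Gauss curvature, and the remark after Theorem \ref{theo4willfla} (or Theorem \ref{theo4willfla} itself) gives $\mathcal E(\Sigma_r)\geq 0$.

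Next I would run the rigidity dichotomy. Suppose some leaf $\Sigma=\Sigma_{r_0}$ has vanishing Hawking energy, i.e. $\int_\Sigma H^2\,d\mu=16\pi$. I do \emph{not} want to invoke Theorem \ref{rigiditywillmore} directly, because the domain $\Omega$ bounded by $\Sigma$ in $M$ need not have boundary components of positive scalar curvature; instead I would argue exactly as in the proof of Theorem \ref{rigiditywillmore} but using only the \emph{single} boundary component: multiplying the Willmore equation by $H^{-1}$, integrating by parts, applying Gauss--Bonnet, and using $\mathrm{Sc}^M\geq 0$ forces $\lambda=0$, $\mathring B\equiv 0$, $\mathrm{Sc}^M|_\Sigma\equiv 0$, $H$ constant, $\Sigma$ a topological sphere, hence $\Sigma$ a round sphere of some area radius $r$ with $\mathrm{Ric}^M(\nu,\nu)=0$ along $\Sigma$ and $H=H_0=2/r$. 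Thus the Brown--York energy of $\Sigma$ also vanishes. Since $\Sigma$ is a leaf of the Willmore foliation it bounds a relatively compact domain $\Omega$ with $\mathrm{Sc}^M\geq 0$ and $H>0$, and Theorem \ref{rigiditybrown} applied to $\Omega$ gives that $\Omega$ is a Euclidean ball. It remains to upgrade this to the statement that \emph{all} of $(M,g)$ is Euclidean.

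For that last step I would use the isoperimetric characterization. The key point is that $\Sigma$, being a round sphere bounding a flat ball, together with the fact that large leaves of the Willmore foliation are close (in $C^{k}$) to the CMC/isoperimetric foliation leaves, lets me compare with Theorem \ref{isoperig}: a round coordinate sphere of area radius $r$ bounding a Euclidean ball has volume $\tfrac{4}{3}\pi r^3$ and area $4\pi r^2$, so it saturates the isoperimetric inequality $I(V)\le (36\pi)^{1/3}V^{2/3}$ at $V=\tfrac43\pi r^3$ \emph{provided} one shows $\Sigma$ is isoperimetric for that volume. Concretely: the flat ball $\Omega$ has the Euclidean isoperimetric ratio, so $I_M(V)\ge |\Sigma|=(36\pi)^{1/3}V^{2/3}$; combined with Shi's inequality $I_M(V)\le(36\pi)^{1/3}V^{2/3}$ this yields equality in Theorem \ref{isoperig}, hence $(M,g)$ is isometric to $(\mathbb R^3,\delta)$. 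Alternatively, and perhaps more cleanly, once $\Omega$ is a flat ball and $\mathrm{Ric}^M=0$ on $\partial\Omega$ I would extend the CMC/isoperimetric rigidity argument of Proposition \ref{isobrown} by noting that the exterior region also has $\mathrm{Sc}^M\ge0$ and is foliated near infinity, then apply the positive-mass-type rigidity directly to $M$.

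\textbf{Main obstacle.} The delicate part is controlling the sign and decay of the Willmore Lagrange parameter $\lambda(r)$ and the asymptotic geometry of the Willmore leaves precisely enough to conclude both nonnegativity of $\mathcal E(\Sigma_r)$ for every leaf and positivity of Gauss curvature; this requires extracting the relevant estimates from the construction in \cite{willflat, eichko} (the analogue of what \cite{Nerz2} provides for the CMC foliation). The second subtle point is the very last reduction from ``$\Omega$ is a flat ball'' to ``$(M,g)$ is globally Euclidean'': one must make sure the saturation of the isoperimetric profile is legitimately inherited — i.e. that $\Sigma$ (or a nearby leaf) is genuinely volume-minimizing for its enclosed volume, which is where the comparison with the isoperimetric foliation of \cite{eichmair2013large, Eich} and the estimates of \cite{Nerz2} must be invoked carefully, exactly as in the proof of the preceding corollary.
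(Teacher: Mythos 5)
Your proposal follows the paper's strategy for the first two stages: nonnegativity of $\mathcal{E}$ on the leaves via Theorem \ref{theo4willfla}, and the reduction of a zero-energy leaf to a round sphere bounding a Euclidean ball via the Willmore identity and Brown--York rigidity (the paper simply invokes Theorem \ref{rigiditywillmore}, which applies since a leaf of the foliation at infinity bounds a domain with connected boundary, so your worry about extra boundary components is moot). The problem is the final, decisive step. Your inference ``the flat ball $\Omega$ has the Euclidean isoperimetric ratio, so $I_M(V)\ge|\Sigma|=(36\pi)^{1/3}V^{2/3}$'' is backwards: $I_M$ is an \emph{infimum}, so exhibiting a competitor $\Omega$ with the Euclidean ratio only gives $I_M(V)\le(36\pi)^{1/3}V^{2/3}$, which is already Shi's inequality and yields nothing. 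To apply the equality case of Theorem \ref{isoperig} you must prove that $\Sigma$ actually \emph{achieves} the infimum, i.e.\ is isoperimetric, and neither ``closeness of Willmore leaves to CMC leaves'' nor ``apply the positive-mass-type rigidity directly to $M$'' supplies that. The paper's missing idea is concrete: once $\Sigma$ is shown to be a round sphere it is in particular a stable CMC sphere; by the uniqueness of the canonical CMC foliation \cite{Nerz2} it must be a leaf of that foliation; the leaves of the canonical CMC foliation are isoperimetric \cite{eichmair2013large, Eich}; hence $I_M(|\Omega|)=(36\pi)^{1/3}|\Omega|^{2/3}$ and Theorem \ref{isoperig} forces $(M,g)\cong(\mathbb{R}^3,\delta)$.

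A second, smaller omission: the theorem also asserts positivity of the \emph{Brown--York} energy of every leaf, and you only prove the implication ``zero Hawking energy $\Rightarrow$ zero Brown--York energy,'' which is the wrong direction for that half. You must also show that a leaf with vanishing Brown--York energy has vanishing Hawking energy. The paper does this by applying Theorem \ref{rigiditybrown} to get $\Omega\subset\mathbb{R}^3$ with $H=H_0$ and $\mathrm{Ric}^M|_\Sigma=0$, then combining the Euclidean Willmore inequality $\frac14\int_\Sigma H^2\,d\mu\ge 4\pi$ with the integrated Willmore identity (which, with $\lambda\ge0$ and $\mathrm{Sc}^M\ge0$, gives the reverse inequality) to conclude $\int_\Sigma H^2\,d\mu=16\pi$, reducing to the Hawking case.
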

\begin{proof}
    Assume the contrary for the Hawking energy. By Theorem \ref{theo4willfla}, it is established that the Hawking energy on such surfaces is nonnegative. Thus, there exists a surface $\Sigma$ in the foliation with zero Hawking energy. From Theorem \ref{rigiditywillmore}, it follows that $\Sigma$ is isometric to a round sphere, and its enclosed region is Euclidean. In particular, $\Sigma$ is a stable CMC surface, as round spheres are known to be stable CMC surfaces.  Now by the uniqueness of the canonical CMC foliation \cite{Nerz2}, $\Sigma $ belongs to the foliation. Note that the CMC foliation is in particular a foliation of isoperimetric surfaces, then we have that $\Sigma$ is an isoperimetric surface with the isoperimetric ratio of a Euclidean sphere, then by Theorem \ref{isoperig} we have the result. 

    Now suppose the opposite for the Brown-York energy. As this energy is nonnegative for manifolds with nonnegative scalar curvature, it must be zero on a $\Sigma$ belonging to the foliation. We will see that $\Sigma$ has in particular zero Hawking energy.   By the rigidity result of the Brown-York energy (Theorem \ref{rigiditybrown}),   the domain $\Omega \subset M$ enclosed by $\Sigma$ is isometric to a domain in Euclidean \(\mathbb{R}^3\).  Moreover, along \(\Sigma\) the mean curvatures \(
H = H_0\)
agree with their Euclidean value \(H_0\), and the ambient Ricci curvature vanishes:
\(
\mathrm{Ric}^M\bigl|_{\Sigma} = 0.
\)  Then since $\Omega$ is a domain in $\mathbb{R}^3$, $\frac{1}{4} \int_{\Sigma} H^2 d\mu \geq 4 \pi$, and as $\Sigma$ is a Willmore sphere  we have 
    
    \[
\lambda |\Sigma | + \frac{1}{4} \int_{\Sigma} H^2d\mu-4\pi +\int_{\Sigma} |\nabla^{\Sigma}  \log H|^2  + \frac{1}{2} |\mathring{B}|^2  \, d\mu =  - \int_{\Sigma } \frac{1}{2} \, \mathrm{Sc}^M \, d\mu.
\]   
then since $\mathrm{Sc}^M \geq 0$ we have $\lambda= \frac{1}{4} \int_{\Sigma} H^2 d\mu-4\pi=|\mathring{B}|=\mathrm{Sc}^M_{\Sigma} =0$. Then the surface has zero Hawking energy and the result follows by the first part of the proof.  
\end{proof}
A similar rigidity was proved by Wei \cite[Theorem 1.2]{wei2021minimizers}: if an asymptotically flat 3‐manifold of nonnegative scalar curvature admits a {\em global} area–constrained minimizer of the Willmore functional  with vanishing Hawking energy, then the manifold is flat.  In contrast, Theorem \ref{rigidityfoli} requires only that the Hawking energy vanish on a single Willmore surface, without assuming it attains a global minimum.
\subsection{Charged case}

It is direct to see that the rigidity result also holds for the electrically charged case, first we need to introduce the main concepts of this setting. 

\begin{definition}\label{electrichar}
   A   time-symmetric initial data for the Einstein-Maxwell equations  $ (M,g,E)$  is a Riemannian manifold $(M,g)$ equipped with an electric vector field $E$ which satisfies $ \diver E =4 \pi \rho$, where $ \rho $  is the electric charge density of the matter. In this case, the dominant energy condition reduces to $ \mathrm{Sc}^M \geq 2 |E|^2$.

For a closed surface $\Sigma$ in $M$, we define the charge enclosed by $\Sigma$ to be given by the flux
integral
\begin{equation}\label{charge}
   Q(\Sigma) = \frac{1}{4 \pi} \int_\Sigma g(E,\nu) d\mu 
\end{equation} 
where $\nu$ is the normal to $\Sigma$. In this context, we have that the charged Hawking energy is given by
\begin{equation} 
\mathcal{E}_{Q}(\Sigma) = \sqrt{\frac{|\Sigma|}{16\pi}} 
\left( 
1 + \frac{4\pi Q(\Sigma)^2}{|\Sigma|} 
- \frac{1}{16\pi} \int_{\Sigma} H^2 \, d\mu 
\right),
\end{equation}
\end{definition}
For more details on this definition see \cite{disconzi2012penrose, mccormick2019charged}.

Now, since the charged Hawking energy is larger than the standard one then we have the rigidity for Willmore surfaces as a direct result of Theorem
\ref{rigiditywillmore}.
\begin{corollary}\label{coroelec}
  Let $(M,g,E)$ be a $3$-dimensional time-symmetric initial data for the Einstein-Maxwell equations  which satisfies the dominant energy condition $ \mathrm{Sc}^M \geq 2 |E|^2$,   and let  $\Omega \subset M$ be a relatively compact domain whose smooth boundary  $\partial \Omega $  has finitely many components, each with positive mean curvature.  Suppose that one of the components $\Sigma $ is an area-constrained Willmore surface with  nonnegative Lagrange parameter, that is, it  satisfies
    \begin{equation}
        0= \lambda H  +\Delta^{\Sigma} H + H|\mathring{B}|^2+ H \mathrm{Ric}^M(\nu, \nu)
    \end{equation}
   for $\lambda \geq 0$, and the rest of the components have positive scalar curvature.  If the charged Hawking energy is zero on $\Sigma$, then $\partial \Omega$ is connected and isometric to a round sphere, $\Omega$ is isometric to a Euclidean ball in $\mathbb{R}^3$, and $ E$ vanishes  on $\Omega$.  
\end{corollary}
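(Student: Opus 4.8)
The plan is to reduce Corollary~\ref{coroelec} directly to Theorem~\ref{rigiditywillmore} by exploiting the fact that the charged Hawking energy dominates the uncharged one. First I would observe that, since $Q(\Sigma)^2 \geq 0$, we always have
\[
\mathcal{E}_Q(\Sigma) = \sqrt{\tfrac{|\Sigma|}{16\pi}}\left(1 + \tfrac{4\pi Q(\Sigma)^2}{|\Sigma|} - \tfrac{1}{16\pi}\int_\Sigma H^2\,d\mu\right) \;\geq\; \sqrt{\tfrac{|\Sigma|}{16\pi}}\left(1 - \tfrac{1}{16\pi}\int_\Sigma H^2\,d\mu\right) = \mathcal{E}(\Sigma).
\]
So the hypothesis $\mathcal{E}_Q(\Sigma) = 0$ forces $\mathcal{E}(\Sigma) \leq 0$, i.e. $\int_\Sigma H^2\,d\mu \geq 16\pi$. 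On the other hand, the dominant energy condition here reads $\mathrm{Sc}^M \geq 2|E|^2 \geq 0$, so $M$ has nonnegative scalar curvature and all the hypotheses of Theorem~\ref{rigiditywillmore} concerning $\Sigma$ and the other boundary components are in force; in particular, running the argument in the proof of Theorem~\ref{rigiditywillmore} verbatim gives $\int_\Sigma H^2\,d\mu \leq 16\pi$ (this is where the Willmore equation, the Gauss equation, Gauss--Bonnet, and $\mathrm{Sc}^M \geq 0$ are used). Hence $\int_\Sigma H^2\,d\mu = 16\pi$, the uncharged Hawking energy vanishes, and Theorem~\ref{rigiditywillmore} applies to conclude that $\partial\Omega$ is connected, $\Sigma$ is isometric to a round sphere, and $\Omega$ is isometric to a Euclidean ball in $\mathbb{R}^3$.

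It remains to extract the additional conclusion that $E \equiv 0$ on $\Omega$. For this I would revisit the equality case. Since now $\int_\Sigma H^2\,d\mu = 16\pi$ exactly, the inequality chain above must be an equality throughout, which forces $Q(\Sigma)^2 = 0$, i.e. the total flux of $E$ through $\Sigma$ vanishes. More importantly, the proof of Theorem~\ref{rigiditywillmore} pins down $\mathrm{Sc}^M|_\Sigma = 0$ along $\Sigma$; combined with the dominant energy condition $\mathrm{Sc}^M \geq 2|E|^2$ this gives $E = 0$ on $\Sigma$. To propagate this into the interior, I would use that $\Omega$ is isometric to a Euclidean ball, so $\mathrm{Sc}^M \equiv 0$ on all of $\Omega$, and the dominant energy condition $0 = \mathrm{Sc}^M \geq 2|E|^2 \geq 0$ forces $|E| \equiv 0$ on $\Omega$, i.e. $E$ vanishes identically on $\Omega$.

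The main obstacle I anticipate is a bookkeeping subtlety rather than a deep one: one must make sure that the inner steps of the proof of Theorem~\ref{rigiditywillmore} really do yield $\int_\Sigma H^2\,d\mu \leq 16\pi$ \emph{and} $\mathrm{Sc}^M|_\Sigma = 0$ using only $\mathrm{Sc}^M \geq 0$ (not strict positivity), so that combining with $\mathcal{E}_Q(\Sigma) = 0 \Rightarrow \int_\Sigma H^2\,d\mu \geq 16\pi$ pins the quantity exactly at $16\pi$; this is exactly how that proof is structured, so the reduction is clean. The only genuinely new ingredient is the final sentence: using the ambient flatness of $\Omega$ together with the charged dominant energy condition to kill $E$ on the whole ball, which is immediate once flatness of $\Omega$ is in hand. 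Since everything else is a verbatim citation of Theorem~\ref{rigiditywillmore} applied to the enlarged energy, I expect the write-up to be short.
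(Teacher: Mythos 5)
Your proposal is correct and follows essentially the same route as the paper, which likewise derives the corollary directly from Theorem~\ref{rigiditywillmore} via the observation that $\mathcal{E}_Q(\Sigma)\ge\mathcal{E}(\Sigma)$ (so that vanishing charged energy together with the nonnegativity $\int_\Sigma H^2\,d\mu\le 16\pi$ from the Willmore equation pins $\int_\Sigma H^2\,d\mu=16\pi$), and then kills $E$ on $\Omega$ using the flatness of $\Omega$ and $\mathrm{Sc}^M\ge 2|E|^2$. Your write-up in fact makes explicit the steps the paper leaves implicit.
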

\begin{remark}\label{chargeremark}
In this work, we have avoided introducing magnetic fields. However, incorporating a magnetic field involves considering an additional vector field $\mathcal{B}$, referred to as the magnetic vector field, which satisfies $\diver \mathcal{B} =\rho_\mathcal{B}$. Typically, 
$\rho_\mathcal{B} $ is set to zero. Under these conditions, the dominant energy condition in a totally geodesic slice ($k=0$) takes the form $ \mathrm{Sc}^M \geq 2 |E|^2 +2|\mathcal{B}|^2$.  In this case, we define the magnetic charge of a
surface $Q_\mathcal{B}(\Sigma)$ in an analogous to (\ref{charge}) and the charged Hawking energy is given by 
$$\mathcal{E}_{Q,Q_B}(\Sigma) = \sqrt{\frac{|\Sigma|}{16\pi}} 
\left( 
1 + \frac{4\pi (Q(\Sigma)^2 +Q_\mathcal{B}(\Sigma)^2 )}{|\Sigma|} 
- \frac{1}{16\pi} \int_{\Sigma} H^2 \, d\mu 
\right). $$
For   more details see \cite[Appendix A]{mccormick2019charged}. When considering this setting  the rigidity result would follow in the same way as before as a direct consequence of Theorem \ref{rigiditywillmore}.
\end{remark}
\begin{remark}
    Note that because of the dependence of  the charge $Q(\Sigma)$ on the surface, the Willmore surfaces are not critical surfaces of the charged  Hawking energy. However if  the electric charge density $\rho$ is zero, then $\diver E =0$ and   $Q(\Sigma)$ is constant for every variation. In this case the Willmore surfaces are critical surfaces of the 
charged  Hawking energy.
\end{remark}
\subsection{Cosmological constant case}

When considering an initial data set of the Einstein equations with cosmological constant $\Lambda $, the dominant energy condition reduces to $\mathrm{Sc}^M \geq 2 \Lambda $. In this setting, one defines the Hawking energy with cosmological constant $\Lambda$,  by   
\begin{equation} 
\mathcal{E}_{\Lambda}(\Sigma) = \sqrt{\frac{|\Sigma|}{16\pi}} 
\left( 
1   
- \frac{1}{16\pi} \int_{\Sigma} H^2+ \frac{4}{3} \Lambda \, d\mu 
\right).
\end{equation}
 Note that when \(\Lambda=0\) it reduces to the usual Hawking energy.  Also that compared to the charged case,  Willmore surfaces are area-constrained critical surfaces of the Hawking energy with cosmological constant $\Lambda$.

We begin with the hyperbolic case ($\Lambda < 0 $). In this setting, the natural “zero-energy” model to compare for rigidity is the  hyperbolic space of constant sectional curvature $\Lambda/3$, denoted by $\mathbb{H}^3_{\Lambda/3}$.

In  \cite{shi2019uniqueness, sun2017rigidity} the rigidity of  the Hawking energy in the hyperbolic case was considered, obtaining the result.
\begin{theorem}[{\cite[Theorem 3, Theorem 2]{shi2019uniqueness, sun2017rigidity}}]
     Let \((M, g)\) be a $3$-dimensional  Riemannian manifold with scalar curvature \(\mathrm{Sc}^M \geq -6 \), and let \( \Omega \subset M \) be a relatively compact domain with smooth boundary \( \Sigma = \partial \Omega \). If \( \Sigma \) is a stable  constant mean curvature  sphere with $\int_{\Sigma } H^2 -4d\mu =16 \pi   $, if either $\Sigma$ has  even symmetry, or its Gauss curvature  \( K_{\Sigma} \) is \( \mathcal{C}^0 \)-close to \( \frac{4\pi}{|\Sigma|} \), i.e. either  there exist an isometry $\rho :\Sigma \to \Sigma $ with $\rho^2 =id$ and $\rho(x)\neq x$ for $x \in \Sigma$ or \(
|K_{\Sigma} -\frac{4\pi}{|\Sigma|}|_{\mathcal{C}^0} < \delta_0
\) for some \( \delta_0 \ll 1 \).
 Then \( \Omega \) is isometric to a hyperbolic  ball in \( \mathbb{H}^3 \).
\end{theorem}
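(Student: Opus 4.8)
The plan is to reproduce, in the hyperbolic normalisation $\Lambda/3=-1$, the scheme behind the time-symmetric rigidity results: the Christodoulou–Yau stability computation \cite{Chriyau} forces the boundary to be a round umbilic sphere whose mean curvature equals its model comparison value, and then a Shi–Tam–type boundary rigidity (here the $\mathrm{Sc}^M\ge-6$ analogue of Theorem \ref{rigiditybrown}) identifies $\Omega$ with a model ball. Concretely, since $\Sigma$ is a topological $2$-sphere, I would first invoke uniformization together with Hersch's balancing lemma to produce a degree-one conformal diffeomorphism $u=(u^1,u^2,u^3)\colon\Sigma\to S^2\subset\R^3$ with $\sum_i(u^i)^2\equiv1$, $\int_\Sigma u^i\,d\mu=0$, and $\sum_i\int_\Sigma|\nabla^\Sigma u^i|^2\,d\mu=8\pi$ (the Dirichlet energy of a degree-one conformal map onto the unit sphere is conformally invariant and equals $4\pi$). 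Testing the volume-constrained stability inequality of the CMC sphere $\Sigma$ against the $u^i$ and summing over $i$ yields
\[
\int_\Sigma\bigl(|B|^2+\mathrm{Ric}^M(\nu,\nu)\bigr)\,d\mu\;\le\;8\pi .
\]

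Next I would rewrite the integrand via $|B|^2=|\mathring B|^2+\tfrac12H^2$, the Gauss equation, the Gauss–Bonnet identity $\int_\Sigma\mathrm{Sc}^\Sigma\,d\mu=8\pi$, and the hypothesis $\mathrm{Sc}^M\ge-6$, obtaining
\[
\tfrac12\int_\Sigma|\mathring B|^2\,d\mu+\tfrac34\int_\Sigma H^2\,d\mu\;\le\;12\pi+3|\Sigma|.
\]
The vanishing Hawking energy $\int_\Sigma(H^2-4)\,d\mu=16\pi$ and the constancy of $H$ give $\tfrac34\int_\Sigma H^2\,d\mu=12\pi+3|\Sigma|$, hence $\int_\Sigma|\mathring B|^2\,d\mu\le0$. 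Thus $\Sigma$ is totally umbilic, $H$ is the constant $2\sqrt{1+\tfrac{4\pi}{|\Sigma|}}>0$, all intermediate inequalities are equalities, $\mathrm{Sc}^M\equiv-6$ along $\Sigma$, and each $u^i$ achieves equality in the stability inequality.

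The main obstacle is passing from \emph{umbilic} to \emph{round}, and this is precisely where the extra hypothesis is needed. Equality for $u^i$ means $\Delta^\Sigma u^i+(|B|^2+\mathrm{Ric}^M(\nu,\nu))u^i=\ell_i$ for a constant $\ell_i$ (the Lagrange multiplier of the volume constraint); multiplying by $u^i$, summing, and using $\sum_i(u^i)^2\equiv1$ gives the pointwise identity $|B|^2+\mathrm{Ric}^M(\nu,\nu)=\sum_i|\nabla^\Sigma u^i|^2+\sum_i\ell_iu^i$. Writing $\phi$ for the conformal factor $u^*g_{S^2}=\phi\,g_\Sigma$ (so $\sum_i|\nabla^\Sigma u^i|^2=2\phi$, $\int_\Sigma\phi\,d\mu=4\pi$, and $\phi\,g_\Sigma$ is the round unit metric, i.e.\ $\phi=K_\Sigma-\tfrac12\Delta^\Sigma\log\phi$), and eliminating $\mathrm{Ric}^M(\nu,\nu)$ with the Gauss equation (using $\mathring B=0$, $\mathrm{Sc}^M=-6$), these relations collapse to a single semilinear elliptic equation for $\phi$ over the round $2$-sphere with a first-harmonic source $\sum_i\ell_iu^i$, whose constant solution corresponds to $\Sigma$ being round. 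One then has to rule out every non-constant solution: under the $C^0$-pinching $|K_\Sigma-\tfrac{4\pi}{|\Sigma|}|<\delta_0$ this follows from an implicit-function/continuity argument pinning $\phi$ to the constant, while under the fixed-point-free involution $\rho$ one makes the whole configuration $\rho$-invariant, which annihilates the obstructing low modes. This step is the technical heart of \cite{sun2017rigidity,shi2019uniqueness}, and I would cite it rather than redo it. The conclusion is that $\Sigma$ is round with Gauss curvature $K_\Sigma=\tfrac{4\pi}{|\Sigma|}$.

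Finally, a round $2$-sphere with $K_\Sigma=\tfrac{4\pi}{|\Sigma|}>0>-1$ embeds isometrically, uniquely up to isometry, as a geodesic sphere of $\mathbb H^3_{-1}$ of radius $r$ determined by $\tfrac1{\sinh^2 r}=\tfrac{4\pi}{|\Sigma|}$, whose mean curvature is $H_0=2\coth r=2\sqrt{1+\tfrac{4\pi}{|\Sigma|}}$; comparing with $H=2\sqrt{1+\tfrac{4\pi}{|\Sigma|}}$ gives $H=H_0$, so the hyperbolic Brown–York energy of $\Sigma$ in $\Omega$ vanishes. I would then apply the hyperbolic analogue of the Shi–Tam rigidity theorem — the $\mathrm{Sc}^M\ge-6$ version of Theorem \ref{rigiditybrown}, which rests on the hyperbolic positive mass theorem — to conclude that $\Omega$ is isometric to a geodesic ball in $\mathbb H^3$. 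Besides the roundness step, the only things to check are that the hypotheses of that hyperbolic rigidity statement (positive boundary Gauss curvature, $H=H_0$) are exactly what we have produced, and that the cosmological normalisation is kept consistent throughout.
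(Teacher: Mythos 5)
The paper does not prove this statement; it is quoted verbatim from \cite{sun2017rigidity,shi2019uniqueness}, so there is no internal proof to compare against. Judged on its own terms, your outline correctly reconstructs the argument of those papers: the Hersch-balanced conformal coordinates plugged into the volume-preserving stability inequality give $\int_\Sigma\bigl(|B|^2+\mathrm{Ric}^M(\nu,\nu)\bigr)\,d\mu\le 8\pi$ (Christodoulou--Yau), your algebra with the Gauss equation, Gauss--Bonnet, $\mathrm{Sc}^M\ge-6$ and $\int_\Sigma (H^2-4)\,d\mu=16\pi$ is correct and does force $\mathring{B}\equiv0$, $\mathrm{Sc}^M\equiv-6$ on $\Sigma$ and equality throughout, and the endgame (isometric embedding as a geodesic sphere with $H_0=2\sqrt{1+4\pi/|\Sigma|}=H$, then the hyperbolic Shi--Tam rigidity, Theorem \ref{rigidityBYhyper}) is exactly how the cited proofs close. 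The honest caveat, which you correctly flag, is that you black-box the one genuinely hard step: umbilicity plus equality in stability only yields the semilinear equation for the conformal factor, and ruling out its non-constant solutions is precisely where the even-symmetry or $C^0$-pinching hypothesis enters; deferring that to \cite{sun2017rigidity,shi2019uniqueness} is defensible here since the statement is itself their theorem, but it means your write-up is a faithful skeleton rather than a self-contained proof. It is worth noting the contrast with the proofs the paper does carry out (Theorems \ref{rigiditywillmore} and \ref{rigiditywillhyper}): for area-constrained Willmore surfaces the Euler--Lagrange equation delivers $\mathring{B}=0$, constancy of $H$ and the pointwise value of $\mathrm{Ric}^M(\nu,\nu)$ directly, so the Gauss equation pins down $\mathrm{Sc}^\Sigma$ and no symmetry or pinching hypothesis is needed --- that is exactly the advantage of the variational surfaces this paper advocates over stable CMC spheres.
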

For the next result, the rigidity of the Brown-York energy in the hyperbolic setting (proved by Shi and Tam) will be important.
\begin{theorem}[{\cite[Theorem 3.8]{shi2007rigidity}}]\label{rigidityBYhyper}
Let \( (\Omega, g) \) be a compact manifold with smooth boundary \( \Sigma \). Assume the following conditions hold:
\begin{itemize}
    \item[(i)] The scalar curvature \( \mathrm{Sc}^\Omega \) of \( (\Omega, g) \) satisfies \( \mathrm{Sc}^M \geq 2\Lambda  \) for some \( \Lambda <  0 \).
    \item[(ii)] \( \Sigma \) is a topological sphere with Gaussian curvature \( K > \frac{\Lambda}{3} \) and with positive mean curvature \( H \).
\end{itemize}
 Then there exists an isometric embedding of $\Sigma$ into the hyperbolic space of radius $\tfrac{\Lambda}{3}$, $\mathbb{H}^3_{\Lambda/3}$, with image a convex surface of mean curvature $H_0$. Furthermore,
\begin{equation}
\int_{\Sigma} (H_0 - H) \, d\mu \geq 0
\end{equation}
 and equality holds if and only if \( (\Sigma, g) \) is a domain of    $\mathbb{H}^3_{\Lambda/3}$.
\end{theorem}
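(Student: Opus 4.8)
The plan is to adapt the Shi--Tam quasi-spherical construction to a hyperbolic background and then appeal to the positive mass theorem for asymptotically hyperbolic manifolds. Since $(\Sigma,g_\Sigma)$ is a topological $2$-sphere with Gaussian curvature $K>\Lambda/3$, Pogorelov's solution of the Weyl embedding problem in the space form of curvature $\Lambda/3$ furnishes an isometric embedding $\iota\colon\Sigma\hookrightarrow\mathbb{H}^3_{\Lambda/3}$ onto a strictly convex surface; denote its mean curvature by $H_0>0$. Write the exterior region $\mathbb{H}^3_{\Lambda/3}\setminus\iota(\overline{\Omega_\iota})$ (with $\Omega_\iota$ the convex body bounded by $\iota(\Sigma)$) as a foliation $\{\Sigma_r\}_{r\geq 0}$ by the level sets of the distance to $\iota(\Sigma)$, and on $[0,\infty)\times\Sigma$ consider a metric $g_u=u^2\,dr^2+\sigma_r$, with $\sigma_r$ the metric induced on $\Sigma_r$ and $u>0$ to be determined. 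Imposing $\mathrm{Sc}(g_u)=2\Lambda$ (the scalar curvature of $\mathbb{H}^3_{\Lambda/3}$) reduces, as in Shi--Tam, to a scalar first-order parabolic equation for $u$ along the foliation with coefficient the leaf mean curvature $H_0(r)$ of $\Sigma_r$; one prescribes $u(0,\cdot)=H_0/H$ so that the mean curvature of $\Sigma_0$ in $g_u$ equals the given $H$.

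Next I would prove that this parabolic equation admits a positive solution on all of $[0,\infty)\times\Sigma$ and that $(\,[0,\infty)\times\Sigma,\,g_u)$ is asymptotically hyperbolic in the sense of Wang / Chru\'sciel--Herzlich, so that its mass $m$ is well defined. The mechanism is a monotonicity formula: with $V$ the static potential of $\mathbb{H}^3_{\Lambda/3}$ and $H_u(r)$ the mean curvature of $\Sigma_r$ in $g_u$, the quantity
\[
\mathfrak{m}(r)=\frac{1}{8\pi}\int_{\Sigma_r}\bigl(H_0(r)-H_u(r)\bigr)\,V\,d\mu_r
\]
is non-increasing in $r$, satisfies $\mathfrak{m}(0)=\tfrac{1}{8\pi}\int_\Sigma(H_0-H)\,V\,d\mu$, and has $\lim_{r\to\infty}\mathfrak{m}(r)$ equal to a positive multiple of $m$. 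Establishing this uses the evolution equation for $u$, the Gauss equation on the leaves, and the Gauss--Bonnet theorem; the hypotheses $K>\Lambda/3$ and convexity of $\iota(\Sigma)$ are precisely what keep the relevant curvature terms of the right sign along the whole foliation.

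Finally I would glue $(\Omega,g)$ to this collar across $\Sigma$: the induced metrics match by construction and the mean curvatures match (both equal $H$), so by the Shi--Tam/Miao corner argument the resulting manifold carries a (distributionally) scalar-curvature $\geq 2\Lambda$ asymptotically hyperbolic metric. The hyperbolic positive mass theorem, in the version allowing corners, then gives $m\geq 0$, hence by the monotonicity $\int_\Sigma(H_0-H)\,d\mu\geq 0$. In the equality case one gets $m=0$, and the rigidity statement of the hyperbolic positive mass theorem forces the glued manifold, and in particular $(\Omega,g)$, to be isometric to a domain in $\mathbb{H}^3_{\Lambda/3}$.

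The main obstacle is the analytic core: proving global existence and strict positivity of $u$ together with the sharp asymptotically-hyperbolic decay needed to make $m$ meaningful, and then pinning down the monotonicity of $\mathfrak{m}(r)$ with the correct sign. The unbounded weight $V$ at infinity makes both the parabolic estimates and the boundary-term bookkeeping more delicate than in the Euclidean Shi--Tam theorem, and one must also have at hand the positive mass theorem for asymptotically hyperbolic manifolds with corners, which is itself a substantial input.
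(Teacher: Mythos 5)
This statement is not proved in the paper at all: it is imported verbatim from Shi--Tam (cited as Theorem 3.8 of \emph{shi2007rigidity}) and used as a black box in the proof of Theorem \ref{rigiditywillhyper}. So there is no internal argument to compare against. Your sketch is, in outline, exactly the strategy of the original Shi--Tam proof: Pogorelov's solution of the Weyl problem in the space form of curvature $\Lambda/3$, the quasi-spherical construction $g_u=u^2\,dr^2+\sigma_r$ on the exterior with prescribed scalar curvature $2\Lambda$ reducing to a parabolic equation for $u$ with initial data $H_0/H$, monotonicity of the Brown--York-type quantity weighted by the static potential $V$, gluing across $\Sigma$ with matched induced metrics and mean curvatures, and the asymptotically hyperbolic positive mass theorem (the spin hypothesis being automatic for orientable $3$-manifolds). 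As a plan it is the right one, though the analytic core you defer --- global existence and positivity of $u$, the sharp asymptotically hyperbolic decay, and the corner version of the hyperbolic PMT --- is where essentially all of the work lies.

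There is, however, one genuine gap in your final step. The quantity your monotonicity controls is $\mathfrak{m}(0)=\frac{1}{8\pi}\int_\Sigma (H_0-H)\,V\,d\mu$ with $V=\cosh(\kappa r)>1$ the static potential, so the argument delivers $\int_\Sigma (H_0-H)\,V\,d\mu\ge 0$, and in the equality case it is the vanishing of this \emph{weighted} integral that triggers rigidity. Your last sentence silently drops $V$ and concludes $\int_\Sigma (H_0-H)\,d\mu\ge 0$; since $H_0-H$ has no pointwise sign and $V$ is non-constant, the unweighted inequality does not follow from the weighted one. This is in fact a discrepancy already present in the paper's quotation of the theorem: the original Shi--Tam statement carries the weight $\cosh(\kappa r)$ in the integrand, and any proof along these lines will produce the weighted version. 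If the unweighted form is what is needed downstream, an additional argument (or a restatement with the weight) is required.
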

With this, we can prove. 
\begin{theorem}\label{rigiditywillhyper}
     Let $(M,g)$ be a  $3$-dimensional Riemannian  manifold with scalar curvature $\mathrm{Sc}^M \geq 2 \Lambda  $ for a constant $\Lambda < 0$. Suppose $\Omega$ is a relatively compact domain with smooth connected boundary  $\Sigma=  \partial \Omega $.  Let $\Sigma $ be an area-constrained Willmore surface with positive mean curvature and  Lagrange parameter $\lambda  \geq -\frac{2}{3}\Lambda$, that is, it  satisfies
\begin{equation}\label{willmoreequa2}
        0= \lambda H  +\Delta^\Sigma H + H|\mathring{B}|^2+ H \mathrm{Ric}^M(\nu, \nu), \quad \lambda  \geq -\frac{2}{3}\Lambda, \quad H>0.
    \end{equation}
   Then $\mathcal{E}_\Lambda(\Sigma)\ge0$, and if $\mathcal{E}_\Lambda(\Sigma)=0$,  then  $\Omega$ is isometric to a geodesic ball in the hyperbolic space of radius $3/\Lambda$, $\mathbb{H}^3_{\Lambda/3}$.
\end{theorem}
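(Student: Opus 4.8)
The plan is to mirror the proof of Theorem~\ref{rigiditywillmore}, carrying the cosmological term through the estimate; the hypothesis $\lambda\ge-\tfrac{2}{3}\Lambda$ is precisely what makes the argument close. First I would multiply the Willmore equation~(\ref{willmoreequa2}) by $H^{-1}$, integrate over $\Sigma$, and integrate the term $\Delta^\Sigma H/H$ by parts to obtain
\[
\lambda|\Sigma|+\int_\Sigma|\nabla^\Sigma\log H|^2+|\mathring{B}|^2+\mathrm{Ric}^M(\nu,\nu)\,d\mu=0.
\]
Eliminating $\mathrm{Ric}^M(\nu,\nu)$ with the Gauss equation $\mathrm{Sc}^\Sigma=\mathrm{Sc}^M-2\mathrm{Ric}^M(\nu,\nu)+\tfrac{1}{2}H^2-|\mathring{B}|^2$, and then using Gauss-Bonnet (in the form $\int_\Sigma\tfrac{1}{2}\mathrm{Sc}^\Sigma\,d\mu\le4\pi$, with equality iff $\Sigma$ is a sphere) together with the dominant energy condition $\mathrm{Sc}^M\ge2\Lambda$, this yields
\[
\lambda|\Sigma|+\int_\Sigma|\nabla^\Sigma\log H|^2+\tfrac{1}{2}|\mathring{B}|^2+\tfrac{1}{4}H^2\,d\mu\le4\pi-\Lambda|\Sigma|.
\]
Isolating $\tfrac{1}{4}\int_\Sigma H^2\,d\mu$ on the left, adding $\tfrac{1}{3}\Lambda|\Sigma|$ to both sides, and using $\lambda+\tfrac{2}{3}\Lambda\ge0$ gives $\tfrac{1}{4}\int_\Sigma H^2+\tfrac{4}{3}\Lambda\,d\mu\le4\pi$, i.e.\ $\mathcal{E}_\Lambda(\Sigma)\ge0$.

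For the rigidity part, suppose $\mathcal{E}_\Lambda(\Sigma)=0$. Then every inequality above must be saturated, which forces $H$ to be constant, $\mathring{B}\equiv0$, $\Sigma$ to be a topological sphere, $\mathrm{Sc}^M=2\Lambda$ along $\Sigma$, and $\lambda=-\tfrac{2}{3}\Lambda$. Feeding $\mathring{B}=0$ and $H$ constant back into~(\ref{willmoreequa2}) gives $\mathrm{Ric}^M(\nu,\nu)=-\lambda=\tfrac{2}{3}\Lambda$, and the Gauss equation then shows that $\Sigma$ has constant Gauss curvature $K_\Sigma=\tfrac{\Lambda}{3}+\tfrac{H^2}{4}$; in particular $K_\Sigma>\tfrac{\Lambda}{3}$, and $K_\Sigma>0$ because $\Sigma$ is a sphere (by Gauss-Bonnet).

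It remains to identify the enclosed region. If $\Lambda=0$ this is precisely the situation of Corollary~\ref{positivemasshaw} (equivalently Theorem~\ref{rigiditywillmore} with connected boundary), so I may assume $\Lambda<0$. Then the hypotheses of the hyperbolic Brown-York rigidity, Theorem~\ref{rigidityBYhyper}, are met: $\mathrm{Sc}^M\ge2\Lambda$, $\Sigma$ is a sphere with $K_\Sigma>\tfrac{\Lambda}{3}$, and $H>0$. Since $(\Sigma,g_\Sigma)$ is a round sphere of constant positive curvature, its convex isometric embedding into $\mathbb{H}^3_{\Lambda/3}$ — unique by the hyperbolic analogue of Theorem~\ref{WeyNi} — is a geodesic sphere, for which one has the elementary identity $H_0^2=4\big(K_\Sigma-\tfrac{\Lambda}{3}\big)=H^2$, hence $H_0=H$. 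Therefore the Brown-York energy $\int_\Sigma(H_0-H)\,d\mu$ vanishes, and the rigidity part of Theorem~\ref{rigidityBYhyper} forces $\Omega$ to be isometric to a region of $\mathbb{H}^3_{\Lambda/3}$; since $\partial\Omega=\Sigma$ is connected and realized as a geodesic sphere, $\Omega$ is a geodesic (hyperbolic) ball.

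The only genuinely new ingredient compared with the flat case is this last step, namely producing the equality $H_0=H$. It rests on the mean-curvature/Gauss-curvature identity $H_0^2=4(K_\Sigma-\Lambda/3)$ for geodesic spheres of $\mathbb{H}^3_{\Lambda/3}$ together with a clean statement of the uniqueness of convex isometric embeddings of positively curved spheres into $\mathbb{H}^3$ (Pogorelov). Once these are in hand, the rest is a routine transcription of the proof of Theorem~\ref{rigiditywillmore}, with the bookkeeping of the $\Lambda$-terms being the only subtlety.
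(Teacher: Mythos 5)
Your proposal is correct and follows essentially the same route as the paper: multiply the Willmore equation by $H^{-1}$, integrate by parts, use the Gauss equation and Gauss--Bonnet, exploit $\lambda\ge-\tfrac{2}{3}\Lambda$ and $\mathrm{Sc}^M\ge2\Lambda$ for nonnegativity, and in the equality case force $\lambda=-\tfrac{2}{3}\Lambda$, $\mathring{B}=0$, $H$ constant, $\mathrm{Ric}^M(\nu,\nu)=\tfrac{2}{3}\Lambda$ before invoking the hyperbolic Brown--York rigidity (Theorem~\ref{rigidityBYhyper}). The only difference is that you spell out the identification $H_0=H$ via the umbilic Gauss relation $H_0^2=4(K_\Sigma-\Lambda/3)$ for geodesic spheres in $\mathbb{H}^3_{\Lambda/3}$, a step the paper leaves implicit.
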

\begin{proof}
We proceed as in the proof of Theorem \ref{rigiditywillmore}.
We  multiply equation (\ref{willmoreequa2}) by \( H^{-1} \), integrate  by parts, and use Gauss equation getting. 
\[
\lambda |\Sigma | + \int_{\Sigma} |\nabla^{\Sigma}  \log H|^2 + \frac{1}{4} H^2 + \frac{1}{2} |\mathring{B}|^2  \, d\mu \leq 4\pi - \int_{\Sigma } \frac{1}{2} \, \mathrm{Sc}^M \, d\mu.
\]   

First we want to see that $\int_{\Sigma } H^2+\frac{4}{3} \Lambda d\mu  \leq 16 \pi   $, adding and subtracting $\Lambda$ we obtain  
\[
(\lambda +\Lambda )|\Sigma |   + \int_{\Sigma} |\nabla^{\Sigma}  \log H|^2 + \frac{1}{4} H^2 + \frac{1}{2} |\mathring{B}|^2  \, d\mu \leq 4\pi - \int_{\Sigma } \frac{1}{2} \, (\mathrm{Sc}^M -2 \Lambda )\, d\mu.
\]   
and using that $ \lambda  \geq -\frac{2}{3}\Lambda $ and  $\mathrm{Sc}^M \geq 2 \Lambda  $ we have
\[
 \frac{1}{4}\int_{\Sigma } H^2+\frac{1}{3} \Lambda d\mu \leq (\lambda +\Lambda )|\Sigma |   + \int_{\Sigma} |\nabla^{\Sigma}  \log H|^2 + \frac{1}{4} H^2 + \frac{1}{2} |\mathring{B}|^2  \, d\mu \leq 4\pi .
\]   
and with this, we have the nonnegativity. 

Now, if $\mathcal{E}_\Lambda(\Sigma)=0$, then   $\int_{\Sigma } H^2+\frac{4}{3} \Lambda d\mu =16 \pi   $ and using $\mathrm{Sc}^M \geq 2 \Lambda  $ we obtain 
$$\lambda |\Sigma | + \int_{\Sigma} |\nabla^{\Sigma}  \log H|^2  + \frac{1}{2} |\mathring{B}|^2  \, d\mu \leq -\frac{2}{3}\Lambda |\Sigma| $$
 Now as $\lambda  \geq -\frac{2}{3}\Lambda$ 
 this implies that $\lambda  = -\frac{2}{3}\Lambda$, $|\mathring{B}|  = 0 $, $\mathrm{Sc}^M_{| \Sigma} = 2 \Lambda$ and $H$ is constant. Then by (\ref{willmoreequa2}) we also have $ \mathrm{Ric}^M(\nu, \nu)=\frac{2}{3}\Lambda$ along $\Sigma $. By the vanishing Hawking energy we have $H^2 = -\frac{4}{3}\Lambda  + 16 \pi |\Sigma|^{-1} $, in particular, by the Gauss equation we have $ \mathrm{Sc}^{\Sigma} = \frac{2}{r^2}$ where $r$ is the area radius of $\Sigma$.  Now with this, we can apply the rigidity result of Theorem \ref{rigidityBYhyper} to get the result.    
\end{proof}

For the case $\Lambda > 0$, the "zero-energy" model space to compare for rigidity  will be the standard round sphere  $\mathbb{S}^n(r)$ of radius $r$ and we will denote by $\mathbb{S}_+^n(r):=\{x\in \mathbb{R}^{n+1} : \, |x|=r, \, x_{n+1} \geq 0  \}$ its upper hemisphere. Note that the Hawking energy $ \mathcal{E}_{\Lambda}$ with cosmological constant $\Lambda > 0$ is the "most negative" Hawking energy from the ones we have considered so far; therefore, we will need stronger assumptions to obtain rigidity results.  First, we introduce the following results by Hang and Wang, which lies at the core of the proofs of rigidity for $\Lambda >0$. 
\begin{theorem}[Hang-Wang {\cite[Theorem 3]{hang2009rigidity}}]\label{hangwang3}
Let $(M,g)$ be a smooth compact $n$-dimensional Riemannian manifold with boundary $\partial M = \Sigma$, and let $\Omega \subset \mathbb{S}^n_+$ be a compact domain with smooth boundary in the open hemisphere. Suppose:
\begin{itemize}
    \item $\mathrm{Ric}^M \ge (n-1)g$,
    \item there is an isometric embedding $\iota \colon (\Sigma, g_\Sigma) \to \partial\Omega$,
    \item $B \ge \iota^*B_0$, where $B$ is the second fundamental form of $\Sigma$ in $M$ and $B_0$ is the second fundamental form of $\partial\Omega$ in $\mathbb{S}^n_+(1)$.
\end{itemize}
Then $(M,g)$ is isometric to $(\Omega, g_{\mathbb{S}^n_+})$.
\end{theorem}
In this setting,  Melo proved the following rigidity result for stable constant mean curvature surfaces.
\begin{theorem}[{\cite[Theorem 1.1, Theorem 1.2]{melo2024hawking}}] Let $(M,g)$ be a complete $3$-dimensional Riemannian manifold with scalar curvature $\mathrm{Sc}^M \ge 6$, and let $\Sigma\subset M$ be a stable constant mean curvature sphere satisfying $ \int_\Sigma (H^2+4)\,d\mu = 16\pi$.  Assume that either \begin{enumerate} \item[(i)] $\Sigma$ admits an even symmetry, or \item[(ii)] the Gauss curvature $K_\Sigma$ is sufficiently $\mathcal C^0$-close to $\frac{4\pi}{|\Sigma|}$. \end{enumerate} Then $\Sigma$ is isometric to the round sphere of radius  \( \sqrt{|\Sigma|/ 4\pi} \). Moreover, if $\mathrm{Ric}^M\ge 2g$, then the mean-convex region $\Omega\subset M$ bounded by $\Sigma$ is isometric to a geodesic ball in $\mathbb S^3_+$.
\end{theorem}
In the case of Willmore surfaces we have the following result.
\begin{theorem}\label{positivecosmo}
Let $(M,g)$ be a $3$-dimensional Riemannian manifold, and let $\Lambda > 0$ be a constant. Let $\Sigma \subset M$ be a closed connected area-constrained Willmore surface with non-negative mean curvature ($H \ge 0$) that is not identically zero, and with Lagrange parameter $\lambda \ge -\frac{2}{3}\Lambda$. 
\begin{itemize}
     \item[(i)] If $H>0$ and $\mathrm{Sc}^M\ge 2\Lambda$ along $\Sigma$, then \begin{equation}\label{spheinequality}
        \int_\Sigma H^2 d\mu \le 16\pi - \frac{4}{3}\Lambda|\Sigma|.
    \end{equation}
    Equivalently, the Hawking energy with cosmological constant is nonnegative. 
     
     \item[(ii)] If $ \mathrm{Ric}^M(\nu,\nu)\ge \frac23\Lambda$ along $\Sigma$, where $\nu$ is the outward unit normal, then $\Sigma$ is totally umbilic, $H$ is a strictly positive constant, $ \mathrm{Ric}^M(\nu,\nu)=\frac23\Lambda$ on $\Sigma$,  and $\lambda=-\frac23\Lambda$.
    
    \item[(iii)] Suppose additionally  that $\Sigma=\partial\Omega$ is the boundary of a relatively compact domain $\Omega\subset M$. If $\mathrm{Ric}^M \ge \frac{2}{3}\Lambda g$ on $\Omega$ and equality holds in \eqref{spheinequality}, then $\Sigma$ is isometric to a round sphere and $\Omega$ is globally isometric to a  geodesic ball in the round sphere $\mathbb{S}^3(R)$, where $R = \sqrt{3/\Lambda}$.
\end{itemize}
\end{theorem}
\begin{proof}
$(i)$ Note that the proof of the inequality $\int_\Sigma H^2 d\mu \le 16\pi - \frac{4}{3}\Lambda|\Sigma|$ in Theorem \ref{rigiditywillhyper} does not depend on the sign of $\Lambda$; therefore, under these assumptions the inequality also holds.

$(ii)$ Since $\Sigma$ is not minimal,  there is a point $p \in \Sigma$ such that $H(p)\neq 0$. Integrating the Willmore equation and using the bound on the Ricci tensor and the Lagrange parameter, we have
$$0\geq \int_\Sigma H\Big(\lambda + \frac{2}{3}\Lambda +|\mathring{B}|^2 \Big) d\mu\geq 0.$$
As the integrand is nonnegative, we have $H(\lambda + \frac{2}{3}\Lambda +|\mathring{B}|^2 )=0$. Since $H(p) > 0$, evaluating at $p$ gives $\lambda = - \frac{2}{3}\Lambda$ and $|\mathring{B}|^2(p)=0$. This implies $H |\mathring{B}|^2=0$ identically. Putting this information back into the Willmore equation and integrating, we have 
$$0= \int_\Sigma H\Big( \mathrm{Ric}^M(\nu, \nu) - \frac{2}{3}\Lambda \Big) d\mu.$$
Again, as the integrand is nonnegative, $H( \mathrm{Ric}^M(\nu, \nu) - \frac{2}{3}\Lambda )=0$. The Willmore equation then reduces to $\Delta^\Sigma H =0$. Since $\Sigma$ is compact without boundary, the maximum principle dictates that $H$ is a positive constant. Because $H$ is a strictly positive constant, it immediately follows that $\mathrm{Ric}^M(\nu, \nu) = \frac{2}{3}\Lambda$ and $|\mathring{B}|^2=0$ everywhere on $\Sigma$. 

 $(iii)$ Substituting these vanishing terms into the Gauss equation gives:
\begin{equation}\label{Gaussequalam}
    \mathrm{Sc}^{\Sigma} = \mathrm{Sc}^{M} - 2\mathrm{Ric}^M(\nu, \nu) + \frac{1}{2}H^2 - |\mathring{B}|^2 = \mathrm{Sc}^{M} - \frac{4}{3}\Lambda + \frac{1}{2}H^2.
\end{equation}
Assuming equality holds in \eqref{spheinequality}, the vanishing Hawking energy condition $\mathcal{E}_\Lambda(\Sigma) = 0$ is equivalent to $\frac{1}{2}H^2 |\Sigma| = 8\pi - \frac{2}{3}\Lambda|\Sigma|$. Integrating \eqref{Gaussequalam} over $\Sigma$ and applying the Gauss-Bonnet theorem yields:
$$8\pi  \geq \int_\Sigma \mathrm{Sc}^{\Sigma} d\mu = \int_\Sigma \Big(\mathrm{Sc}^{M} -  \frac{4}{3}\Lambda\Big) d\mu + \frac{1}{2}H^2|\Sigma| = 8\pi + \int_\Sigma (\mathrm{Sc}^{M} - 2\Lambda) \, d\mu.$$
Since $\mathrm{Sc}^{M} \geq 2\Lambda$, the integral on the right is non-negative, implying that $8\pi  =\int_\Sigma \mathrm{Sc}^{\Sigma} d\mu $ (so $\Sigma$ is a topological sphere) and $\mathrm{Sc}^{M} = 2\Lambda$ on $\Sigma$. Returning to the Gauss equation, $\mathrm{Sc}^{\Sigma} = \frac{2}{3}\Lambda + \frac{1}{2}H^2$ is a positive constant. By the Weyl-Nirenberg-Pogorelov theorem \ref{WeyNi}, $\Sigma$ embeds isometrically into Euclidean space as a round sphere of area radius $r = \sqrt{|\Sigma|/(4\pi)}$. Thus, $\mathrm{Sc}^\Sigma = \frac{2}{r^2}$.

Let $R = \sqrt{3/\Lambda}$ be the radius of the reference space form $\mathbb{S}^3(R)$. Substituting $\frac{2}{3}\Lambda = \frac{2}{R^2}$ into our deduced Gauss equation gives $\frac{2}{r^2} = \frac{2}{R^2} + \frac{1}{2}H^2$, which yields $H^2 = 4\left(\frac{1}{r^2} - \frac{1}{R^2}\right),$ which agrees with the mean curvature of a geodesic sphere of area radius $r$ in a sphere of radius $R$.
Since $H > 0$, this implies $r < R$. Because $\mathring{B} = 0$, $\Sigma$ is totally umbilic with a non-negative second fundamental form ($B \ge 0$). 

Rescaling the manifold by $\tilde{g} := R^{-2}g$ normalizes the ambient Ricci bound to $\mathrm{Ric}^{\tilde{g}} \ge 2\tilde{g}$. The boundary $\Sigma$ becomes a round sphere of intrinsic radius $\tilde{r} = r/R < 1$. Let $\Omega_0 \subset \mathbb{S}^3_+$ be a  geodesic ball in the open unit hemisphere whose boundary has intrinsic radius $\tilde{r}$. The squared mean curvature of $\partial\Omega_0$ is $\tilde{H}_0^2 = 4(1/\tilde{r}^2 - 1)$. Under our rescaling, $\tilde{H}^2 = R^2 H^2 = 4 (1/\tilde{r}^2 - 1) = \tilde{H}_0^2$. Since $H > 0$, the signs match, yielding $\tilde{H} = \tilde{H}_0$. All hypotheses of Theorem \ref{hangwang3} are satisfied, meaning $(\Omega, \tilde{g})$ is globally isometric to $\Omega_0 \subset \mathbb{S}^3_+$. Undoing the scaling, $(\Omega, g)$ is isometric to a  geodesic ball in $\mathbb{S}^3(R)$.
\end{proof}
The preceding theorem treats the nonminimal case $H>0$. The limiting case $H\equiv 0$ leads instead to full hemisphere rigidity. We postpone this endpoint case to the higher-dimensional section (see Theorem \ref{thm:minimal_hemisphere_rigidity_higher}), where the argument is naturally formulated for general hypersurfaces, and where we prove that the vanishing of the corresponding spherical Hawking energy forces the enclosed region to be a hemisphere.

Analogous to the previous section, by applying Definition \ref{electrichar} to this setting, we can also consider the general charged cosmological constant case. In this setting, the dominant energy condition reduces to $\mathrm{Sc}^M \ge 2 \Lambda +2|E|^2$. We then have directly the following unified positivity and rigidity result.

\begin{corollary}
    Let $(M,g,E)$ be a $3$-dimensional time-symmetric initial data for the Einstein-Maxwell equations with cosmological constant $\Lambda \in \mathbb{R}$ satisfying the dominant energy condition $\mathrm{Sc}^M \ge 2 \Lambda +2|E|^2$. Let $\Omega \subset M$ be a relatively compact domain whose smooth connected boundary $\Sigma = \partial \Omega$ is an area-constrained Willmore surface with positive mean curvature and Lagrange parameter $\lambda \ge -\frac{2}{3}\Lambda$. 
    
    Then the charged cosmological Hawking energy 
    \begin{equation}
        \mathcal{E}_{Q, \Lambda}(\Sigma) = \sqrt{\frac{|\Sigma|}{16\pi}} 
        \left( 
        1 + \frac{4\pi Q(\Sigma)^2}{|\Sigma|}  
        - \frac{1}{16\pi} \int_{\Sigma} \left( H^2 + \frac{4}{3} \Lambda \right) \, d\mu 
        \right)
    \end{equation}
    is nonnegative. Furthermore, if $\mathcal{E}_{Q, \Lambda}(\Sigma) = 0$, then $E=0$ on $\Omega$ and:
    \begin{itemize}
        \item[i)] If $\Lambda \le 0$, $\Omega$ is isometric to a geodesic ball in the hyperbolic space $\mathbb{H}^3_{\Lambda/3}$ (or Euclidean space if $\Lambda=0$).
        \item[ii)] If $\Lambda>0$ and, in addition, $\mathrm{Ric}^M\ge \frac{2}{3}\Lambda g$ on $\Omega$, $\Omega$ is isometric to a geodesic ball in the round sphere $\mathbb{S}^3(R)$ with $R = \sqrt{3/\Lambda}$.
    \end{itemize}
\end{corollary}
\begin{proof}
    Since the dominant energy condition ensures $\mathrm{Sc}^M \ge 2 \Lambda + 2|E|^2 \ge 2 \Lambda$, we immediately have $\mathcal{E}_{Q, \Lambda}(\Sigma) \ge \mathcal{E}_{\Lambda}(\Sigma) \ge 0$. The vanishing of $\mathcal{E}_{Q, \Lambda}(\Sigma)$ therefore implies the vanishing of the uncharged Hawking energy, $\mathcal{E}_{\Lambda}(\Sigma) = 0$. Applying the preceding rigidity results for the uncharged cases, $\Omega$ is isometric to the corresponding model region. Because these model spaces have constant scalar curvature $\mathrm{Sc}^M = 2\Lambda$, the dominant energy condition $\mathrm{Sc}^M \ge 2 \Lambda + 2|E|^2$ forces $2|E|^2 \le 0$, and hence $E=0$ on $\Omega$.
\end{proof}

\begin{remark}
    As mentioned in Remark \ref{chargeremark}, one could also consider a magnetic field $\mathcal{B}$, and the result would follow in a similar manner. Note that this particular variation of the Hawking energy was also considered in \cite{baltazar2023local, leandro2025sharp, sousa2023charged}, although the rigidity results presented there are quite different.
\end{remark}

\subsection{Higher dimensional case}\label{higherdimsec}

First, we need to note that in higher dimensions the equation characterizing  Willmore surfaces changes. When considering an $n-1$-dimensional hypersurface $\Sigma$ in an $n$-dimensional Riemannian manifold $(M,g)$, $\Sigma$ is an area-constrained Willmore hypersurface if it satisfies  
\begin{equation}\label{willmorehigherdim}
    0= \lambda H  +\Delta^\Sigma H- \frac{n-3}{2(n-1)}H^3 + H|\mathring{B}|^2+ H \mathrm{Ric}^M(\nu, \nu) 
\end{equation}
this comes directly by considering a variation of the Willmore functional, consider a general variation  \( \frac{\partial F}{\partial s} \Big|_{s=0} = \alpha \nu \) and $\int_{\Sigma} \alpha H \, d\mu = 0$ then 
\begin{equation}
   \frac{1}{2} \frac{\partial }{\partial s}  \int_{\Sigma_s}H^2d\mu = \int_{\Sigma_s}( - \Delta^\Sigma H- \frac{H^3}{(n-1)} - H|\mathring{B}|^2- H \mathrm{Ric}^M(\nu, \nu) +\frac{1}{2} H^3) \alpha \,
  d\mu 
\end{equation}
and we get the equation directly. The equation  is like the $2$-dimensional Willmore equation  but with the extra term $ \frac{n-3}{2(n-1)}H^3 $. Note also that because of this extra term, a round sphere in $\mathbb{R}^n$ is a Willmore surface with Lagrange parameter $\lambda= \frac{(n-3)(n-1)}{2r^2}$ where $r$ is the area radius of $\Sigma$.  

Motivated by the higher-dimensional Hawking energies considered in \cite{miao2017quasi} and \cite{pacheco2023constructing}, we define two natural generalized Hawking energies for an $(n-1)$-dimensional hypersurface subject to a cosmological constant $\Lambda$: 
\begin{equation}\label{energy_1_general}
\mathcal{E}_{n,\Lambda,1}(\Sigma) = \frac{1}{2(n-1)(n-2) \omega_{n-1}}\left( \frac{|\Sigma|}{\omega_{n-1}} \right)^{\frac{1}{n-1}}\int_{\Sigma}\left( \mathrm{Sc}^{\Sigma} - \frac{n-2}{n-1} H^2 - \frac{2(n-2)}{n}\Lambda \right)d\mu\end{equation}
and
\begin{equation}\label{energy_2_general}
\mathcal{E}_{n,\Lambda,2}(\Sigma) = \frac{1}{2} \left( \frac{|\Sigma|}{\omega_{n-1}} \right)^{\frac{n-2}{n-1}} \left(1  - \frac{1}{(n-1)^2 \omega_{n-1}} \left( \frac{\omega_{n-1}}{ |\Sigma|} \right)^{\frac{n-3}{n-1}} \int_{\Sigma } \left(H^2+\frac{2(n-1)}{n} \Lambda\right)d\mu \right).
\end{equation}
where $\omega_{n-1}$ is the volume of the $n-1$-dimensional round sphere. 
Note that both of them reduce to the Hawking energy in dimension $n=3$ in the case of a sphere and also satisfy several key features of the Hawking energy.  Note also that the second one can be seen as a natural generalization when thinking on the  Willmore functional.

First, we establish the nonnegativity of these energies under appropriate bounds on the Lagrange parameter.
\begin{theorem}\label{thm:unified_positivity}
Let $(M,g)$ be an $n$-dimensional Riemannian manifold ($n\geq 3$), and let $\Lambda \in \mathbb{R}$ be a constant. Let $\Sigma \subset M$ be a closed, area-constrained Willmore hypersurface with strictly positive mean curvature ($H > 0$) and parameter $\lambda$. Assume that the ambient scalar curvature satisfies $\mathrm{Sc}^M \ge 2\Lambda$ along $\Sigma$.
\begin{itemize}
    \item[$(i)$] If $\label{unified_bound_1}
    \lambda \geq \frac{n-3}{2(n-2)|\Sigma|} \int_{\Sigma} \mathrm{Sc}^{\Sigma} d\mu - \frac{n-1}{n}\Lambda$,
then $\mathcal{E}_{n,\Lambda,1}(\Sigma)\geq 0$.
    \item[$(ii)$] If $
    \lambda \geq \frac{1}{2|\Sigma|}\int_{\Sigma} \mathrm{Sc}^{\Sigma} d\mu - \frac{n-1}{2} \left( \frac{\omega_{n-1}}{ |\Sigma|} \right)^{\frac{2}{n-1}} - \frac{n-1}{n}\Lambda$,
then $\mathcal{E}_{n,\Lambda,2}(\Sigma)\geq 0$.
\end{itemize}
Furthermore, if either inequality is strict, the respective Hawking energy is strictly positive.
\end{theorem}
\begin{proof}
Because $H > 0$ strictly, we may multiply the Willmore equation \eqref{willmorehigherdim} by $H^{-1}$. Integrating the Laplacian term by parts and using the Gauss equation to substitute $\mathrm{Ric}^M(\nu, \nu)$ yields the fundamental integral identity:
\begin{equation}\label{willmoredived_unified}
\lambda |\Sigma | + \int_{\Sigma} |\nabla^{\Sigma} \log H|^2 + \frac{1}{2(n-1)} H^2 + \frac{1}{2} |\mathring{B}|^2 \, d\mu = \frac{1}{2} \int_{\Sigma } (\mathrm{Sc}^{\Sigma} - \mathrm{Sc}^M) \, d\mu.
\end{equation}
Applying the ambient scalar curvature bound $\mathrm{Sc}^M \ge 2\Lambda$, the right-hand side is strictly bounded above by $\frac{1}{2} \int_{\Sigma } \mathrm{Sc}^{\Sigma} d\mu - \Lambda|\Sigma|$. 

Assuming the first condition  for $\lambda$, we obtain
\begin{equation*}
\begin{split}
    \int_{\Sigma} \frac{1}{2(n-1)} H^2 \, d\mu &\leq \frac{1}{2} \int_{\Sigma} \mathrm{Sc}^{\Sigma} d\mu - \Lambda|\Sigma| - \lambda|\Sigma| \\
    &\leq \frac{n-2 - (n-3)}{2(n-2)} \int_{\Sigma} \mathrm{Sc}^{\Sigma} d\mu - \left(1 - \frac{n-1}{n}\right)\Lambda|\Sigma| \\
    &= \frac{1}{2(n-2)} \int_{\Sigma} \mathrm{Sc}^{\Sigma} d\mu - \frac{1}{n}\Lambda|\Sigma|.
\end{split}
\end{equation*}
Multiplying this inequality algebraically by $2(n-1)(n-2)/|\Sigma|$  implies $\mathcal{E}_{n,\Lambda,1}(\Sigma)\geq 0$.

If instead we assume the second condition  for $\lambda$, the same substitution yields
\begin{equation*}
\begin{split}
    \int_{\Sigma} \frac{1}{2(n-1)} H^2 \, d\mu &\leq \frac{1}{2} \int_{\Sigma} \mathrm{Sc}^{\Sigma} d\mu - \Lambda|\Sigma| - \lambda|\Sigma| \\
    &\leq \frac{n-1}{2} \left( \frac{\omega_{n-1}}{ |\Sigma|} \right)^{\frac{2}{n-1}}|\Sigma| - \frac{1}{n}\Lambda|\Sigma|.
\end{split}
\end{equation*}
Dividing this by $\frac{n-1}{2}$  recovers the bound $\mathcal{E}_{n,\Lambda,2}(\Sigma)\geq 0$, completing the proof.
\end{proof}
We now specialize the preceding result to the flat reference case \(\Lambda=0\). For simplicity, we write
$
\mathcal E_{n,i}:=\mathcal E_{n,0,i}$,  $i=1,2$. In this case the corresponding rigidity statement is the following.
\begin{theorem}\label{rigidityhighdim}
    Let $(M,g)$ be an $n$-dimensional Riemannian  manifold (with $n\geq 3$) with $\mathrm{Sc}^M \ge 0$, and let  $\Omega \subset M$ be a relatively compact domain whose smooth boundary  $\partial \Omega $  has finitely many components. Suppose each boundary component has positive mean curvature and admits an isometric embedding into \(\mathbb{R}^n\) as a convex hypersurface.  Suppose further that one of these components $\Sigma $ is an area-constrained Willmore surface with parameter $\lambda$ and that either  
    \begin{itemize}
    \item[$i)$]$\mathcal{E}_{n,1}(\Sigma)=0$ and $
        \lambda  \geq  \frac{n-3}{2(n-2)|\Sigma|}  
  \int_{\Sigma} 
 \mathrm{Sc}^{\Sigma} d\mu$,
     or
    \item[$ii)$] $\mathcal{E}_{n,2}(\Sigma)= 0$ and $
     \lambda  \geq  \frac{1}{2|\Sigma|}\int_{\Sigma} 
 \mathrm{Sc}^{\Sigma} d\mu  -\frac{n-1}{2} \left( \frac{\omega_{n-1}}{ |\Sigma|} \right)^{\frac{2}{n-1}}$.
\end{itemize}
Then $\partial \Omega$ is connected and isometric to a round sphere, and  $\Omega$ is isometric to a ball in $\mathbb{R}^{n}$.
\end{theorem}
\begin{proof}
Since $\Lambda=0$, Theorem \ref{thm:unified_positivity} applies under either of the two stated assumptions. Thus $\mathcal E_{n,i}(\Sigma)\ge 0$. If $\mathcal E_{n,i}(\Sigma)=0$, then the chain of inequalities in the proof of Theorem \ref{thm:unified_positivity} is saturated. In particular, the corresponding lower bound for $\lambda$ is attained, and the nonnegative terms in \eqref{willmoredived_unified} vanish. Hence $\nabla^\Sigma\log H=0$, $\mathring B=0$, and $\mathrm{Sc}^M=0$ along $\Sigma$. Therefore $H$ is constant and $\Sigma$ is totally umbilic.

We now show that $\Sigma$ has constant scalar curvature. Since $H$ is constant and $\mathring B=0$, the Willmore equation \eqref{willmorehigherdim}, divided by $H$, reduces to
\begin{equation}
0
=
\lambda
-
\frac{n-3}{2(n-1)}H^2
+
\mathrm{Ric}^M(\nu,\nu).
\end{equation}
Using also $\mathrm{Sc}^M=0$ and $\mathring B=0$, the Gauss equation gives
\begin{equation}
\mathrm{Sc}^{\Sigma}
=
-2\mathrm{Ric}^M(\nu,\nu)
+
\frac{n-2}{n-1}H^2.
\end{equation}
Combining the last two identities yields
$
\mathrm{Sc}^{\Sigma}
=
2\lambda
+
\frac{1}{n-1}H^2$. 
Thus $\mathrm{Sc}^{\Sigma}$ is constant. Since $\Sigma$ admits an isometric embedding into $\mathbb R^n$ as a convex hypersurface, Ros's theorem \cite{ros1988compact} implies that the Euclidean embedding is a round sphere.

Let $H_0$ denote the mean curvature of this Euclidean embedding, and let $r=\left(|\Sigma|/\omega_{n-1}\right)^{\frac{1}{n-1}}$ be its area radius. Since the Euclidean embedding is round, $H_0=\frac{n-1}{r}$ and
$\mathrm{Sc}^{\Sigma}=\frac{n-2}{n-1}H_0^2$.

It remains to show that $H=H_0$. If $\mathcal E_{n,1}(\Sigma)=0$, then, since $H$ and $\mathrm{Sc}^{\Sigma}$ are constant, the definition of $\mathcal E_{n,1}$ gives $
\mathrm{Sc}^{\Sigma}
=
\frac{n-2}{n-1}H^2$. 
Comparing with $\mathrm{Sc}^{\Sigma}=\frac{n-2}{n-1}H_0^2$  gives $H^2=H_0^2$.  If instead $\mathcal E_{n,2}(\Sigma)=0$, then
$
H^2
=
(n-1)^2
\left(
\frac{\omega_{n-1}}{|\Sigma|}
\right)^{\frac{2}{n-1}}
=
H_0^2$. In both cases, since $H>0$ and $H_0>0$, we obtain $H=H_0$. The rigidity statement in Theorem \ref{highershitam} now applies. Hence $\partial\Omega$ is connected, $\Sigma$ is a round sphere, and $\Omega$ is isometric to a Euclidean ball.
\end{proof}
Note that this result depends on Theorem \ref{highershitam}, which, in turn, relies on the positive mass theorem in higher dimensions.

So far we have two different conditions for $\mathcal{E}_{n,i}$, but we could also find a common condition for $\lambda$ and $\mathrm{Sc}^{\Sigma} $ so that the previous two theorems hold.
\begin{corollary}
Let $(M,g)$ be a $n$-dimensional Riemannian  manifold (with $n\geq 3$) with nonnegative scalar curvature. Let $\Sigma$ be an area-constrained Willmore surface satisfying 
    \begin{equation}
        H>0, \quad \lambda\geq  \frac{(n-3)(n-1)}{2} \left( \frac{\omega_{n-1}}{ |\Sigma|} \right)^{\frac{2}{n-1}}  \quad \text{and} \quad   \frac{1}{|\Sigma| } \int_{\Sigma} 
\mathrm{Sc}^{\Sigma} d\mu\leq (n-1)(n-2) \left( \frac{\omega_{n-1}}{ |\Sigma|} \right)^{\frac{2}{n-1}}. 
    \end{equation}
Then $\Sigma$    satisfies $ \mathcal{E}_{n,i}(\Sigma) \geq 0$ for $i=1,2$.  If additionally  $\Sigma $ is the boundary of a relatively compact domain that can be isometrically embedded in \(\mathbb{R}^n\) as a convex hypersurface. And  either $\mathcal{E}_{n,i}(\Sigma) = 0 $ for $i=1$ or $i=2$ then   $\Omega$ is isometric to a ball in $\mathbb{R}^{n}$ and $\Sigma$ is isometric to a round sphere.
\end{corollary}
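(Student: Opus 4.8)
The plan is to observe that the final corollary is a pure bookkeeping consequence of Theorems \ref{positivhigher} and \ref{rigidityhighdim}: the single pair of hypotheses imposed on $\lambda$ and on the total scalar curvature of $\Sigma$ is engineered precisely so that it implies, simultaneously, the $\lambda$-conditions needed in those two theorems for both $i=1$ and $i=2$. So the proof is essentially a short chain of elementary inequalities followed by two invocations.

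First I would fix notation, writing $a=(\omega_{n-1}/|\Sigma|)^{2/(n-1)}$ and $s=\tfrac{1}{|\Sigma|}\int_\Sigma \mathrm{Sc}^\Sigma\,d\mu$, so that the hypotheses read $H>0$, $\lambda\ge\tfrac{(n-3)(n-1)}{2}a$ and $s\le(n-1)(n-2)a$. For the $\mathcal{E}_{n,1}$-bound, Theorem \ref{positivhigher} asks for $\lambda\ge\tfrac{n-3}{2(n-2)}s$, which holds because $\tfrac{n-3}{2(n-2)}s\le\tfrac{n-3}{2(n-2)}(n-1)(n-2)a=\tfrac{(n-3)(n-1)}{2}a\le\lambda$. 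For the $\mathcal{E}_{n,2}$-bound, Theorem \ref{positivhigher} asks for $\lambda\ge\tfrac{s}{2}-\tfrac{n-1}{2}a$, which holds because $\tfrac{s}{2}-\tfrac{n-1}{2}a\le\tfrac{(n-1)(n-2)}{2}a-\tfrac{n-1}{2}a=\tfrac{(n-1)(n-3)}{2}a\le\lambda$. Hence Theorem \ref{positivhigher} yields $\mathcal{E}_{n,i}(\Sigma)\ge0$ for $i=1,2$, which is the first assertion.

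For the rigidity assertion, I would then point out that the two inequalities just verified are exactly conditions \eqref{condilagra1} and \eqref{condilagra2} of Theorem \ref{rigidityhighdim}. Under the extra hypothesis that $\Sigma=\partial\Omega$ embeds isometrically into $\mathbb R^n$ as a convex hypersurface and has $H>0$, all hypotheses of Theorem \ref{rigidityhighdim} are met, with $\partial\Omega$ consisting of the single component $\Sigma$. Therefore, if $\mathcal{E}_{n,1}(\Sigma)=0$ (case $i$) or $\mathcal{E}_{n,2}(\Sigma)=0$ (case $ii$), Theorem \ref{rigidityhighdim} gives that $\partial\Omega$ is connected and isometric to a round sphere and $\Omega$ is isometric to a ball in $\mathbb R^n$, as claimed.

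There is no real obstacle here; the only points deserving a sentence of care are: (a) checking that the inequalities degenerate consistently in the borderline case $n=3$, where the $\lambda$-bound becomes $\lambda\ge0$ and the scalar-curvature condition becomes $\int_\Sigma\mathrm{Sc}^\Sigma\,d\mu\le8\pi$ (i.e. $\Sigma$ a topological sphere), so that the corollary reduces consistently to Theorem \ref{rigiditywillmore} and Corollary \ref{positivemasshaw}; and (b) confirming that "$\Sigma$ bounds a domain convexly embeddable in $\mathbb R^n$" together with $H>0$ furnishes precisely the convex-embedding and positive-mean-curvature hypotheses on boundary components that Theorem \ref{rigidityhighdim} (hence Theorem \ref{highershitam}) requires, so no further regularity or topological assumption is needed.
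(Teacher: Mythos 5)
Your proposal is correct and is exactly the argument the paper intends: the corollary is stated without proof as an immediate consequence of Theorems \ref{positivhigher} and \ref{rigidityhighdim}, and your two one-line inequality checks ($\tfrac{n-3}{2(n-2)}s\le\tfrac{(n-3)(n-1)}{2}a\le\lambda$ and $\tfrac{s}{2}-\tfrac{n-1}{2}a\le\tfrac{(n-1)(n-3)}{2}a\le\lambda$) are precisely the verification needed to see that the single hypothesis implies both Lagrange-parameter conditions \eqref{condilagra1} and \eqref{condilagra2} simultaneously.
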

Similar to Corollary \ref{positivemasshaw}, one can produce a similar positive mass theorem for Willmore surfaces in higher dimensions, We can also consider the case with charge. In higher dimensions, the dominant energy condition for a charged manifold is given by $\mathrm{Sc}^M \geq (n-1)(n-2)|E|^2 $, and we can generalize the previous Hawking energies to   
\begin{equation}
\begin{split}
 \mathcal{E}_{n,Q,1}(\Sigma)=   \frac{1}{2(n-1)(n-2) \omega_{n-1}} 
\left( \frac{|\Sigma|}{\omega_{n-1}} \right)^{\frac{1}{n-1}}
\int_{\Sigma} 
\Big(& \mathrm{Sc}^{\Sigma} + (n-1)(n-2) \left( \frac{\omega_{n-1}}{ |\Sigma|} \right)^2 Q(\Sigma)^2\\
& \,- \frac{n-2}{n-1} H^2 \Big)
d\mu
\end{split}
\end{equation}
and 
\begin{equation}
\mathcal{E}_{n,Q,2}(\Sigma) =    \frac{1}{2} \left( \frac{|\Sigma|}{\omega_{n-1}} \right)^{\frac{n-2}{n-1}} \left(1 + Q(\Sigma)^2 \left( \frac{\omega_{n-1}}{ |\Sigma|} \right)^{\frac{2(n-2)}{n-1}} - \frac{1}{(n-1)^2 \omega_{n-1}} \left( \frac{\omega_{n-1}}{ |\Sigma|} \right)^{\frac{n-3}{n-1}} \int_{\Sigma }H^2  d\mu \right),
\end{equation}
 which was already derived in \cite{pacheco2023constructing}. Then we have, as a direct consequence of Theorem \ref{highershitam} the following:
 \begin{corollary}
    Let $(M,g,E)$ be a $n$-dimensional (with $n\geq 3$) time-symmetric initial data for the Einstein-Maxwell equations  which satisfies the dominant energy condition $ \mathrm{Sc}^M \geq (n-1)(n-2) |E|^2$,  and let  $\Omega \subset M$ be a relatively compact domain whose smooth boundary  $\partial \Omega $  has finitely many components. Suppose each boundary component has positive mean curvature and admits an isometric embedding into \(\mathbb{R}^n\) as a convex hypersurface. Suppose further that one of these components $\Sigma$ is an area-constrained Willmore and that either
    \begin{itemize}
    \item[$i)$]  $\mathcal{E}_{n,Q,1}(\Sigma)=0$ and  $
        \lambda  \geq  \frac{n-3}{2(n-2)|\Sigma|}  
  \int_{\Sigma} 
 \mathrm{Sc}^{\Sigma} d\mu$, or
    \item[$ii)$]  $\mathcal{E}_{n,Q,2}(\Sigma)= 0$ and   $
     \lambda  \geq  \frac{1}{2|\Sigma|}\int_{\Sigma} 
 \mathrm{Sc}^{\Sigma} d\mu  -\frac{n-1}{2} \left( \frac{\omega_{n-1}}{ |\Sigma|} \right)^{\frac{2}{n-1}}$.
\end{itemize}
Then  $\partial \Omega$ is connected and isometric to a round sphere, $\Omega$ is isometric to a Euclidean ball in $\mathbb{R}^n$, and $ E$ vanishes  on $\Omega$.  
 \end{corollary}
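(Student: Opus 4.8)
The plan is to obtain the statement as a purely formal consequence of the uncharged rigidity theorem \ref{rigidityhighdim} and the uncharged positivity theorem \ref{positivhigher}, exploiting that the charge enters both generalized Hawking energies only through a manifestly nonnegative term. First I would note that the dominant energy condition $\mathrm{Sc}^M \geq (n-1)(n-2)|E|^2$ forces $\mathrm{Sc}^M \geq 0$, so the scalar-curvature hypotheses of Theorems \ref{positivhigher} and \ref{rigidityhighdim} are satisfied; the remaining hypotheses of the corollary (relatively compact $\Omega$, finitely many boundary components each of positive mean curvature admitting a convex isometric embedding into $\mathbb{R}^n$, one component $\Sigma$ an area-constrained Willmore surface with Lagrange parameter obeying the respective bound \eqref{condilagra1} or \eqref{condilagra2}) are exactly those used in those two theorems.

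Next I would record the elementary comparison $\mathcal{E}_{n,Q,i}(\Sigma) = \mathcal{E}_{n,i}(\Sigma) + c_{n,i}(|\Sigma|)\,Q(\Sigma)^2$ for $i=1,2$, where $c_{n,i}(|\Sigma|)>0$ is the explicit positive constant read off from the definitions: for $i=1$ one uses that $Q(\Sigma)$ is constant along $\Sigma$ to pull it out of the surface integral, obtaining $c_{n,1}=\tfrac{\omega_{n-1}}{2|\Sigma|}(|\Sigma|/\omega_{n-1})^{1/(n-1)}$, while for $i=2$ one gets $c_{n,2}=\tfrac12(\omega_{n-1}/|\Sigma|)^{(n-2)/(n-1)}$. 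By Theorem \ref{positivhigher}, under the assumed Lagrange bound the uncharged energy satisfies $\mathcal{E}_{n,i}(\Sigma)\geq 0$; hence the hypothesis $\mathcal{E}_{n,Q,i}(\Sigma)=0$ together with $c_{n,i}Q(\Sigma)^2\geq 0$ forces simultaneously $\mathcal{E}_{n,i}(\Sigma)=0$ and $Q(\Sigma)=0$.

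Finally, with $\mathcal{E}_{n,i}(\Sigma)=0$ and the same Lagrange bound in hand, Theorem \ref{rigidityhighdim} applies verbatim and yields that $\partial\Omega$ is connected and isometric to a round sphere and that $\Omega$ is isometric to a ball in $\mathbb{R}^n$ (recalling that this step invokes the dimension-free form of the higher-dimensional Brown--York rigidity, Theorem \ref{highershitam} in the improved version of Remark \ref{ramarkhierdim}). Since $\Omega$ is then isometric to a Euclidean ball, $\mathrm{Sc}^M\equiv 0$ on $\Omega$, and the dominant energy condition gives $0=\mathrm{Sc}^M\geq (n-1)(n-2)|E|^2\geq 0$ on $\Omega$, whence $E\equiv 0$ there, which completes the proof.

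There is no genuine analytic obstacle here; the work is bookkeeping. The two points that require care are (a) verifying the sign and exact shape of the charge-correction constants $c_{n,i}$, so that the comparison $\mathcal{E}_{n,Q,i}\geq\mathcal{E}_{n,i}$ holds with equality precisely when $Q(\Sigma)=0$, and (b) checking that the Lagrange-parameter assumptions in the corollary are literally the ones needed in both Theorem \ref{positivhigher} and Theorem \ref{rigidityhighdim}, which they are by construction. If anything could be called the ``hard part,'' it is only ensuring that one does invoke Theorem \ref{positivhigher} (not merely $\mathrm{Sc}^M\geq0$) to rule out a negative uncharged energy being cancelled by a positive charge term.
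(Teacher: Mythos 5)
Your proposal is correct and follows essentially the same route the paper intends: the paper treats this as a direct consequence of the uncharged higher-dimensional rigidity (Theorem \ref{rigidityhighdim}, built on Theorem \ref{highershitam}), exactly as in the three-dimensional charged case (Corollary \ref{coroelec}), by observing that the charge enters only through a nonnegative additive term so that vanishing of $\mathcal{E}_{n,Q,i}$ together with $\mathcal{E}_{n,i}\geq 0$ forces $\mathcal{E}_{n,i}=0$ and $Q(\Sigma)=0$. Your computation of the positive constants $c_{n,i}$ and the final step deducing $E\equiv 0$ from $\mathrm{Sc}^M\equiv 0$ on the Euclidean ball via the dominant energy condition are exactly the bookkeeping the paper leaves implicit.
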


 Finally, we address the rigidity of these energies in the spherical setting, where the cosmological constant is strictly positive ($\Lambda > 0$). In this regime, the global rigidity of the domain is governed by the Hang-Wang theorems rather than the Positive Mass Theorem, necessitating the stronger ambient Ricci lower bound.
\begin{theorem}\label{thm:rigidityhighsphdim}
Let $(M,g)$ be an $n$-dimensional Riemannian manifold ($n\geq 3$), and let $\Lambda > 0$ be a constant. Let $\Omega \subset M$ be a relatively compact domain satisfying $\mathrm{Ric}^M \ge \frac{2}{n}\Lambda g$, whose boundary $\Sigma = \partial\Omega$ is a connected, area-constrained Willmore hypersurface with strictly positive mean curvature ($H > 0$) and parameter $\lambda$. Suppose $\Sigma$ admits an isometric embedding into $\mathbb{R}^n$ as a convex hypersurface. If either:
\begin{itemize}
    \item[$i)$] $\mathcal{E}_{n,\Lambda,1}(\Sigma)=0$ and $ \lambda \geq \frac{n-3}{2(n-2)|\Sigma|} \int_{\Sigma} \mathrm{Sc}^{\Sigma} d\mu - \frac{n-1}{n}\Lambda$, or
    \item[$ii)$] $\mathcal{E}_{n,\Lambda,2}(\Sigma)= 0$ and $
        \lambda \geq \frac{1}{2|\Sigma|}\int_{\Sigma} \mathrm{Sc}^{\Sigma} d\mu - \frac{n-1}{2} \left( \frac{\omega_{n-1}}{ |\Sigma|} \right)^{\frac{2}{n-1}} - \frac{n-1}{n}\Lambda$.
\end{itemize}
Then $\Sigma$ is intrinsically a round sphere, and $\Omega$ is globally isometric to a  geodesic ball in the round space form $\mathbb{S}^n(R)$, where $R = \sqrt{\frac{n(n-1)}{2\Lambda}}$.
\end{theorem}
\begin{proof}
Note that  $\mathrm{Ric}^M \ge \frac{2}{n}\Lambda g$ implies  $\mathrm{Sc}^M \ge 2\Lambda$  on $\Omega$. If either Hawking energy vanishes ($\mathcal{E}_{n,\Lambda,i}(\Sigma) = 0$), the entire inequality chain from the proof of Theorem \ref{thm:unified_positivity} must hold as  equalities. Because we assumed the respective lower bounds on $\lambda$, the chain of inequalities  is saturated, in particular, the lower bound for $\lambda$ is attained.  Simultaneously, this saturation forces all non-negative geometric terms in the fundamental identity \eqref{willmoredived_unified} to vanish: $|\mathring{B}|^2 = 0$, $\nabla^\Sigma \log H = 0$ (so $H$ is a strictly positive constant), and $\mathrm{Sc}^M = 2\Lambda$ everywhere on $\Sigma$. 

Because the ambient Ricci tensor satisfies $\mathrm{Ric}^M \ge \frac{2}{n}\Lambda g$ and its trace evaluates to exactly $n(\frac{2}{n}\Lambda) = 2\Lambda$ on the boundary, the Ricci tensor is  pure trace along $\Sigma$. Thus, $\mathrm{Ric}^M(\nu,\nu) = \frac{2}{n}\Lambda$ along $\Sigma$.

Substituting these values into the Gauss equation gives 
\begin{equation*}
    \mathrm{Sc}^{\Sigma} = \mathrm{Sc}^{M} - 2\mathrm{Ric}^M(\nu, \nu) + \frac{n-2}{n-1}H^2 - |\mathring{B}|^2 = 2\Lambda - \frac{4}{n}\Lambda + \frac{n-2}{n-1}H^2.
\end{equation*}
Since $H$ is constant, $\mathrm{Sc}^\Sigma$ is a strictly positive constant. Because $\Sigma$ is a closed hypersurface admitting a convex isometric embedding into $\mathbb{R}^n$ with constant scalar curvature, Ros's Constant-Scalar-Curvature Rigidity Theorem \cite{ros1988compact} dictates that $\Sigma$ is intrinsically a round sphere. Consequently, its area radius $r = \left(|\Sigma|/ \omega_{n-1} \right)^{\frac{1}{n-1}}$ is uniquely determined and constant, yielding $\mathrm{Sc}^\Sigma = \frac{(n-1)(n-2)}{r^2}$.

Let $R = \sqrt{\frac{n(n-1)}{2\Lambda}}$ be the radius of the reference space form $\mathbb{S}^n(R)$. Since $H$ is constant and $\Sigma$ is intrinsically round, substituting the scalar curvature directly into either vanishing energy condition ($\mathcal{E}_{n,\Lambda,i}(\Sigma) = 0$) algebraically reduces exactly to 
$H^2 = (n-1)^2\left(\frac{1}{r^2} - \frac{1}{R^2}\right),$ which agrees with the mean curvature of a geodesic sphere of  area radius $r$ in a sphere of radius $R$. Since $H > 0$, this strict positivity implies $r < R$. Furthermore, because $\mathring{B} = 0$, the boundary is totally umbilic with a non-negative second fundamental form ($B \ge 0$).

We now apply the conformal rescaling $\tilde{g} := R^{-2}g$. This normalizes the ambient Ricci lower bound to $\mathrm{Ric}^{\tilde{g}} \ge (n-1)\tilde{g}$ and maps $\Sigma$ to a round sphere of intrinsic radius $\tilde{r} = r/R < 1$. Because $H$  agrees with the mean curvature of the boundary of a geodesic ball $\Omega_0 \subset \mathbb{S}^n_+$ whose boundary has area radius $\tilde{r}$, we may apply the Hang-Wang rigidity Theorem \ref{hangwang3}. This dictates that $(\Omega, \tilde{g})$ is globally isometric to the  geodesic ball $\Omega_0$. Undoing the conformal scaling demonstrates that $(\Omega, g)$ is globally isometric to a geodesic ball in $\mathbb{S}^n(R)$, completing the proof.
\end{proof}

In the degenerate case where the mean curvature vanishes identically ($H \equiv 0$),  the area-constrained Willmore equation becomes identically satisfied and therefore does not by itself impose the umbilicity or roundness needed for rigidity.  To handle this limiting geometry, we rely on the complementary rigidity theorem of Hang and Wang for the exact hemispherical boundary.
\begin{theorem}[Hang-Wang {\cite[Theorem 2]{hang2009rigidity}}]\label{hangwang}
Let $(M,g)$ be a compact $n$-dimensional Riemannian manifold ($n\ge2$) with nonempty boundary $\Sigma$. Assume $\mathrm{Ric}^M \ge (n-1)g$, $(\Sigma,g_\Sigma)$ is isometric to the standard unit sphere $\mathbb{S}^{n-1}(1)$, and the second fundamental form of $\Sigma$ is nonnegative. Then $(M,g)$ is isometric to the hemisphere $\mathbb{S}^n_+(1)$.
\end{theorem}
This result is a Ricci-strengthened version of Min-Oo’s conjecture \cite[Theorem 4]{min1998scalar}. Note that having the full conjecture would allow us to have stronger rigidity results in the case $\Lambda >0$, however, the original conjecture, phrased purely in terms of a scalar curvature lower bound, was later disproved  by Brendle, Marques and Neves \cite{brendle2011deformations}.

Global rigidity to the full hemisphere can be recovered for minimal hypersurfaces by analyzing the scalar curvature constraints inherent in the generalized Hawking energies.
\begin{theorem}\label{thm:minimal_hemisphere_rigidity_higher}
Let $(M,g)$ be an $n$-dimensional Riemannian manifold ($n \ge 3$), and let $\Lambda > 0$ be a constant. Let $\Omega \subset M$ be a relatively compact domain satisfying $\mathrm{Ric}^M \ge \frac{2}{n}\Lambda g$, whose  smooth connected boundary $\Sigma = \partial\Omega$ is a minimal hypersurface ($H \equiv 0$) admitting an isometric embedding into $\mathbb{R}^n$ as a convex hypersurface. 

Suppose that the ambient scalar curvature satisfies $\mathrm{Sc}^M = 2\Lambda$ along $\Sigma$. If either $\mathcal{E}_{n,\Lambda,1}(\Sigma) = 0$ or $\mathcal{E}_{n,\Lambda,2}(\Sigma) = 0$, then $\Sigma$ is totally geodesic and intrinsically round, and $(\Omega,g)$ is globally isometric to the hemisphere $\mathbb{S}^n_+(R)$, where $R = \sqrt{\frac{n(n-1)}{2\Lambda}}$.
\end{theorem}
\begin{proof}
Applying the boundary hypothesis $\mathrm{Sc}^M|_\Sigma = 2\Lambda$, the ambient Ricci lower bound and the Gauss equation we find 
\begin{equation}\label{gauss_upper_bound_minimal}
    \mathrm{Sc}^\Sigma \le 2\Lambda - \frac{4}{n}\Lambda - |\mathring{B}|^2 = \frac{2(n-2)}{n}\Lambda - |\mathring{B}|^2.
\end{equation}
Integrating this inequality over $\Sigma$ establishes an  upper bound on the total intrinsic scalar curvature:
\begin{equation}\label{integrated_gauss_upper}
    \int_\Sigma \mathrm{Sc}^\Sigma d\mu \le \frac{2(n-2)}{n}\Lambda|\Sigma| - \int_\Sigma |\mathring{B}|^2 d\mu.
\end{equation}
We now examine the two vanishing energy conditions. 
If $\mathcal{E}_{n,\Lambda,1}(\Sigma) = 0$, the definition of the first Hawking energy immediately yields the exact scalar-curvature integral equality:
\begin{equation}\label{energy_1_equality}
    \int_\Sigma \mathrm{Sc}^\Sigma d\mu = \frac{2(n-2)}{n}\Lambda|\Sigma|.
\end{equation}
If instead $\mathcal{E}_{n,\Lambda,2}(\Sigma) = 0$, we obtain 
$|\Sigma| = \omega_{n-1} \left( \frac{n(n-1)}{2\Lambda} \right)^{\frac{n-1}{2}}$. By the classical Alexandrov-Fenchel inequalities for quermassintegrals of convex domains in Euclidean space, applied to the convex Euclidean embedding of $\Sigma$ (see, for instance, \cite[Equation (4)]{guan2009quermassintegral}), one obtains
\begin{equation}\label{alexandrov_fenchel}
\int_\Sigma \mathrm{Sc}^{\Sigma}\,d\mu
\ge
(n-1)(n-2)\omega_{n-1}
\left(
\frac{|\Sigma|}{\omega_{n-1}}
\right)^{\frac{n-3}{n-1}}.
\end{equation}  Substituting the fixed area  into the right-hand side of \eqref{alexandrov_fenchel} evaluates to  $\frac{2(n-2)}{n}\Lambda|\Sigma|$. Thus, the vanishing of the second Hawking energy provides the lower bound
\begin{equation}\label{energy_2_lower_bound}
    \int_\Sigma \mathrm{Sc}^\Sigma d\mu \ge \frac{2(n-2)}{n}\Lambda|\Sigma|.
\end{equation}
In either case \eqref{energy_1_equality} or \eqref{energy_2_lower_bound}, comparison with the integrated Gauss upper bound \eqref{integrated_gauss_upper} forces equality throughout. More precisely, since $\mathrm{Sc}^M=2\Lambda$ along $\Sigma$ and $H=0$, the Gauss equation gives
\begin{equation}
\mathrm{Sc}^{\Sigma}
=
\frac{2(n-2)}{n}\Lambda
-
|\mathring B|^2
-
2\left(\mathrm{Ric}^M(\nu,\nu)-\frac{2}{n}\Lambda\right).
\end{equation}
Since $\mathrm{Ric}^M\ge \frac{2}{n}\Lambda g$, equality in the integrated inequality implies $\mathring B=0$ and $\mathrm{Ric}^M(\nu,\nu)=\frac{2}{n}\Lambda$ along $\Sigma$. Hence $\mathrm{Sc}^{\Sigma}=\frac{2(n-2)}{n}\Lambda$ pointwise. Since $H=0$ and $\mathring B=0$, we have $B=0$, so $\Sigma$ is totally geodesic.

Moreover, $\Sigma$ has constant positive scalar curvature and admits a convex isometric embedding into $\mathbb R^n$. By Ros's theorem \cite{ros1988compact}, $\Sigma$ is intrinsically round. If $r=\left(|\Sigma|/\omega_{n-1}\right)^{\frac{1}{n-1}}$ is its area radius, then $\mathrm{Sc}^{\Sigma}=\frac{(n-1)(n-2)}{r^2}$. Comparing this with $\mathrm{Sc}^{\Sigma}=\frac{2(n-2)}{n}\Lambda$ gives $r^2=\frac{n(n-1)}{2\Lambda}$. Thus, for $R=\sqrt{\frac{n(n-1)}{2\Lambda}}$, one has $r=R$ and $|\Sigma|=\omega_{n-1}R^{n-1}$.

We now apply the conformal rescaling $\tilde{g} := R^{-2}g$, which normalizes the ambient Ricci lower bound to $\mathrm{Ric}^{\tilde{g}} \ge (n-1)\tilde{g}$. Under this metric scaling, the $(n-1)$-dimensional boundary area scales by $R^{-(n-1)}$, yielding $
    |\Sigma|_{\tilde{g}} = R^{-(n-1)} |\Sigma| = R^{-(n-1)} \left( \omega_{n-1} R^{n-1} \right) = \omega_{n-1}$.
    
Thus, the connected boundary is intrinsically isometric to the standard unit sphere $\mathbb{S}^{n-1}(1)$ and possesses a vanishing second fundamental form ($\tilde{B} = 0$). Applying the Hang-Wang rigidity Theorem \ref{hangwang}, we conclude that $(\Omega, \tilde{g})$ is globally isometric to the full unit hemisphere $\mathbb{S}^n_+$. Undoing the scaling demonstrates that $(\Omega, g)$ is globally isometric to the hemisphere $\mathbb{S}^n_+(R)$, completing the proof.
\end{proof}
\begin{remark}
In dimension $3$, the convex embedding assumption in Theorem \ref{thm:minimal_hemisphere_rigidity_higher} can be replaced by the weaker assumption that $\Sigma$ has spherical topology. When $H\equiv 0$ and $\mathcal E_\Lambda(\Sigma)=0$, one obtains $|\Sigma|=12\pi/\Lambda$. The Gauss equation, together with $\mathrm{Sc}^M=2\Lambda$ along $\Sigma$ and $\mathrm{Ric}^M\ge \frac{2}{3}\Lambda g$, gives $\mathrm{Sc}^{\Sigma}\le \frac{2}{3}\Lambda$. Since $\Sigma$ is a topological sphere, Gauss-Bonnet gives $\int_\Sigma \mathrm{Sc}^{\Sigma}\,d\mu=8\pi=\frac{2}{3}\Lambda|\Sigma|$, and hence equality holds pointwise. Thus $\Sigma$ is intrinsically round, and the same Hang-Wang argument gives hemisphere rigidity.
\end{remark}
\begin{remark}
    The charged positive cosmological constant case follows directly from the preceding results. If the charged dominant energy condition $\mathrm{Sc}^M\ge 2\Lambda+(n-1)(n-2)|E|^2$
holds, the vanishing of the corresponding charged Hawking energy forces the vanishing of the uncharged one, and the preceding rigidity theorem applies. In particular, the charged dominant energy condition then forces $E=0$ on $\Omega$.
\end{remark}
In the time-symmetric case, a comparison between (almost round) stable CMC surfaces and Willmore surfaces reveals that both satisfy positivity and rigidity results, among other key properties. However, when extending to the general dynamical setting, it becomes unclear how to generalize stable CMC surfaces, as the stability condition for CMC surfaces does not have a straightforward analog for STCMC or constant expansion surfaces.

\section{Dynamical setting (\texorpdfstring{$k\neq 0$} )) }\label{sectiondyn}

 The dynamical or nontotally geodesic  setting is more challenging since the tensor $k$ is something in principle external representing the extrinsic geometry of $(M,g)$ when embedded in a spacetime, and the only way to connect it to the intrinsic geometry of $(M,g)$ is using the Einstein constrained equations and some condition in the matter content like the dominant energy condition.

 First, we derive the equation that characterizes the area surface equations of the Hawking functional in dimension $3$. This was already done in \cite[Lemma 2.1]{diaz2023local}, but we include it for completeness.
\begin{lemma}[First variation]
The area-constrained Euler Lagrange equation for the Hawking functional (\ref{hawfun}) is 
\begin{equation}\label{eulag}
\begin{split}
   0=& \lambda H  +\Delta^\Sigma H + H|\mathring{B}|^2+ H \mathrm{Ric}^M(\nu, \nu)+P( \nabla_\nu \tr k - \nabla_\nu k(\nu,\nu )) - 2P \diver_\Sigma (k(\cdot, \nu))\\ & +\frac{1}{2}H P^2 - 2k (\nabla^\Sigma P, \nu )
    \end{split}
\end{equation}
Here $H$ is the mean curvature of $\Sigma$ , $\mathring{B}$   is the traceless part of the second
fundamental form $B$ of $\Sigma$ in $M$, that is
$\mathring{B}= B- \frac{1}{2} H g_\Sigma$ where $g_\Sigma$ is the induced
metric on $\Sigma$,  $\mathrm{Ric}^M $ is the Ricci curvature of $M$, $\nabla^\Sigma $, $\diver_\Sigma$ and $\Delta^\Sigma$ are the covariant derivative, tangential divergence  and Laplace Beltrami operator on $\Sigma$. Finally $\lambda \in \mathbb{R}$ plays the role of a Lagrange parameter.
\end{lemma}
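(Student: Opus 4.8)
The plan is a direct first--variation computation. Let $F\colon\Sigma\times(-\eps,\eps)\to M$ be a smooth variation with $F(\cdot,0)=\Sigma$. A tangential component of the variation field only reparametrises $\Sigma$ and leaves $\int_\Sigma H^2-P^2\,d\mu$ unchanged, so it suffices to treat the purely normal variation with $\partial_s F|_{s=0}=\alpha\,\nu$ for an arbitrary $\alpha\in C^\infty(\Sigma)$. Fixing sign conventions so that $\frac{d}{ds}|\Sigma_s|=\int_\Sigma \alpha H\,d\mu$, I will use the classical first--variation identities
$$\partial_s(d\mu)=\alpha H\,d\mu,\qquad \partial_s\nu=-\nabla^\Sigma\alpha,\qquad \partial_s H=-\Delta^\Sigma\alpha-\alpha\bigl(|B|^2+\mathrm{Ric}^M(\nu,\nu)\bigr),$$
together with the two--dimensional pointwise identity $|B|^2=|\mathring{B}|^2+\tfrac12 H^2$.

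The only quantity needing its own computation is $P=\tr k-k(\nu,\nu)$. Since $k$ is a fixed ambient tensor, the functions $\tr k$ and $(\nabla k)(\nu,\nu)$ vary at the moving point through $\alpha\nabla_\nu$, while $k(\nu,\nu)$ also feels the rotation of the normal via the term $2\,k(\partial_s\nu,\nu)=-2\,k(\nabla^\Sigma\alpha,\nu)$. Carrying this out with the Levi--Civita connection gives
$$\partial_s P=\alpha\bigl(\nabla_\nu\tr k-\nabla_\nu k(\nu,\nu)\bigr)+2\,k(\nabla^\Sigma\alpha,\nu).$$

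Next I would differentiate the functional,
$$\frac{d}{ds}\mathcal{H}(\Sigma_s)\Big|_{s=0}=\int_\Sigma\Bigl(\tfrac12 H\,\partial_s H-\tfrac12 P\,\partial_s P+\tfrac14(H^2-P^2)\,\alpha H\Bigr)\,d\mu,$$
substitute the formulas above, and integrate by parts: twice in the $H\,\Delta^\Sigma\alpha$ term to produce $\Delta^\Sigma H$, and once in the $P\,k(\nabla^\Sigma\alpha,\nu)$ term, using $\diver_\Sigma\!\bigl(P\,(k(\cdot,\nu))^{\top}\bigr)=P\,\diver_\Sigma(k(\cdot,\nu))+k(\nabla^\Sigma P,\nu)$, to produce the $\diver_\Sigma(k(\cdot,\nu))$ and $k(\nabla^\Sigma P,\nu)$ terms. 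The cubic term $-\tfrac14\alpha H^3$ arising from $-\tfrac12\alpha H|B|^2$ after applying $|B|^2=|\mathring{B}|^2+\tfrac12 H^2$ cancels the $+\tfrac14\alpha H^3$ coming from $\tfrac14 H^2\,\partial_s(d\mu)$, leaving only $-\tfrac12\alpha H|\mathring{B}|^2$. Collecting terms, one obtains $\frac{d}{ds}\mathcal{H}(\Sigma_s)|_{s=0}=-\tfrac12\int_\Sigma\alpha\,\mathcal{E}\,d\mu$, where $\mathcal{E}$ is the right--hand side of (\ref{eulag}) with the $\lambda H$ term removed.

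Finally, $\Sigma$ being an area--constrained critical surface means this integral vanishes for every $\alpha$ with $\int_\Sigma\alpha H\,d\mu=\frac{d}{ds}|\Sigma_s|=0$; by the Lagrange multiplier principle there is a constant $\lambda\in\mathbb{R}$ with $\mathcal{E}=-\lambda H$ pointwise, which is exactly (\ref{eulag}). I expect no genuine obstacle: the argument is routine once the variational formulas are available. The one step demanding real care is the computation of $\partial_s P$, where one must correctly account for both the displacement of the evaluation point and the rotation of $\nu$ --- this is what generates the $\nabla_\nu\tr k$, $\nabla_\nu k(\nu,\nu)$, $\diver_\Sigma(k(\cdot,\nu))$ and $k(\nabla^\Sigma P,\nu)$ terms --- along with the attendant sign bookkeeping.
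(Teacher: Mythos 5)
Your proposal is correct and follows essentially the same route as the paper: a normal variation, the first-variation formulas for $H$, $P$ and $d\mu$, integration by parts, and the Lagrange multiplier principle for the area constraint. The only difference is that you derive the variations of $\int_\Sigma H^2\,d\mu$ and of $P$ from the classical pointwise identities, whereas the paper imports both from the literature; your bookkeeping (including the cancellation of the $\tfrac14\alpha H^3$ terms) checks out.
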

\begin{proof}
Let $\Sigma \subset M$ be a surface and let $f: \Sigma \times (-\epsilon , \epsilon) \rightarrow M$ be a variation of $\Sigma$ with $f(\Sigma, s)= \Sigma_s$ and lapse $\frac{\partial f}{\partial s }_{|s=0}=\alpha \nu $. In \cite[Section 3]{willflat}, it was shown that the first variation of the Willmore functional (\ref{willfunc}) is given by 
\begin{equation}
\begin{split}
  \frac{1}{2}  \frac{d}{d s}\int_{\Sigma_s} H^2  d\mu_{| s=0}  = \int_{\Sigma_s} \left( -\Delta^\Sigma  H - H|\mathring{B}|^2- H\mathrm{Ric}^M(\nu, \nu)  \right) \alpha \, d\mu,
    \end{split}
\end{equation}
now let's compute the variation of $\frac{1}{2} \int_\Sigma P^2  d\mu $. In \cite{Ce}, it was shown that the variation of $P$ is given by
\begin{equation}
    \frac{d\, P }{d s}_{| s=0}=\left( \nabla_\nu \tr k - \nabla_\nu k(\nu, \nu)\right)\alpha +2 k(\nabla \alpha, \nu),
\end{equation}
using this relation and integration by parts we have 
\begin{equation}
\begin{split}
 \frac{1}{2}  \frac{d}{d s}\int_{\Sigma_s} P^2  d\mu_{| s=0}  =& \int_{\Sigma_s} \frac{1}{2} P^2 H  \alpha +P\left( \nabla_\nu \tr k - \nabla_\nu k(\nu, \nu)\right)\alpha +2P k(\nabla \alpha, \nu)  d\mu\\
 =\int_{\Sigma_s}\big( &\frac{1}{2} P^2 H   +P\left( \nabla_\nu \tr k - \nabla_\nu k(\nu, \nu)\right) -2P \diver_\Sigma \left( k(\cdot, \nu) \right )\\
 &- 2 k(\nabla^\Sigma P, \nu ) \big) \alpha  d\mu.
    \end{split}
\end{equation}
We are considering area-constrained surfaces, which means surfaces whose variation of area is zero. This traduces to the area-constrained $\int_\Sigma H \alpha d\mu =0 $. Then our surfaces must satisfy the area-constrained and 
\begin{equation*}
\begin{split}
 &0=\frac{1}{2}  \left( \frac{d}{d s}\int_{\Sigma_s} H^2  d\mu_{|s=0} -  \frac{d}{d s}\int_{\Sigma_s} P^2  d\mu_{|s=0} \right)  =\\
 &\int_{\Sigma_s} \big( -\Delta^\Sigma H - H|\mathring{B}|^2- H \mathrm{Ric}^M(\nu, \nu) - \frac{1}{2} P^2 H   -P\left( \nabla_\nu \tr k - \nabla_\nu k(\nu, \nu)\right) +2P \diver_\Sigma \left( k(\cdot, \nu) \right )\\
 &+ 2 k(\nabla^\Sigma P, \nu ) \big) \alpha  d\mu
    \end{split}
\end{equation*}
 Then combining  this expression and the area-constrained gives us the Euler Lagrange equation (\ref{eulag}).
\end{proof}
Finally, note that this is equivalent to \cite[Lemma 2.8]{Alex}, and  that in the time‐symmetric case  it reduces to the Willmore equation (\ref{Willeq}). For the most general spacetime variation of the Hawking energy (including $\Lambda\neq0$ and for spacetime flows of any causal character) see \cite{bray2007generalized}.

Although S. Hawking himself did not work specifically with area-constrained critical surfaces of the generalized Willmore functional $ \int_\Sigma H^2 -P^2 d\mu$, we will refer to these surfaces as Hawking surfaces. This terminology is chosen because, as we will see, their defining properties align naturally with the Hawking energy, making them particularly well-suited for its analysis.
\begin{definition}
 We call the surfaces satisfying  equation   (\ref{eulag}) \emph{Hawking surfaces}.  
\end{definition}
    Hawking surfaces are defined within a given spacelike hypersurface of spacetime. This implies that if we wish to define them independently of a specific hypersurface i.e., as purely spacelike $2$-surfaces in spacetime, we must select a preferred spacelike normal direction to perform the variation. Consequently, this introduces a degree of gauge dependence into the definition.

Now we will study the positivity of the Hawking energy under these surfaces. First, recall that the dominant energy condition is given by 
\begin{equation}
    \mu \geq |J|
\end{equation}
where 
\begin{equation}
     \mathrm{Sc}^M + (\tr k)^2 - |k|^2=2 \mu \quad \text{and} \quad  \diver (k -(\tr k) g)= J
\end{equation}
are the energy density and the momentum density of the Einstein constraint equations.

The search for a physically meaningful quasi-local energy in general relativity has led to numerous proposals. One of the most natural approaches is to follow the Hamilton–Jacobi method, which was first used by Brown and York in \cite{brown1993quasilocal} to derive a quasi-local energy expression. However, the Hamilton–Jacobi method alone does not yield a unique quasi-local energy formulation, it requires additional choices, such as a reference configuration and a generator vector field for the physical quantity being measured.

An alternative perspective was introduced by Kijowski in \cite{kijowski1997simple}, who proposed a different reference configuration and vector field, leading to a new quasi-local energy formulation. Later, Liu and Yau in \cite{liu2003positivity} refined Kijowski’s definition, demonstrating that it satisfies key physical requirements, such as positivity and well-posedness under general conditions.

Similar to the Brown-York energy, the Kijowski-Liu-Yau energy relies on an isometric embedding theorem: a closed spacelike $2$-surface 
$\Sigma$ is embedded into Euclidean $3$-space, and its extrinsic curvature is compared with that of the physical spacetime. However, unlike the Brown-York energy, the Kijowski-Liu-Yau energy has the advantage of being gauge invariant.

In an initial data set setting, we have: 
 \begin{definition}
     Consider a surface $\Sigma$ with positive Gauss curvature, which is contained in an initial data set $(M,g, k)$ and satisfies $H^2 -P^2 \geq 0$, then its 
 \emph{Kijowski-Liu-Yau energy} is given by 
     $$\mathcal{E}_{KLY}(\Sigma)=\frac{1}{8 \pi G}\int_\Sigma  H_0- \sqrt{H^2-P^2} d\mu $$
     where $H_0$ is the mean curvature of the surface when isometrically embedded into $\mathbb{R}^3$ and $G$ is the gravitational  constant.
 \end{definition}
 We will use the rigidity of the Kijowski-Liu-Yau energy, which was proven by Liu and Yau and can be written in our notation as 
\begin{theorem}[{\cite[Theorem 1]{liu2003positivity, liu2006positivity}}] \label{liuyaurigi}
    Let \( (\Omega, g, k)\) be a $3$-dimensional compact initial data set satisfying the dominant energy condition, such that its boundary \(\partial \Omega\) has finitely many connected components \(\Sigma_1, \dots, \Sigma_\ell\), each of which has positive Gaussian curvature and a spacelike mean curvature vector ($H^2-P^2>0$).  Then \(\mathcal{E}_{KLY}(\Sigma_\alpha) \geq 0\) for \(\alpha = 1, \dots, \ell\). Moreover, if \(\mathcal{E}_{KLY}(\Sigma_\alpha) = 0\) for some \(\alpha\), then  \(\partial \Omega\) is connected and  $\Omega$ is isometric to a spacelike hypersurface in Minkowski
spacetime with second fundamental form  $k$.  
\end{theorem}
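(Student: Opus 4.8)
The plan is to reduce this to the scalar-curvature (Shi--Tam) rigidity already recorded as Theorem~\ref{rigiditybrown}, by using the Jang equation to convert the dominant energy condition on $(\Omega,g,k)$ into a nonnegativity statement for an associated Riemannian metric on $\Omega$.

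First I would fix, for each component $\Sigma_\alpha$, the isometric reference embedding: positivity of the Gaussian curvature and the Weyl--Nirenberg--Pogorelov Theorem~\ref{WeyNi} give a unique (up to rigid motion) isometric embedding of $(\Sigma_\alpha,g_{\Sigma_\alpha})$ into $\mathbb{R}^3$ as a strictly convex surface, with mean curvature $H_0^\alpha>0$ — the reference term in $\mathcal{E}_{KLY}(\Sigma_\alpha)$. Next I would solve the Jang equation over $\Omega$ with boundary data on $\partial\Omega$ arranged so that the Jang graph $(\Omega,\bar g)$, $\bar g=g+df\otimes df$, satisfies: (a) the metric induced on each $\Sigma_\alpha$ is still $g_{\Sigma_\alpha}$, so the embedding and $H_0^\alpha$ are unchanged; and (b) the boundary mean curvature $\bar H^\alpha$ of $\Sigma_\alpha$ in $\bar g$ equals the norm $\sqrt{(H^\alpha)^2-(P^\alpha)^2}$ of the spacetime mean curvature vector, which is strictly positive exactly by the spacelike-mean-curvature-vector hypothesis. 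By the Schoen--Yau computation, the dominant energy condition $\mu\ge|J|$ forces $\mathrm{Sc}^{\bar g}\ge0$ on $\Omega$ (pointwise, after discarding a nonnegative full square and a divergence term); interior blow-up of $f$ over apparent horizons, if present, produces asymptotically cylindrical ends that would have to be handled separately.

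With $(\Omega,\bar g)$ in hand I would apply Theorem~\ref{rigiditybrown}: its scalar curvature is nonnegative and each boundary component is convex with positive mean curvature $\bar H^\alpha$, so $\int_{\Sigma_\alpha}\bar H^\alpha\,d\mu\le\int_{\Sigma_\alpha}H_0^\alpha\,d\mu$ for every $\alpha$; substituting $\bar H^\alpha=\sqrt{(H^\alpha)^2-(P^\alpha)^2}$ gives $\int_{\Sigma_\alpha}\bigl(H_0^\alpha-\sqrt{(H^\alpha)^2-(P^\alpha)^2}\bigr)\,d\mu\ge0$, i.e.\ $\mathcal{E}_{KLY}(\Sigma_\alpha)\ge0$. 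If $\mathcal{E}_{KLY}(\Sigma_\alpha)=0$ for some $\alpha$, then equality holds in the Shi--Tam inequality for that component, and the equality clause of Theorem~\ref{rigiditybrown} forces $\partial\Omega$ to be connected and $(\Omega,\bar g)$ to be a flat domain in $\mathbb{R}^3$. Flatness makes every term in the Schoen--Yau identity vanish, so $\mu=|J|$ and the second fundamental form of the Jang graph coincides with $k$; running the Schoen--Yau rigidity argument backwards then identifies $(\Omega,g,k)$ with a spacelike domain in Minkowski space whose second fundamental form is $k$.

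The hard part is the Jang step: solving the Jang equation on a \emph{compact} manifold with boundary while controlling \emph{both} the induced boundary metric and the boundary mean curvature is delicate — it is exactly where the original Liu--Yau argument needed its later refinement — and one must also rule out or absorb interior blow-up of $f$ over apparent horizons (the cylindrical ends produced must be capped off or shown compatible with the Shi--Tam comparison). A route that avoids Jang is the Shi--Tam exterior (quasi-spherical) construction: outside the Euclidean image of each $\Sigma_\alpha$ build an asymptotically flat metric of zero scalar curvature whose inner boundary has mean curvature $\sqrt{H^2-P^2}$, glue it to $\Omega$ across the corner where the mean curvature jumps upward, and invoke the positive mass theorem for manifolds with corners; but this still requires $\Omega$ itself to have nonnegative scalar curvature, which again brings back the dominant energy condition and the Jang equation.
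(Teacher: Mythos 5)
A point of order first: the paper does not prove this statement at all — it is quoted verbatim from Liu--Yau \cite{liu2003positivity, liu2006positivity} and used as a black box — so there is no internal proof to compare against, and your proposal has to be measured against the original Liu--Yau argument. In broad strokes your outline does track that argument (Jang reduction of the dominant energy condition, then a Shi--Tam--type comparison with the convex Euclidean reference embedding and the positive mass theorem), so the roadmap is the right one and your closing paragraph correctly identifies where the difficulty lives.

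As a proof, however, the sketch has genuine gaps at precisely those places. (1) The Schoen--Yau identity for the Jang metric yields $\mathrm{Sc}^{\bar g}\ge 2|X|^2_{\bar g}-2\operatorname{div}_{\bar g}X$ for a certain vector field $X$; the divergence term has no pointwise sign and cannot be ``discarded,'' so $\bar g$ is \emph{not} known to have nonnegative scalar curvature and Theorem~\ref{rigiditybrown} cannot be invoked as stated. Liu--Yau instead conformally deform $\bar g$ to zero scalar curvature — solvability of the relevant linear elliptic equation uses exactly the $2|X|^2-2\operatorname{div}X$ structure after integration by parts — and must then control the resulting change in the boundary mean curvature via a sign on the normal derivative of the conformal factor. (2) You cannot simply decree that the Jang graph both induces the original metric on each $\Sigma_\alpha$ (which forces $f$ to be locally constant there) \emph{and} has boundary mean curvature equal to $\sqrt{H^2-P^2}$; what is actually obtained, from a carefully chosen boundary value problem together with the conformal estimate, is an inequality relating the deformed boundary mean curvature to $\sqrt{H^2-P^2}$, and establishing it is the technical core of the second Liu--Yau paper. (3) Blow-up of the Jang solution over interior marginally trapped surfaces produces cylindrical ends that must be capped before any mass comparison; ``handled separately'' is not an argument. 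Finally, in the equality case the passage from ``the deformed manifold is a Euclidean domain'' back to ``$(\Omega,g,k)$ embeds in Minkowski with second fundamental form $k$'' requires the full Schoen--Yau rigidity analysis (vanishing of $X$, coincidence of $k$ with the graph's second fundamental form, triviality of the conformal factor), which is asserted rather than carried out. In short: correct strategy, but the three or four analytic steps that constitute the actual proof are left open.
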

This is a remarkable result; however, the Kijowski-Liu-Yau energy has the drawback of being too positive, meaning that one can find surfaces in Minkowski spacetime where the Kijowski-Liu-Yau energy is strictly positive. This issue was first demonstrated by Ó Murchadha and Szabados in \cite{murchadha2004comment} and was later fully characterized by Miao, Shi, and Tam in the following result.
\begin{theorem}[{\cite[Theorem 4.1]{miao2010geometric}}]\label{minkowskirigi}
    Let \( \Sigma \) be a closed, connected, smooth, spacelike \( 2 \)-surface in Minkowski spacetime \( \mathbb{R}^{3,1} \). 
    Suppose \( \Sigma \) spans a compact spacelike hypersurface in \( \mathbb{R}^{3,1} \).
    If \( \Sigma \) has positive Gaussian curvature and a spacelike mean curvature vector ($H^2-P^2>0$), then $
    \mathcal{E}_{KLY} (\Sigma) \geq 0$.
    Moreover, \( \mathcal{E}_{KLY} (\Sigma) = 0 \) if and only if \( \Sigma \) lies on a hyperplane in \( \mathbb{R}^{3,1} \).
\end{theorem}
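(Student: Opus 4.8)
The statement has two parts: the inequality $\mathcal{E}_{KLY}(\Sigma)\ge0$ and the characterization of equality. I would dispatch the inequality and the easy half of the equivalence at once, and reserve the real effort for the converse, which does \emph{not} follow from the rigidity clause of Theorem~\ref{liuyaurigi}. For the inequality, let $\Omega$ be the compact spacelike hypersurface in $\R^{3,1}$ spanned by $\Sigma$, with induced metric $g$ and second fundamental form $k$. Since $\R^{3,1}$ is flat, the Einstein constraints for $(\Omega,g,k)$ hold with energy density $\mu=0$ and momentum density $J=0$, so the dominant energy condition is trivially satisfied; by hypothesis $\Sigma$ has positive Gaussian curvature and spacelike mean curvature vector. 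Theorem~\ref{liuyaurigi} applied to $(\Omega,g,k)$ then gives $\mathcal{E}_{KLY}(\Sigma)\ge0$. For the easy direction, suppose $\Sigma$ lies in a hyperplane $\Pi\subset\R^{3,1}$; since $\Sigma$ is closed, spacelike and has positive Gaussian curvature, $\Pi$ is necessarily spacelike, hence a totally geodesic flat slice isometric to $\R^3$. The mean curvature vector of $\Sigma$ in $\R^{3,1}$ then agrees with its mean curvature vector in $\Pi\cong\R^3$, which is spacelike of Lorentzian norm equal to the Euclidean mean curvature $H_0$; as $\sqrt{H^2-P^2}$ is precisely this norm and $\Sigma\subset\Pi\cong\R^3$ realises the (essentially unique, by Theorem~\ref{WeyNi}) isometric embedding, we get $\sqrt{H^2-P^2}\equiv H_0$ on $\Sigma$ and therefore $\mathcal{E}_{KLY}(\Sigma)=0$.

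For the converse, assume $\mathcal{E}_{KLY}(\Sigma)=0$. The rigidity clause of Theorem~\ref{liuyaurigi} is vacuous here --- it only returns that $\Omega$ embeds in $\R^{3,1}$ with second fundamental form $k$, which it already does --- and even the stronger conclusion that $\Omega$ is flat would not suffice, since a flat compact spacelike $\Omega$ need not lie in a hyperplane (consider a graph $t=\varphi(x^1)$). One must therefore re-run the Liu--Yau argument and control its equality case. Concretely: isometrically embed $(\Sigma,g_\Sigma)$ as a strictly convex surface $\hat\Sigma\subset\R^3$ bounding a convex body $\hat\Omega$ (Theorem~\ref{WeyNi}); apply Shi--Tam's construction (the mechanism behind Theorem~\ref{rigiditybrown}) on $\R^3\setminus\hat\Omega$ to produce an asymptotically flat, scalar-flat metric $g_E=u^2\,\mathrm{d}r^2+g_r$ built on the level sets of the distance to $\hat\Sigma$, with $u\to1$ at infinity and boundary mean curvature equal to $\sqrt{H^2-P^2}>0$; then glue $(\R^3\setminus\hat\Omega,g_E,0)$ to $(\Omega,g,k)$ along $\Sigma\cong\hat\Sigma$, noting that the corner condition required to run a spinorial positive-mass argument on the resulting complete, boundaryless initial data set reduces, for this gluing, to $H\ge\sqrt{H^2-P^2}$, which is automatic. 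The spinorial positive mass theorem for initial data sets with corners satisfying the dominant energy condition then yields $m_{\mathrm{ADM}}\ge0$ for the glued manifold, while Shi--Tam monotonicity gives $m_{\mathrm{ADM}}\le\frac{1}{8\pi}\int_{\hat\Sigma}(H_0-\sqrt{H^2-P^2})\,d\mu=\mathcal{E}_{KLY}(\Sigma)=0$; hence $m_{\mathrm{ADM}}=0$ and the Shi--Tam monotone quantity is constant in $r$.

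Finally I would read off the geometry from the equality cases. Constancy of the monotone quantity forces $u\equiv1$ and $g_r$ standard, so $g_E$ is the Euclidean metric on $\R^3\setminus\hat\Omega$ and in particular $\sqrt{H^2-P^2}\equiv H_0$ on $\Sigma$. Vanishing ADM mass, the dominant energy condition, and the rigidity case of the spinorial positive mass theorem with corners force the glued initial data set to be a genuine spacelike hypersurface of $\R^{3,1}$ --- in particular the corner closes up, so $H\equiv\sqrt{H^2-P^2}$, i.e.\ $P\equiv0$ on $\Sigma$ --- whose exterior region is flat, isometric to $\R^3\setminus\hat\Omega$, and time-symmetric ($k\equiv0$ there by construction), hence totally geodesic in $\R^{3,1}$, hence an affine spacelike hyperplane $\Pi$. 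Since $\Sigma$ bounds that exterior region, $\Sigma\subset\Pi$, which is the assertion. The main obstacle is precisely this middle passage: making the gluing admissible for the spinorial argument (the distributional behaviour of the momentum constraint across $\Sigma$, where $k$ jumps to $0$) and carrying its equality case through the positive mass rigidity --- this is exactly where Theorem~\ref{liuyaurigi} is insufficient and a dedicated analysis is needed.
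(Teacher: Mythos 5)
First, a point of reference: the paper does not prove this statement at all --- it is quoted verbatim from Miao--Shi--Tam \cite[Theorem 4.1]{miao2010geometric} and used as a black box in the discussion following Theorem \ref{positivity0} --- so there is no in-paper proof to compare yours against. Judged on its own, your treatment of the inequality (apply Theorem \ref{liuyaurigi} to the spanning hypersurface, which satisfies the constraints with $\mu=|J|=0$) and of the direction ``$\Sigma$ in a hyperplane $\Rightarrow\ \mathcal{E}_{KLY}(\Sigma)=0$'' is correct, and you are also right that the rigidity clause of Theorem \ref{liuyaurigi} is vacuous for the converse; identifying that is the valuable part of your write-up.

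The converse, however, is not established as written. Your plan glues $(\Omega,g,k)$ directly to the time-symmetric quasi-spherical exterior $(\mathbb{R}^3\setminus\hat\Omega,g_E,0)$ and invokes a spinorial positive mass theorem with corners, asserting that the junction condition ``reduces to $H\ge\sqrt{H^2-P^2}$, which is automatic.'' That is the gap. Because $k$ jumps from $k|_{\Omega}$ to $0$ across $\Sigma$, the momentum constraint $J=\operatorname{div}(k-(\tr k)g)$ acquires a singular part supported on $\Sigma$ whose density involves $P$ and $k(\nu,\cdot)^{T}$; a distributional dominant energy condition at the corner would require the singular energy density, proportional to $H-\sqrt{H^2-P^2}$, to dominate this singular momentum density, and already the weaker inequality $H-\sqrt{H^2-P^2}\ge|P|$ fails in general (take $H=10$, $P=1$). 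This is exactly why Liu--Yau do not glue the raw initial data to a Riemannian exterior: they first deform $(\Omega,g,k)$ via Jang's equation to a Riemannian piece whose boundary mean curvature dominates $\sqrt{H^2-P^2}$ up to a controlled divergence term, and only then glue, so that the corner condition is the scalar one $H_{\mathrm{in}}\ge H_{\mathrm{out}}$. Your subsequent equality analysis ($u\equiv1$, hence $g_E$ Euclidean and $\sqrt{H^2-P^2}=H_0$ on $\Sigma$) is the right kind of step but sits on top of this unproved corner theorem; and even granting such a theorem, you would still need its rigidity case to return an isometric embedding into Minkowski realizing the prescribed second fundamental form (namely $0$ on the exterior) before concluding that the exterior is totally geodesic and hence a hyperplane. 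To close the argument you would have to either carry the equality case through the Jang-equation reduction, or supply a genuinely different mechanism (as Miao--Shi--Tam do); as it stands, the middle passage you yourself flag as ``the main obstacle'' is not overcome.
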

Now we are going to derive nonnegativity and rigidity results for the Hawking energy on Hawking surfaces. In this case we will require an extra technical condition on a new quantity $f$.
\begin{theorem}\label{positivity0}
    Let $(M,g,k)$ be a $3$-dimensional initial data set satisfying the dominant energy condition.
    
    $i)$ Let $\Sigma$ be a Hawking surface with positive mean curvature, and such that for 
   $$ f:= \left( \frac{P}{H}\right)^2|k|^2+ \frac{1}{2 }(\tr k)^2    - \frac{3}{4} P^2- \frac{P}{H}( \nabla_\nu \tr k - \nabla_\nu k(\nu,\nu )) -  \frac{1}{2} |k|^2  -\frac{1}{2} |\mathring{B}|^2 -|J|$$
    the surface satisfies $\int_\Sigma f -\lambda d\mu \leq 0$. Then $\int_\Sigma H^2 -P^2 d\mu \leq 16\pi $, and  if $\int_\Sigma f -\lambda d\mu < 0$ then  $\int_\Sigma H^2 -P^2 d\mu < 16\pi $. In particular, the Hawking energy is nonnegative.

 $ii)$ Let  $\Omega \subset M$ be a relatively compact domain whose smooth boundary  $\partial \Omega $  has finitely many components. Suppose that one of the boundary components $\Sigma$ is a Hawking surface with positive mean curvature, and the other components have positive scalar curvature and spacelike mean curvature vector ($H^2-P^2>0$). If   there exists a constant $0\leq\beta <\frac{1}{2}$ such that $\int_\Sigma f_\beta -\lambda \, d\mu \leq 0$  for 
       $$ f_\beta:= \left( \frac{P}{H}\right)^2|k|^2+ \frac{1}{2 }(\tr k)^2   - \frac{3}{4} P^2- \frac{P}{H}( \nabla_\nu \tr k - \nabla_\nu k(\nu,\nu ))  -  \beta( |k|^2 + |\mathring{B}|^2 +2|J|), $$
and  $\int_\Sigma H^2 -P^2 d\mu = 16\pi $. Then   $\Omega$ is isometric to a spacelike hypersurface in Minkowski
spacetime with second fundamental form  $k$,   $\partial \Omega$ is connected ($ \partial \Omega=\Sigma $), umbilic,  isometric to a round sphere and $k=0$ on $\Sigma$.
\end{theorem}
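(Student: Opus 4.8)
The plan for $(i)$ is to adapt the divergence-structure argument of Theorems \ref{theo4willfla} and \ref{rigiditywillmore} so as to carry the $k$-dependent terms. First I would divide the Euler--Lagrange equation \eqref{eulag} by $H$ (permissible since $H>0$) and integrate over $\Sigma$, so that $\int_\Sigma \Delta^\Sigma H/H\,d\mu$ becomes $\int_\Sigma|\nabla^\Sigma\log H|^2\,d\mu$. The two flux terms $-\tfrac{2P}{H}\diver_\Sigma(k(\cdot,\nu))$ and $-\tfrac{2}{H}k(\nabla^\Sigma P,\nu)$ I would treat together: integrating the divergence term by parts against $\nabla^\Sigma(2P/H)$ produces a term $\tfrac{2}{H}k(\nabla^\Sigma P,\nu)$ that cancels the other, leaving only $-\tfrac{2P}{H}k(\nabla^\Sigma\log H,\nu)$. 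Substituting $\mathrm{Ric}^M(\nu,\nu)$ from the Gauss equation and applying Gauss--Bonnet, $\int_\Sigma\tfrac12\mathrm{Sc}^\Sigma\,d\mu=2\pi\chi(\Sigma)\le4\pi$, I arrive at
\[
\tfrac14\int_\Sigma (H^2-P^2)\,d\mu\;\le\;4\pi-\int_\Sigma\big([\,\cdots\,]+\lambda\big)\,d\mu,
\]
with $[\,\cdots\,]=|\nabla^\Sigma\log H|^2+\tfrac12|\mathring B|^2+\tfrac12\mathrm{Sc}^M+\tfrac{P}{H}(\nabla_\nu\tr k-\nabla_\nu k(\nu,\nu))+\tfrac34P^2-\tfrac{2P}{H}k(\nabla^\Sigma\log H,\nu)$.

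Next I would feed in the dominant energy condition, written as $\tfrac12\mathrm{Sc}^M=\mu-\tfrac12(\tr k)^2+\tfrac12|k|^2\ge|J|-\tfrac12(\tr k)^2+\tfrac12|k|^2$, and complete the square
\[
|\nabla^\Sigma\log H|^2-\tfrac{2P}{H}k(\nabla^\Sigma\log H,\nu)+\big(\tfrac{P}{H}\big)^2|k|^2\;\ge\;\big|\nabla^\Sigma\log H-\tfrac{P}{H}\,k(\nu)^{\top}\big|^2\;\ge\;0,
\]
where $k(\nu)^{\top}$ is the tangential part of the one-form $k(\cdot,\nu)$ along $\Sigma$, and $|k(\nu)^{\top}|^2\le|k|^2$. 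The function $f$ is calibrated precisely so that $[\,\cdots\,]+f$ equals this perfect square after the cancellations above; hence $\int_\Sigma([\,\cdots\,]+\lambda)\,d\mu\ge\int_\Sigma([\,\cdots\,]+f)\,d\mu\ge0$ whenever $\int_\Sigma(f-\lambda)\,d\mu\le0$, which yields $\int_\Sigma(H^2-P^2)\,d\mu\le16\pi$ and thus $\mathcal E(\Sigma)\ge0$, the strict version being immediate. (Bounding the flux cross-term directly rather than completing the square gives the alternative function mentioned in the introduction.)

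For $(ii)$ I would rerun the computation as an identity and then combine $\int_\Sigma(H^2-P^2)\,d\mu=16\pi$ with the stronger hypothesis. Since $\beta<\tfrac12$ one has $f_\beta\ge f$ pointwise and, more to the point, $[\,\cdots\,]+f_\beta\ge\big|\nabla^\Sigma\log H-\tfrac{P}{H}k(\nu)^{\top}\big|^2+(\tfrac12-\beta)\big(|\mathring B|^2+|k|^2+2|J|\big)\ge0$. Now $4\pi=\tfrac14\int_\Sigma(H^2-P^2)\,d\mu=\int_\Sigma\tfrac12\mathrm{Sc}^\Sigma\,d\mu-\int_\Sigma([\,\cdots\,]+\lambda)\,d\mu$, while $\int_\Sigma([\,\cdots\,]+\lambda)\,d\mu=\int_\Sigma([\,\cdots\,]+f_\beta)\,d\mu+\int_\Sigma(\lambda-f_\beta)\,d\mu$ is a sum of two nonnegative quantities; hence $\int_\Sigma\tfrac12\mathrm{Sc}^\Sigma\,d\mu=4\pi$ (so $\Sigma$ is a topological sphere) and both nonnegative pieces vanish. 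The pointwise bound then forces $|\mathring B|\equiv0$, $k\equiv0$ on $\Sigma$ (so $P\equiv0$), $|J|\equiv0$, $\mu\equiv0$ and $\nabla^\Sigma\log H\equiv0$, so $H$ is constant; the DEC identity gives $\mathrm{Sc}^M\equiv0$ on $\Sigma$, and the Euler--Lagrange equation collapses to $\mathrm{Ric}^M(\nu,\nu)=-\lambda$. From $H^2|\Sigma|=16\pi$ we get $H=2/r$ with $r$ the area radius, and the Gauss equation gives $\mathrm{Sc}^\Sigma=2\lambda+\tfrac12H^2$, a constant; matching with $\int_\Sigma\tfrac12\mathrm{Sc}^\Sigma\,d\mu=4\pi$ forces $\lambda=0$ and $\mathrm{Sc}^\Sigma\equiv2/r^2$. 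So $(\Sigma,g_\Sigma)$ has constant positive curvature, hence is isometric to the round sphere of radius $r$ and embeds in $\mathbb R^3$ (Theorem \ref{WeyNi}) as that sphere with $H_0=2/r=\sqrt{H^2-P^2}$, giving $\mathcal E_{KLY}(\Sigma)=\tfrac1{8\pi}\int_\Sigma(H_0-\sqrt{H^2-P^2})\,d\mu=0$.

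To finish I would invoke Theorem \ref{liuyaurigi}: $\Sigma$ now has positive Gaussian curvature and spacelike mean curvature vector, and by hypothesis so do the remaining boundary components (positive scalar curvature of a surface being positive Gaussian curvature), so the Liu--Yau rigidity applies to $(\Omega,g,k)$ and the vanishing $\mathcal E_{KLY}(\Sigma)=0$ gives that $\partial\Omega$ is connected and $\Omega$ is isometric to a spacelike hypersurface of Minkowski space with second fundamental form $k$; together with $\partial\Omega=\Sigma$ round and $k|_\Sigma=0$, already established, this is the full statement. The main obstacle I anticipate is bookkeeping: performing the integration by parts on the flux terms correctly and verifying that $f$ (resp.\ $f_\beta$) is exactly the combination making the residual integrand a square plus DEC-controlled nonnegative terms; once that is set up, the Gauss--Bonnet equality analysis and the appeal to Theorem \ref{liuyaurigi} are routine. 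A secondary point requiring care is that the hypotheses of Theorem \ref{liuyaurigi} must hold for every boundary component, those for $\Sigma$ being available only a posteriori from the equality analysis.
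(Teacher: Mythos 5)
Your proposal is correct and follows essentially the same route as the paper: divide the Euler--Lagrange equation by $H$, integrate by parts so the two flux terms collapse to $-\tfrac{2P}{H}k(\nabla^\Sigma\log H,\nu)$, substitute the Gauss equation and Gauss--Bonnet, absorb the cross term via the same completion of the square against $(P/H)^2|k|^2$ (this is precisely the paper's nonpositive function $g$), invoke the dominant energy condition, and in part $(ii)$ use the $(\tfrac12-\beta)$-weighted terms to force $k=\mathring B=J=0$ before appealing to the Liu--Yau rigidity theorem. The only cosmetic difference is that you extract $\lambda=0$ from the Gauss equation and the collapsed Euler--Lagrange equation rather than directly from the vanishing of $\int_\Sigma(\lambda-f_\beta)\,d\mu$; both are immediate at that stage.
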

\begin{proof}
  $(i)$ We proceed similarly as in the previous proofs of this section. We consider  equation (\ref{eulag}), divide it by $H$, integrate by parts the term $\frac{\Delta^\Sigma H}{H}$ and use the Gauss equation $2\mathrm{Ric}^M(\nu, \nu) = \mathrm{Sc} -\mathrm{Sc}^{\Sigma}  + \frac{1}{2}H^2 - |\mathring{B}|^2    $ obtaining
\begin{equation}\label{argudyna}
\begin{split}
   0=  \int_{\Sigma}&\lambda+ |\nabla^\Sigma \log H|^2 + \frac{1}{2} |\mathring{B}|^2+  \frac{1}{2}(\mathrm{Sc}^M -\mathrm{Sc}^{\Sigma}) +\frac{P}{H}( \nabla_\nu \tr k - \nabla_\nu k(\nu,\nu ))\\& +\frac{1}{4}H^2 + \frac{1}{2} P^2 - 2\frac{P}{H} \diver_\Sigma (k(\cdot, \nu))- \frac{2}{H} k (\nabla^\Sigma P, \nu )d\mu. 
    \end{split}
\end{equation}
Now using  Gauss-Bonnet theorem to replace $ \mathrm{Sc}^{\Sigma}$, adding and subtracting $(\tr k)^2$, $|k|^2$ and $|J|$ and integrating by parts  we have 
\begin{equation*}\label{posirigiequ}
\begin{split}
    \frac{1}{4}\int_\Sigma H^2-P^2 d\mu \leq& 4\pi +\int_\Sigma -\frac{1}{2}(\mathrm{Sc}^M + (\tr k)^2 - |k|^2-2|J|)-\frac{P}{H}( \nabla_\nu \tr k - \nabla_\nu k(\nu,\nu )) -\lambda \\& +\frac{1}{2 }(\tr k)^2 - \frac{1}{2}|k|^2 -|J|  -|\nabla^\Sigma  \log H|^2 - \frac{1}{2} |\mathring{B}|^2  - \frac{3}{4} P^2 + \frac{2P}{H}  k(\nabla^\Sigma \log H, \nu)  d\mu\\
    =&4\pi +\int_\Sigma -(\mu-|J|) +f-\lambda+g\, d\mu
    \end{split}
\end{equation*}
where 
$$g:= -\left( \frac{P}{H}\right)^2|k|^2 -|\nabla^\Sigma  \log H|^2 + \frac{2P}{H}  k(\nabla^\Sigma \log H, \nu). $$
Then we need to see that the integral is nonpositive, by assumption the first two terms are nonpositive and it is direct to see that $g\leq 0$, then we have the first result.

$(ii)$ As before  we can write 
\begin{equation}
    \frac{1}{4}\int_\Sigma H^2-P^2 d\mu \leq 4\pi +\int_\Sigma -(\mu-|J|) +f_\beta-\lambda+g -(\frac{1}{2}-\beta) (|k|^2  + |\mathring{B}|^2+\frac{1}{2} |J|)  \,   d\mu.
\end{equation}
Then if $\int_\Sigma H^2-P^2 d\mu = 16\pi$, as $\frac{1}{2}-\beta> 0 $ and using  our assumptions  we obtain  $|k|^2=|\mathring{B}|^2=|J|=0 $ on $\Sigma$, this also implies that  $\lambda= \mathrm{Sc}^M_{| \Sigma} =0$,  that $H=\frac{2}{r}$ is constant, where $r$ is the area radius of $\Sigma$, and that $\Sigma$ is a sphere since $\int_{\Sigma } \frac{1}{2}\mathrm{Sc}^{\Sigma} \, d\mu = 4 \pi$.  Now equation (\ref{eulag}) forces $\mathrm{Ric}^M(\nu, \nu) =0$ on $\Sigma$. Now by Gauss equation, we have that the isometric embedding of $\Sigma$ into Euclidean spaces has the scalar curvature of a round sphere and therefore it is a round sphere. Then $H=H_0$ and $P=0$, and by the rigidity of Theorem \ref{liuyaurigi} the result follows.
\end{proof}
\begin{remark}\label{remarf}
    Note that one could define $f$ differently, 
\begin{equation}\label{ftil}
    \Tilde{f}:= \frac{2P}{H}  k(\nabla^\Sigma \log H, \nu)+ \frac{1}{2 }(\tr k)^2    - \frac{3}{4} P^2- \frac{P}{H}( \nabla_\nu \tr k - \nabla_\nu k(\nu,\nu )) -  \frac{1}{2} |k|^2  -\frac{1}{2} |\mathring{B}|^2 -|J|.
\end{equation}
In this case, the function $g$ of the proof would be $g=-|\nabla^\Sigma  \log H|^2 $ and 
 by requiring $\int_\Sigma \Tilde{f} -\lambda \,d\mu \leq 0$, one also obtains nonnegativity of the Hawking energy. The same argument applies if one replaces $ \Tilde{f}$ by an  analogous   $ \Tilde{f}_{\beta}$, yielding an identical rigidity conclusion. Although $\Tilde{f}$ isolates better the terms governing the sign of the Hawking mass (giving a more precise condition), it involves more cumbersome surface‐gradient calculations. We therefore employ the simpler function 
$f$, which delivers the same positivity and rigidity conclusions with far less technical overhead.
\end{remark}
\begin{remark}
    Note that the condition \( \int_\Sigma f_\beta -\lambda \, d\mu \leq 0 \) is a strengthening of \( \int_\Sigma f -\lambda \, d\mu \leq 0 \). Neither of these conditions is optimal nor physically motivated. In particular, the function $f_\beta$ was introduced to ensure that, in the case where the Hawking energy vanishes, it follows that  $k_{|\Sigma}=0$. This allows us to apply the Willmore equation, from which we can conclude that the surface $\Sigma$ has positive Gaussian curvature. Consequently, the rigidity result of the Kijowski-Liu-Yau energy becomes applicable. However, as we will see in Remark \ref{tooposi}, this condition might be artificially enforcing the Hawking energy to be too positive.
\end{remark}
Similar to the time-symmetric case and Corollary \ref{positivemasshaw} we can formulate a positive energy theorem for the Hawking energy on Hawking surfaces for the dynamical setting. 
\begin{corollary}\label{positivemasshawdyn}
  Let $(M,g,k)$ be a $3$-dimensional initial data set satisfying the dominant energy condition. Suppose $\Omega$ is a relatively compact domain with smooth connected boundary  $\Sigma=  \partial \Omega $. If $\Sigma$ is a  Hawking surface with Lagrange parameter $\lambda$ and there exists a constant $\beta <\frac{1}{2}$ such that $\int_\Sigma f_\beta -\lambda \, d\mu \leq 0$, then
  $$  \mathcal{E}(\Sigma) = \sqrt{\frac{|\Sigma|}{16 \pi}} \left( 1- \frac{1}{16 \pi} \int_\Sigma H^2-P^2  d\mu \right) \geq 0 $$
  with equality if and only if $\Omega$ is isometric to a spacelike hypersurface  in Minkowski spacetime with second fundamental form $k$, $\Sigma$ is an umbilic round sphere, and $k=0$ along $\Sigma$.  
\end{corollary}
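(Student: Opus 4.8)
The plan is to obtain the corollary as the connected-boundary specialization of Theorem~\ref{positivity0}. \emph{Nonnegativity.} The first observation I would record is the pointwise inequality $f\le f_\beta$ on $\Sigma$: subtracting the two expressions, all common terms cancel and one is left with $f-f_\beta=(\beta-\tfrac12)\bigl(|k|^2+|\mathring B|^2+2|J|\bigr)$, which is $\le 0$ since $\beta<\tfrac12$ and, under the dominant energy condition, $|k|^2,\ |\mathring B|^2,\ |J|\ge 0$. Hence $\int_\Sigma f-\lambda\,d\mu\le\int_\Sigma f_\beta-\lambda\,d\mu\le 0$, so the hypothesis of Theorem~\ref{positivity0}$(i)$ is met for $\Sigma$, giving $\int_\Sigma H^2-P^2\,d\mu\le 16\pi$, i.e.\ $\mathcal E(\Sigma)\ge 0$.

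\emph{Rigidity.} Suppose $\mathcal E(\Sigma)=0$, equivalently $\int_\Sigma H^2-P^2\,d\mu=16\pi$. Since $\partial\Omega=\Sigma$ is connected there are no ``other'' boundary components, so the corresponding hypotheses in Theorem~\ref{positivity0}$(ii)$ are vacuous, while the remaining ones — $\Sigma$ a Hawking surface with $H>0$ and $\int_\Sigma f_\beta-\lambda\,d\mu\le 0$ — are exactly those assumed here. Theorem~\ref{positivity0}$(ii)$ then yields that $\Omega$ is isometric to a spacelike hypersurface in Minkowski spacetime with second fundamental form $k$, that $\Sigma$ is isometric to a round sphere, and that $k\equiv 0$ along $\Sigma$. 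To promote ``round sphere'' to ``umbilic round sphere'' I would invoke the intermediate output of that same proof, where the equality case already forces $|\mathring B|^2\equiv 0$ on $\Sigma$; thus $B=\tfrac12 Hg_\Sigma$ there and $\Sigma$ is an umbilic round sphere. For the converse one checks directly that a round sphere in a spacelike hyperplane of Minkowski has $k\equiv 0$, hence $P\equiv 0$ and $H=H_0=2/r$, so $\tfrac14\int_\Sigma H^2-P^2\,d\mu=4\pi$ and $\mathcal E(\Sigma)=0$; and such a sphere is indeed a Hawking surface with $\lambda=0$ and $f_\beta\equiv 0$, so the standing hypotheses are satisfied.

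The one point that needs care is that Theorem~\ref{positivity0} presupposes $H>0$ on the Hawking surface, so this positivity must be read into the hypotheses of the corollary, just as in the time-symmetric Corollary~\ref{positivemasshaw}. With that understood there is no genuine obstacle: all the substantive analysis — the isometric embedding into $\R^3$, the Gauss-equation argument identifying the embedded surface as a round sphere, and the Kijowski--Liu--Yau rigidity of Theorem~\ref{liuyaurigi} — is already carried out inside the proof of Theorem~\ref{positivity0}, so the ``work'' here reduces to the bookkeeping that $f\le f_\beta$ and that the multi-component hypotheses degenerate correctly when the boundary is connected.
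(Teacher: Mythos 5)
Your treatment of nonnegativity and of the ``only if'' direction of the equality case is correct and follows the paper's route exactly: the paper likewise derives the corollary from Theorem~\ref{positivity0}, implicitly using the pointwise inequality $f-f_\beta=(\beta-\tfrac12)(|k|^2+|\mathring B|^2+2|J|)\le 0$ that you make explicit, and it extracts umbilicity from the intermediate step $|\mathring B|^2\equiv 0$ in the equality analysis, just as you do. (Minor quibble: $|k|^2,|\mathring B|^2,|J|\ge 0$ because they are norms, not because of the dominant energy condition.) Your remark that $H>0$ must be read into the hypotheses is also apt.

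The one genuine gap is in the ``if'' direction of the equivalence. The equality condition asserts that $\mathcal E(\Sigma)=0$ whenever $\Omega$ is isometric to \emph{any} spacelike hypersurface in Minkowski spacetime with second fundamental form $k$, $\Sigma$ is an umbilic round sphere, and $k=0$ along $\Sigma$; it does not restrict $\Omega$ to be a hyperplane, and $k$ may be nonzero in the interior of $\Omega$. You only verify the special case of a round sphere sitting in a spacelike hyperplane, which shows the equality case is realized but does not prove the implication. The missing step (which is how the paper closes this direction) is a short Gauss--Codazzi argument: since the ambient Minkowski space is flat and $k=0$ along $\Sigma$, the Gauss equation for $\Omega\subset\mathbb{R}^{3,1}$ forces $\mathrm{Ric}^\Omega=0$ along $\Sigma$; then the Gauss equation for $\Sigma\subset\Omega$ together with $|\mathring B|=0$ gives $\mathrm{Sc}^\Sigma=\tfrac12 H^2$, and since $\Sigma$ is a round sphere $\mathrm{Sc}^\Sigma=2/r^2$ with $r$ the area radius, so $H^2=4/r^2$ is constant, $P=\tr_\Sigma k=0$, and $\int_\Sigma H^2-P^2\,d\mu=16\pi$, i.e.\ $\mathcal E(\Sigma)=0$. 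Without this, your argument leaves open the possibility that some curved $\Omega$ satisfying the three stated conditions has $\int_\Sigma H^2\,d\mu\ne 16\pi$.
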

\begin{proof}
The first part is a direct consequence of Theorem \ref{positivity0}, note that in the proof one also obtains  $|\mathring{B}|=0$ in the case of equality, so $\Sigma$ is umbilic. What remains to show is that if $\Sigma$ is am umbilic round sphere with $k=0$ along $\Sigma$ and $\Omega$ is isometric to a spacelike hypersurface  in Minkowski spacetime, then $\mathcal{E}(\Sigma) =0 $.

First, note that since $\Omega$ is  in Minkowski space and $k=0$ along $\Sigma$ by Gauss-Codazzi equation  $\mathrm{Ric}^\Omega=0 $. Then by Gauss equation $\frac{2}{r^2}=\mathrm{Sc}^\Sigma = \frac{1}{2}H^2- |\mathring{B}|^2 =\frac{1}{2}H^2 $, where $r$ is the area radius of $\Sigma$. Then a direct computation  gives $\mathcal{E}(\Sigma) =0 $. 
\end{proof}
We now examine the behavior of the Hawking energy when evaluated on the surfaces considered in the rigidity result of Theorem \ref{positivity0}. We will observe that it tends to be excessively positive, meaning that there exist numerous surfaces in Minkowski space with strictly positive Hawking energy. This phenomenon mirrors the well-known over-positivity issue of the Kijowski-Liu-Yau energy. Given that our argument relies on its rigidity result, it is unsurprising that by combining Theorems \ref{minkowskirigi} and \ref{positivity0}, we arrive at the following result. 
\begin{corollary}
  Let $(M,g,k)$ be a $3$-dimensional compact hypersurface in Minkowski spacetime. Assume that the boundary of $M $, $\partial M =\Sigma $   is a Hawking surface of positive mean curvature and that there exists a constant $\beta <\frac{1}{2}$ such that $\int_\Sigma f_\beta -\lambda \, d\mu \leq 0$. Then the Hawking energy on $\Sigma$ is strictly positive unless $\Sigma$ is also  contained in a hyperplane of Minkowski spacetime.  
\end{corollary}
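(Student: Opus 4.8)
The plan is to bootstrap from the two parts of Theorem~\ref{positivity0} and then invoke the Minkowski rigidity of the Kijowski--Liu--Yau energy (Theorem~\ref{minkowskirigi}). Since $(M,g,k)$ sits inside Minkowski spacetime the ambient stress--energy vanishes, so the Einstein constraints give $\mu\equiv0$ and $J\equiv0$; in particular the dominant energy condition holds. Moreover $\beta<\tfrac12$ forces $f\le f_\beta$ pointwise — their difference is $-(\tfrac12-\beta)(|k|^2+|\mathring B|^2+2|J|)\le0$ — hence $\int_\Sigma f-\lambda\,d\mu\le\int_\Sigma f_\beta-\lambda\,d\mu\le0$, and Theorem~\ref{positivity0}$(i)$ applies and yields $\int_\Sigma H^2-P^2\,d\mu\le16\pi$, i.e. $\mathcal E(\Sigma)\ge0$. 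It remains to treat the borderline case $\mathcal E(\Sigma)=0$.

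Assume therefore $\int_\Sigma H^2-P^2\,d\mu=16\pi$. I would apply Theorem~\ref{positivity0}$(ii)$ with $\Omega=M$: the boundary $\partial M=\Sigma$ is connected, so there are no ``other components'' to check, and the remaining hypotheses are exactly those assumed here. Its conclusion, together with what its proof extracts, gives that $\Sigma$ is isometric to a round sphere, $k|_\Sigma=0$, $\mathring B|_\Sigma=0$, $\lambda=0$, and $H$ constant with $H=2/r$ where $r$ is the area radius. In particular $P=\tr_\Sigma k=0$, so $H^2-P^2=H^2>0$ (the mean curvature vector is spacelike) and $\Sigma$ has positive Gauss curvature. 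Next I would check that these facts force $\mathcal E_{KLY}(\Sigma)=0$: the vacuum constraint $\mathrm{Sc}^M=(\tr k)^2-|k|^2$ together with $k|_\Sigma=0$ gives $\mathrm{Sc}^M|_\Sigma=0$; feeding $\lambda=0$, $\mathring B|_\Sigma=0$ and $P|_\Sigma=0$ into the Hawking surface equation~(\ref{eulag}) gives $\mathrm{Ric}^M(\nu,\nu)|_\Sigma=0$; so the Gauss equation reduces to $\mathrm{Sc}^\Sigma=\tfrac12H^2$ along $\Sigma$, which, since $\Sigma$ is a round sphere of radius $r$, forces $H=2/r=H_0$, the mean curvature of the (unique, round) isometric embedding of $\Sigma$ into $\mathbb R^3$. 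Hence $\mathcal E_{KLY}(\Sigma)=\tfrac1{8\pi}\int_\Sigma\bigl(H_0-\sqrt{H^2-P^2}\bigr)\,d\mu=0$.

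Since $M$ is a compact spacelike hypersurface spanning $\Sigma$, and $\Sigma$ has positive Gauss curvature and spacelike mean curvature vector, Theorem~\ref{minkowskirigi} now applies and tells us that $\Sigma$ lies on a hyperplane of $\mathbb R^{3,1}$. A round $2$-sphere is a closed spacelike surface, so that hyperplane cannot be null or timelike (there is no closed spacelike hypersurface in $\mathbb R^{2,1}$); it is therefore spacelike, an isometric copy of $(\mathbb R^3,\delta)$, in which $\Sigma$ bounds the round ball. Combined with the conclusion of Theorem~\ref{positivity0}$(ii)$ that $(M,g,k)$ is the data of a spacelike hypersurface of Minkowski spanning $\Sigma$, this pins the initial data down to that of a hyperplane, so $k\equiv0$ and $(M,g)$ is flat; the contrapositive is the assertion of the corollary.

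I expect the last step to be the main obstacle: upgrading the rigidity from the boundary $\Sigma$ (a round sphere lying on a spacelike hyperplane, with $k|_\Sigma=0$) to all of $M$. Concretely, one represents the achronal compact hypersurface $M$ as a graph over the flat ball with zero boundary height and, by $k|_\Sigma=0$, vanishing Hessian along the boundary, and must argue that it coincides with the flat disk with $k\equiv0$; everything preceding this is a careful but routine assembly of the cited rigidity theorems, the one genuine computation being the Gauss-equation identity $H=H_0$ that converts $\mathcal E(\Sigma)=0$ into $\mathcal E_{KLY}(\Sigma)=0$.
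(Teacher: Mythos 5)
Your proposal follows exactly the route of the paper's own proof: nonnegativity from Theorem~\ref{positivity0}$(i)$ via the pointwise inequality $f\le f_\beta$ (your computation $f-f_\beta=-(\tfrac12-\beta)(|k|^2+|\mathring B|^2+2|J|)\le 0$ is correct), reduction of the borderline case $\int_\Sigma H^2-P^2\,d\mu=16\pi$ to $\mathcal E_{KLY}(\Sigma)=0$ through the rigidity analysis of Theorem~\ref{positivity0}$(ii)$ ($k|_\Sigma=0$, $\mathring B=0$, $\lambda=0$, $\mathrm{Ric}^M(\nu,\nu)=0$, hence $\mathrm{Sc}^\Sigma=\tfrac12H^2$ and $H=H_0$), and then the Minkowski rigidity of the Kijowski--Liu--Yau energy, Theorem~\ref{minkowskirigi}. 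All of these intermediate steps are correct and are precisely what the paper does; the paper's proof is essentially a one-line pointer to this chain.

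The step you single out as the ``main obstacle'' --- upgrading ``$\Sigma$ lies on a spacelike hyperplane with $k|_\Sigma=0$'' to ``$(M,g,k)$ is a hyperplane with $k\equiv0$'' --- is exactly where the paper's proof also stops: after invoking Theorem~\ref{minkowskirigi} it simply declares a contradiction with ``$(M,g,k)$ is not a hyperplane'', which tacitly assumes that a round sphere lying on a spacelike hyperplane can only span the flat disk. None of the cited theorems supplies that implication, and your proposed graph argument cannot close it either: every hypothesis of the corollary (the Hawking surface equation, $H>0$, the $f_\beta$ condition, the vacuum constraints) restricts the data only on and infinitesimally near $\Sigma$. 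A spacelike graph $t=u(x)$ over the flat ball with $u$ vanishing to sufficiently high order at the boundary but not identically zero satisfies all of them, has $\mathcal E(\Sigma)=0$, and has $k\not\equiv0$ in the interior. So you should not expect to complete this last step; what the argument actually proves is that vanishing Hawking energy forces $\Sigma$ to be a round sphere lying on a spacelike hyperplane with $k|_\Sigma=0$, and the stronger interior conclusion in the corollary's statement is not established by this method. In short, your reconstruction is faithful to the paper and as complete as the paper's own proof; the residual gap you flagged is the paper's, not yours.
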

\begin{proof}
    Suppose that the Hawking energy of $\Sigma$ vanishes and it is not contained in a hyperplane of Minkowski spacetime. In the proof of Theorem \ref{positivity0}, we saw that under the condition  $\int_\Sigma f_\beta -\lambda \, d\mu \leq 0$, the vanishing of the Hawking energy implies the vanishing of the Kijowski-Liu-Yau energy, then by Theorem \ref{minkowskirigi} we get a contradiction.
\end{proof}

\begin{remark}\label{tooposi} Paradoxically, the primary motivation for considering Hawking surfaces was to address the issue of the Hawking energy being too negative. However, we now find that, under certain conditions, the Hawking energy can become excessively positive. There are two possible explanations for this phenomenon:

\begin{enumerate}
    \item \textbf{Issues with the condition} \( \int_\Sigma f_\beta -\lambda \, d\mu \leq 0 \):  
    This was introduced as a technical refinement of the weaker condition \( \int_\Sigma f -\lambda \, d\mu \leq 0 \). The formulation with the parameter \( \beta \) was specifically chosen to ensure that \( k \) vanishes on \( \Sigma \), allowing for a clearer geometric characterization of the surface. However, this is by no means an optimal or physically motivated condition. This condition may impose an overly restrictive constraint, biasing the selection of Hawking surfaces toward those with higher energy. A better choice of condition could potentially lead to a stronger rigidity result—one that does not rely on the rigidity properties of the Kijowski-Liu-Yau energy.

    \item \textbf{Potential excess in the Hawking energy measurement:}  
    Alternatively, it is possible that the Hawking energy on these surfaces is genuinely "too positive," meaning that Hawking surfaces may introduce an excess in its measurement.    
\end{enumerate}
However, as we will see in the following examples in Minkowski spacetime, it looks like the issue lies in a too restrictive condition.
\end{remark}
\begin{example}[Hyperboloid]\label{example1}
    In $4$-dimensional Minkowski spacetime  \( \mathbb{R}^{3,1} \), we consider for some positive  constant $a$ the hyperboloid.
$$M=\{(t,x,y,z) \in \mathbb{R}^{3,1} : t^2-x^2-y^2-z^2 =a^2,\, t>0\} $$
The induced metric is the metric of the hyperbolic space $\mathbb{H}^3_{a}$ which in polar coordinates is given by 
\begin{equation}
    g^M = \frac{dr^2}{1 + \frac{r^2}{a^2}}
+
r^2 \bigl(d\theta^2 + \sin^2\!\theta\,d\phi^2\bigr).
\end{equation}
where $r^2=x^2+y^2+z^2$.  The second fundamental form of $M$ in $\mathbb{R}^{3,1} $ is given by 
\begin{equation}
    k=\frac{1}{a} g^M
\end{equation}
    Then $M$ is totally umbilic and  $\tr k = \frac{3}{a}$. We will consider spheres of constant radius $\Sigma_r= \{x^2+y^2+z^2=r^2 \}$, these surfaces are round spheres of area $|\Sigma|=4\pi r^2$. We will see that they are Hawking surfaces with vanishing Hawking energy. Since the normal is given by $\nu = \sqrt{1 +\frac{r^2}{a^2}} \partial_r$, using the spherical symmetric we can see that 
\begin{equation}
    H(\Sigma_r)= \frac{2}{r} \sqrt{1 +\frac{r^2}{a^2}},  \quad P(\Sigma_r)= \frac{2}{a}. \quad \text{and} \quad  |\mathring{B}|=0
\end{equation}
Then $\Sigma_r$ are in particular constant mean curvature (CMC) surfaces and also spacetime constant mean curvature (STCMC) surfaces, also it is direct to see that they have vanishing Hawking energy ($\mathcal{E}(\Sigma_r)=0$).  Now to see that they are Hawking surface, note that $\mathrm{Ric}^M= -\frac{2}{a} g^M$, then  $\mathrm{Ric}^M(\nu, \nu)=-\frac{2}{a} $ and since $k$ is constant along $\Sigma$ the equation characterizing the Hawking surfaces reduces to 
\begin{equation}
 0=  \lambda H+ H\mathrm{Ric}^M(\nu, \nu) + \frac{1}{2}P^2H =\lambda H
\end{equation}
which holds for $\lambda=0$. Then we have that $\Sigma_r$ is a Hawking surface with vanishing Hawking energy (note also that $\Sigma_r$ is a Willmore surface for $\lambda= \frac{2}{a}$). Finally, we calculate the function $f$, using that $|k|^2= \frac{3}{a^2}$ we can see 
\begin{equation}
    f= \left( \frac{P}{H}\right)^2|k|^2+ \frac{1}{2 }(\tr k)^2  - \frac{1}{2}|k|^2  - \frac{3}{4} P^2= \left( \frac{P}{H}\right)^2|k|^2 >0. 
\end{equation}
Then we have $\int_{\Sigma_r} f-\lambda d\mu= \int_{\Sigma_r} \left( \frac{P}{H}\right)^2|k|^2 d\mu >0 $. Note  that this shows in particular that the hyperboloid is foliated by Hawking spheres of zero Hawking energy, and since these surfaces are also STCMC surfaces, it is also foliated by these surfaces. 
\end{example}
\begin{example}\label{example2}
      In $4$-dimensional Minkowski spacetime  \( \mathbb{R}^{3,1} \), we consider for some constant $\alpha>0$ the hypersurface
$$M=\{(t,x,y,z) \in \mathbb{R}^{3,1} : t=\frac{\alpha}{2} (x^2+y^2+z^2)\} $$
The induced metric in polar coordinates is given by 
\begin{equation}
    g^M = (1-\alpha^2r^2)dr^2
+
r^2 \bigl(d\theta^2 + \sin^2\!\theta\,d\phi^2\bigr).
\end{equation}
where $r^2=x^2+y^2+z^2$, then $M$ is spacelike in the region $r<\frac{1}{\alpha}$.  The second fundamental form of $M$ in $\mathbb{R}^{3,1} $ is given by 
\begin{equation}
    k=\frac{\alpha}{\sqrt{1-\alpha^2r^2 }} \,\delta  
\end{equation}
where $\delta$ is the Euclidean metric. With this we can also calculate
\begin{equation}
    \tr k= \frac{\alpha (3-2\alpha^2r^2)}{(1-\alpha^2 r^2 )^\frac{3}{2}}\quad \text{and} \quad |k|^2= \frac{\alpha^2 (3-4\alpha^2 r^2 +2\alpha^4 r^4)}{(1-\alpha^2 r^2)^3}
\end{equation}
     Again, we will consider spheres of constant radius $\Sigma_r= \{x^2+y^2+z^2=r^2 \}$, these surfaces are round spheres of area $|\Sigma|=4\pi r^2$. We will see that they are Hawking surfaces with vanishing Hawking energy. The outward normal of $\Sigma_r$ is given by $\nu = \frac{1}{\sqrt{1-\alpha^2r^2 }} \partial_r$ and with this we can calculate 
\begin{equation}
    H(\Sigma_r)= \frac{2}{r\sqrt{1-\alpha^2r^2 }},  \quad P(\Sigma_r)= \frac{2\alpha }{\sqrt{1-\alpha^2r^2 }}. \quad \text{and} \quad  |\mathring{B}|=0
\end{equation}
Then $\Sigma_r$ are constant mean curvature (CMC) surfaces and also spacetime constant mean curvature (STCMC) surfaces, also they have vanishing Hawking energy.  Now we will see that $\Sigma_r$ is a Hawking surface. One can calculate $\mathrm{Ric}^M(\nu, \nu)=-\frac{2\alpha^2 }{(1-\alpha^2 r^2 )^2} $ and 
\begin{equation}
    \nabla_\nu \tr k = \frac{\alpha^3r(5-2\alpha^2 r^2)}{(1-\alpha^2r^2)^3}, \quad (\nabla_\nu k)(\nu,\nu)= \frac{3\alpha^3r}{(1-\alpha^2r^2)^3}, \quad \diver_\Sigma (k(\cdot, \nu))=0 .
\end{equation}
Then, using the results of before and that $P$ is constant on $\Sigma_r$, the  equation characterizing the Hawking surfaces reduces to
\begin{equation}
  0=  \lambda H   + H \mathrm{Ric}^M(\nu, \nu)+P( \nabla_\nu \tr k - \nabla_\nu k(\nu,\nu ))  +\frac{1}{2}H P^2 =  \lambda H.
\end{equation}
Then  $\Sigma_r$ is a Hawking surface for $\lambda=0$ (and a Willmore surface for $\lambda =\frac{2\alpha^2 }{(1-\alpha^2 r^2 )^2}$ ). Finally, one can compute that  
\begin{equation}
    f= \left( \frac{P}{H}\right)^2|k|^2+ \frac{1}{2 }(\tr k)^2  - \frac{1}{2}|k|^2  - \frac{3}{4} P^2 -\frac{P}{H}( \nabla_\nu \tr k - \nabla_\nu k(\nu,\nu ))= \left( \frac{P}{H}\right)^2|k|^2 >0 
\end{equation}
Then $\int_{\Sigma_r} f-\lambda d\mu= \int_{\Sigma_r} \left( \frac{P}{H}\right)^2|k|^2 d\mu >0 $, and it violates our assumption of nonnegativity. 
\end{example}
\begin{remark}
    These two examples show that Hawking surfaces need not satisfy our original integral nonnegativity hypothesis.  Indeed, in both cases  $\int_{\Sigma_r} f-\lambda d\mu= \int_{\Sigma_r} \left( \frac{P}{H}\right)^2|k|^2 d\mu >0 $ (and hence also  $\int_{\Sigma_r} f_\beta-\lambda d\mu >0 $). This demonstrates that the hypothesis $\int_{\Sigma} f-\lambda d\mu\leq 0 $ is not optimal. If, instead, one uses the modified function \(\tilde f\) introduced in (\ref{ftil}) (see Remark~\ref{remarf}), then \(\tilde f\equiv0\) on both examples and $\int_{\Sigma_r}\tilde f-\lambda\,d\mu=0$, showing that the \(\tilde f\)–condition is more appropriate and less restrictive in these particular cases.  By contrast, the rigidity condition for $\Tilde{f}_\beta$ is still violated $\int_{\Sigma_r} \Tilde{f}_\beta-\lambda d\mu >0 $.
\end{remark}
 We can also consider an $n$-dimensional initial data set $(M,g,k)$, in this case, a hypersurface $\Sigma$ is an area-constrained critical surface of the Hawking functional if it satisfies 
\begin{equation}\label{eulaghigh}
\begin{split}
   0=& \lambda H  +\Delta^\Sigma H - \frac{n-3}{2(n-1)}H^3 + H|\mathring{B}|^2+ H \mathrm{Ric}^M(\nu, \nu)+P( \nabla_\nu \tr k - \nabla_\nu k(\nu,\nu )) \\ & - 2P \diver_\Sigma (k(\cdot, \nu)) +\frac{1}{2}H P^2 - 2k (\nabla^\Sigma P, \nu ).
    \end{split}
\end{equation}
In this case, we again consider two possible generalizations to the Hawking energy 
\begin{equation}
 \mathcal{E}_{n,1}(\Sigma)=   \frac{1}{2(n-1)(n-2) \omega_{n-1}} 
\left( \frac{|\Sigma|}{\omega_{n-1}} \right)^{\frac{1}{n-1}}
\int_{\Sigma} 
\left( \mathrm{Sc}^{\Sigma} - \frac{n-2}{n-1} (H^2-P^2) \right)
d\mu
\end{equation}
and 
\begin{equation}
\mathcal{E}_{n,2}(\Sigma) =    \frac{1}{2} \left( \frac{|\Sigma|}{\omega_{n-1}} \right)^{\frac{n-2}{n-1}} \left(1 - \frac{1}{(n-1)^2 \omega_{n-1}} \left( \frac{\omega_{n-1}}{ |\Sigma|} \right)^{\frac{n-3}{n-1}} \int_{\Sigma }H^2 -P^2 d\mu \right),
\end{equation}
Then similar to Theorem \ref{rigidityhighdim} and Theorem \ref{positivity0} we have the following nonnegativity result.
\begin{theorem}\label{posidynhigh}
    Let $(M,g,k)$ be a complete $n$-dimensional initial data set (with $n\geq 3$) satisfying the dominant energy condition.   Let $\Sigma $ be a Hawking surface with positive mean curvature
    and  parameter $\lambda$, and let
  $$f:= \left( \frac{P}{H}\right)^2|k|^2+ \frac{1}{2 }(\tr k)^2 -  \frac{1}{2} |k|^2 -|J|   - \frac{n}{2(n-1)} P^2- \frac{P}{H}( \nabla_\nu \tr k - \nabla_\nu k(\nu,\nu )) -\frac{1}{2} |\mathring{B}|^2$$
\begin{itemize}
    \item[$i)$]  If $
         \lambda  \geq \frac{1}{|\Sigma|}\int_{\Sigma}  f+ \frac{n-3}{2(n-2)}  \mathrm{Sc}^{\Sigma} d\mu$, 
       then  $\mathcal{E}_{n,1}(\Sigma)\geq 0$.
    \item[$ii)$]  If, $
         \lambda  \geq  \frac{1}{|\Sigma|}\int_{\Sigma} f+
 \frac{\mathrm{Sc}^{\Sigma}}{2} d\mu  -\frac{n-1}{2} \left( \frac{\omega_{n-1}}{ |\Sigma|} \right)^{\frac{2}{n-1}},$
       then  $\mathcal{E}_{n,2}(\Sigma)\geq 0$. 
\end{itemize}
Furthermore, if one of the inequalities of $\lambda$ is strict then the respective Hawking energy is positive.
\end{theorem}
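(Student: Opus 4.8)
The plan is to adapt the argument of Theorem~\ref{positivity0}$(i)$ to the higher-dimensional Willmore term, in the same way Theorem~\ref{positivhigher} adapts the time-symmetric computation. First I would divide the Euler Lagrange equation (\ref{eulaghigh}) by $H>0$, integrate over the closed hypersurface $\Sigma$, and integrate the term $\Delta^\Sigma H/H$ by parts to produce $-|\nabla^\Sigma\log H|^2$. Then, using the Gauss equation $\mathrm{Sc}^\Sigma=\mathrm{Sc}^M-2\,\mathrm{Ric}^M(\nu,\nu)+\tfrac{n-2}{n-1}H^2-|\mathring{B}|^2$ to eliminate $\mathrm{Ric}^M(\nu,\nu)$, the cubic term $-\tfrac{n-3}{2(n-1)}H^3$ divided by $H$ contributes $-\tfrac{n-3}{2(n-1)}H^2$, which combines with the $\tfrac{n-2}{2(n-1)}H^2$ coming from Gauss into a single $\tfrac{1}{2(n-1)}H^2$, while the two $|\mathring{B}|^2$ contributions combine into $\tfrac12|\mathring{B}|^2$.

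Next I would integrate the $k$-divergence term by parts: since $\Sigma$ is closed,
\[
\int_\Sigma -\tfrac{2P}{H}\,\diver_\Sigma( k(\cdot,\nu))\,d\mu
=\int_\Sigma \nabla^\Sigma( \tfrac{2P}{H})\cdot k(\cdot,\nu)\,d\mu
=\int_\Sigma \tfrac{2}{H}k(\nabla^\Sigma P,\nu)-\tfrac{2P}{H}k(\nabla^\Sigma\log H,\nu)\,d\mu,
\]
and the first piece cancels the $-\tfrac2H k(\nabla^\Sigma P,\nu)$ term already present; this integration by parts must be performed before the next splitting for the cancellation to take place. Writing $\tfrac{1}{2(n-1)}H^2=\tfrac{1}{2(n-1)}(H^2-P^2)+\tfrac{1}{2(n-1)}P^2$ and collecting all $P^2$ contributions into $\tfrac{n}{2(n-1)}P^2$, I arrive at an identity expressing $\tfrac{1}{2(n-1)}\int_\Sigma(H^2-P^2)\,d\mu$ in terms of $\lambda$, $\mathrm{Sc}^M-\mathrm{Sc}^\Sigma$, $|\nabla^\Sigma\log H|^2$, $|\mathring{B}|^2$, $P^2$ and the surviving $k$-terms.

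Now I would invoke the dominant energy condition via $-\tfrac12\mathrm{Sc}^M=-\mu+\tfrac12(\tr k)^2-\tfrac12|k|^2\le-|J|+\tfrac12(\tr k)^2-\tfrac12|k|^2$, which turns the previous identity into
\[
\tfrac{1}{2(n-1)}\int_\Sigma (H^2-P^2)\,d\mu \le \int_\Sigma f-\lambda+\tfrac12\mathrm{Sc}^\Sigma+g\,d\mu,
\]
where $f$ is precisely the quantity in the statement --- one checks that the coefficients of $(\tr k)^2$, $|k|^2$, $|J|$, $P^2$, $|\mathring{B}|^2$ and $\tfrac{P}{H}(\nabla_\nu\tr k-\nabla_\nu k(\nu,\nu))$ all match --- and $g:=-(\tfrac{P}{H})^2|k|^2-|\nabla^\Sigma\log H|^2+\tfrac{2P}{H}k(\nabla^\Sigma\log H,\nu)$. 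Since $|k(\nabla^\Sigma\log H,\nu)|\le|k|\,|\nabla^\Sigma\log H|$, Young's inequality gives $g\le0$ pointwise, so $g$ may be dropped. Finally, substituting the hypotheses on $\lambda$: the first gives $\int_\Sigma f-\lambda+\tfrac12\mathrm{Sc}^\Sigma\,d\mu\le(\tfrac12-\tfrac{n-3}{2(n-2)})\int_\Sigma\mathrm{Sc}^\Sigma\,d\mu=\tfrac{1}{2(n-2)}\int_\Sigma\mathrm{Sc}^\Sigma\,d\mu$, hence $\tfrac{1}{2(n-1)}\int_\Sigma(H^2-P^2)\,d\mu\le\tfrac{1}{2(n-2)}\int_\Sigma\mathrm{Sc}^\Sigma\,d\mu$, i.e.\ $\mathcal{E}_{n,1}(\Sigma)\ge0$; the second gives $\int_\Sigma f-\lambda+\tfrac12\mathrm{Sc}^\Sigma\,d\mu\le\tfrac{n-1}{2}(\tfrac{\omega_{n-1}}{|\Sigma|})^{2/(n-1)}|\Sigma|$, hence $\tfrac{1}{2(n-1)}\int_\Sigma(H^2-P^2)\,d\mu\le\tfrac{n-1}{2}(\tfrac{\omega_{n-1}}{|\Sigma|})^{2/(n-1)}|\Sigma|$, i.e.\ $\mathcal{E}_{n,2}(\Sigma)\ge0$. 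A strict inequality in either hypothesis propagates unchanged and yields the corresponding strict positivity.

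The only genuinely analytic step is the pointwise bound $g\le0$, which is the Cauchy--Schwarz/Young argument already used in the proof of Theorem~\ref{positivity0}; everything else is bookkeeping. The place to be careful is the combination of the $H^2$-coefficients after the Gauss substitution --- the cancellation of the Willmore cubic against the Gauss term into $\tfrac{1}{2(n-1)}H^2$ --- together with the exact identification of the surviving $k$-algebra with the definition of $f$, where a stray sign or a missing factor of $n-1$ would break the argument.
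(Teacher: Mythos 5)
Your proposal is correct and follows exactly the route the paper intends: the paper's own proof of this theorem is a one-line remark that it is a ``direct combination'' of Theorems \ref{rigidityhighdim} and \ref{positivity0}, and your computation is precisely that combination carried out in full (the $H^2$-coefficients $-\tfrac{n-3}{2(n-1)}+\tfrac{n-2}{2(n-1)}=\tfrac{1}{2(n-1)}$, the $P^2$-bookkeeping $\tfrac{1}{2(n-1)}+\tfrac12=\tfrac{n}{2(n-1)}$, the cancellation after integrating the divergence term by parts, and the Young-inequality bound $g\le 0$ all check out, and both $\lambda$-hypotheses translate into the correct inequalities for $\mathcal{E}_{n,1}$ and $\mathcal{E}_{n,2}$).
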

\begin{proof}
    The proof is a direct combination of Theorems \ref{rigidityhighdim} and \ref{positivity0}.
\end{proof}
Note that when $n=3$ or $k=0$, the result reduces to Theorem \ref{rigidityhighdim} or Theorem \ref{positivity0} respectively.

In summary, we have seen that on critical surfaces and under the dominant energy condition,  the Hawking energy is nonnegative. In fact,  if it vanishes on such a surface, the enclosed region must be flat—directly tying the energy measure to spacetime curvature and confirming its ability to distinguish flat from curved geometries.

The extension of these properties to the general case, where the second fundamental form $k$ is nonzero, represents a major advancement. Unlike the time-symmetric case, the general case encompasses dynamical effects, making these results more broadly applicable to realistic astrophysical scenarios, such as binary mergers or gravitational wave emissions. The inclusion of dynamical contributions further enhances the Hawking energy's relevance in describing localized gravitational phenomena.

Despite these promising results, certain technical conditions imposed throughout this analysis may not be optimal or physically motivated. In particular, conditions such as
$$\int_\Sigma f_\beta -\lambda \, d\mu \leq 0$$
were introduced primarily to facilitate mathematical treatment, but it remains unclear whether they represent the most physically natural constraints for quasi-local energy formulations. A refined condition could potentially lead to stronger rigidity results.

Lastly, an important aspect to consider is that the definition of Hawking surfaces is inherently gauge-dependent. Since these surfaces are defined within a given spacelike hypersurface, any attempt to define them without a hypersurface would require selecting a preferred spacelike normal direction for variation, introducing an additional ambiguity in their construction. This gauge dependence could impact their role in general quasi-local energy formulations.

Overall, these results mark significant progress in establishing the Hawking energy as a viable quasi-local energy measure. However, further refinements in its formulation and conditions are necessary. Nevertheless, Hawking surfaces currently provide a promising framework for evaluating the Hawking energy and could prove highly valuable in numerical simulations, particularly in evolution problems that are studied on a given spacelike initial data set.

\vspace{1.5 cm}

 \paragraph*{\emph{Acknowledgements.}} The Author would like to thank Jan Metzger and Marc Mars for the helpful comments about this work. This research received partial support from the DFG as part of the SPP 2026 Geometry at infinity, project ME 3816/3-1.



\bibliographystyle{amsplain}
\bibliography{Lit_new}
\end{document}